\documentclass[11pt]{amsart}

\usepackage[T1]{fontenc}
\usepackage{microtype}      % subtle typographic improvements
\usepackage{placeins}
\usepackage{amsmath,amssymb,amsfonts,amsthm}
\allowdisplaybreaks

\usepackage{mathtools}
\usepackage{mathrsfs}
\usepackage{bm}

\numberwithin{equation}{section}

\usepackage[margin=1in]{geometry}

\usepackage{graphicx}
\usepackage{enumitem}

\usepackage[numbers,sort&compress]{natbib}

\usepackage[pdftex,hyperfootnotes=false]{hyperref}
\hypersetup{
	colorlinks=true,     % black text, no colored links
	pdfstartview=FitH,
	bookmarksopen=true,
	bookmarksnumbered=true
}

\theoremstyle{plain}
\newtheorem{theorem}{Theorem}[section]
\newtheorem*{theorem*}{Theorem}
\newtheorem{maintheorem}{Theorem}

\newtheorem{maincorollary}[maintheorem]{Corollary}

\newtheorem{proposition}[theorem]{Proposition}
\newtheorem{lemma}[theorem]{Lemma}
\newtheorem{corollary}[theorem]{Corollary}

\theoremstyle{definition}
\newtheorem{definition}[theorem]{Definition}
\newtheorem{assumption}[theorem]{Assumption}

\theoremstyle{remark}
\newtheorem{remark}[theorem]{Remark}

\newcommand{\CC}{\mathbb{C}}

\newcommand{\N}{\mathbb{N}}

\newcommand{\univ}{\mathrm{univ}}

\DeclareMathOperator{\diag}{diag}

\DeclareMathOperator{\Span}{Span}

\title[Local Rank-One Logarithmic Instability]
{Local Rank-One Logarithmic Instability for the Mixed Hessian of the Dispersionless Toda $\tau$-Function}

\author{Oleg Alekseev}
\address{Department of Mathematics, HSE University (National Research University Higher School of Economics), Moscow, Russia}

\address{Chebyshev Laboratory, Department of Mathematics and Computer Science, Saint Petersburg State University, Saint Petersburg, Russia}

\date{}

\begin{document}

\begin{abstract}
We study a weighted renormalization of the mixed Hessian of the dispersionless Toda $\tau$-function associated with polynomial conformal maps. The starting point is an explicit logarithmic-kernel representation, which yields a decomposition of the Hessian into symmetry blocks and reduces the spectral analysis to the inverse-map generating function $U(x;\zeta)$ and the geometry of its dominant singularities.
Near a transversal subcritical approach to a simple analytic critical point, we identify a rank-one logarithmic spectral instability:
in each renormalized symmetry block, exactly one variational eigenvalue diverges logarithmically, whereas the remaining variational eigenvalues stay bounded.
The proof isolates the analytic mechanism behind this transition in the emergence of a dominant $s$-orbit of simple square-root branch points of the Taylor branch of the inverse map. We then apply the same framework to reduced Laplacian-growth trajectories and show that the same spectral transition occurs there under the same local continuation hypotheses. If, in addition, the reduced map remains univalent at the spectral crossing, then this transition occurs before geometric breakdown.
The result is local and conditional: it identifies the mechanism of the first instability and formulates an abstract criterion for extensions beyond the polynomial class.

\end{abstract}

\maketitle

%=====================================================================
%==================

\section{Introduction}
\label{sec:intro}

%==================

Logarithmic kernels attached to a conformal map organize several different
structures at once: coefficient theory \cite{Duren1983}, the Grunsky
formalism \cite{Pommerenke1992,Schiffer1981}, potential theory and Green
functions \cite{Bell1992}, and, in the integrable framework, the generating
identities of the dispersionless \(2\)D Toda hierarchy
\cite{WiegmannZabrodin2000,MarshakovWiegmannZabrodin2002,Teo2003,Teo2009,TakhtajanTeo2006}.
For an exterior conformal map \(f\) with inverse map \(w=w(z)\) near
infinity, one naturally encounters holomorphic kernels of Grunsky--Faber type,
as well as the mixed kernel
\[
K(z,\bar z')
:=
\log\!\left(1-\frac{1}{w(z)\,\overline{w(z')}}\right).
\]
The holomorphic kernels control the purely holomorphic coefficient theory of
the map and the associated Grunsky operator, whereas the mixed kernel couples
one holomorphic and one anti-holomorphic variable. It is this mixed kernel
that is the subject of the present paper.

There are two complementary reasons for focusing on \(K(z,\bar z')\).  The
first comes from potential theory. In the exterior Dirichlet problem, the Green function is naturally expressed
in terms of logarithmic kernels of this type, while the mixed kernel reflects
the coupling between the holomorphic and antiholomorphic boundary components
\cite{Bell1992,Jeong2007}.  The second
comes from integrability.  In the dispersionless \(2\)D Toda hierarchy, the
quasiclassical \(\tau\)-function is the generating potential for the same
Dirichlet problem and for Laplacian-growth dynamics
\cite{MineevWeinstein2000,WiegmannZabrodin2000,MarshakovWiegmannZabrodin2002,BOOKtau,TakasakiTakebe1995,Teo2003,Teo2009}.
The dispersionless Hirota relations identify the mixed second derivatives of
\(\log\tau\) with the coefficients of the mixed kernel.  Thus the kernel is
the generating object for the
mixed sector of the hierarchy.

More concretely, if \(r\) is the conformal radius of the exterior conformal map at infinity, then the
expansion of the mixed kernel at infinity has the form
\begin{equation}\label{eq:intro-mixed-kernel}
	\log\!\left(1-\frac{1}{w(z)\,\overline{w(z')}}\right)
	=
	-\sum_{m,n\ge1} H_{mn}\,
	\frac{(z/r)^{-m}}{m}\frac{(\bar z'/r)^{-n}}{n}.
\end{equation}
The coefficients satisfy
\(H_{mn}=r^{-(m+n)}\,\partial_{t_m}\partial_{\bar t_n}\log\tau\), so the matrix
\(H=(H_{mn})_{m,n\ge1}\) is precisely the scale-free mixed
\(t_m\bar t_n\)-Hessian of the dispersionless Toda free energy.  From the
integrable point of view, these coefficients measure the response of the free
energy to coupled deformations of the holomorphic and anti-holomorphic
harmonic moments.  From the operator-theoretic point of view, \(H\) is the
natural mixed susceptibility matrix of the problem.

This interpretation explains why the spectrum of \(H\) is meaningful. The
individual entries \(H_{mn}\) describe how a single holomorphic mode \(t_m\)
couples to a single anti-holomorphic mode \(\bar t_n\). The spectrum,
however, detects collective mixed directions. A large eigenvalue corresponds
to a distinguished linear combination of moment deformations producing a large
response of the Toda free energy, while a diverging eigenvalue signals a
singular mixed fluctuation direction. The problem studied in this paper is to
determine how analytic degeneration of the inverse conformal map is reflected
in the spectrum of the mixed Hessian generated by the dispersionless Hirota
equations.

The operator considered here should be distinguished from the classical
Grunsky operator \cite{Schiffer1981,Shen2010,JohanssonViklund2023}. Both arise from logarithmic kernels attached to the
same conformal map, but they belong to different analytic frameworks.  The
Grunsky operator is holomorphic and belongs to the coefficient theory of
univalent maps.  The present paper instead studies the mixed kernel
\eqref{eq:intro-mixed-kernel} and the mixed \(t_m\bar t_n\)-Hessian it
generates.  The comparison with Grunsky theory is therefore geometric rather
than spectral: the object considered here is the mixed Hessian intrinsic to
the dispersionless Toda/Dirichlet formalism.

In this paper we work on a finite polynomial leaf of exterior conformal maps
\begin{equation}\label{eq:intro-map}
	f(w)=rw+\sum_{n=1}^{N} a_n w^{1-s_n},
	\qquad
	2\le s_1<\cdots<s_N,
\end{equation}
with reduced parameters \(\zeta_n:=a_n/r\).  Writing \(x=r/z\) and
introducing
\(U(x;\zeta):=\frac{z}{r}\frac{1}{w(z)}\), one finds that the Taylor branch of
the inverse map near infinity is determined by the algebraic equation
\begin{equation}\label{eq:intro-inv}
	U(x;\zeta)
	=
	1+\sum_{n=1}^{N}\zeta_n\,x^{s_n}U(x;\zeta)^{s_n},
\end{equation}
see Proposition~\ref{prop:inv-eq}. We denote by \(\rho_*(\zeta)\) the radius
of analyticity of this branch. The paper studies the spectral behavior of the
mixed Hessian as the inverse branch \(U(\cdot;\zeta)\) approaches a simple
dominant singularity on \(|x|=\rho_*(\zeta)\).

A basic feature of the polynomial class is symmetry. If
\(s:=\gcd(s_1,\dots,s_N)\), then the mixed Hessian splits into \(s\)
independent blocks according to residue classes modulo \(s\); see
Proposition~\ref{prop:block-decomposition-scale}. The spectral analysis is
therefore naturally blockwise. Within this framework, a simple dominant
square-root singularity of the inverse map gives rise to a distinguished
singular direction in each symmetry sector. The paper identifies the
local mechanism responsible for this phenomenon on polynomial leaves.

The explicit one-harmonic family \(f(w)=rw+aw^{1-s}\), analyzed in
\cite{Alekseev2025-1}, provides the solvable case behind the present work.
In that family the inverse map is globally explicit, the analytic threshold
\(\zeta_c\) can be located exactly, each symmetry block develops precisely one
logarithmically diverging eigenvalue there, and the same analysis shows that
spectral criticality may occur strictly before geometric loss of univalence.
The present paper identifies a local mechanism by which the spectral phenomenon observed in that explicit case extends to general polynomial leaves under an explicit continuation hypothesis.

The main result of the paper is a local universality theorem for the symmetry
blocks of the mixed Hessian in a fixed weighted realization. The theorem concerns subcritical paths approaching a simple point of
 \(\mathcal C\). At such a point, the first analytic obstruction occurs at
radius one, a unique dominant
$s$-orbit has been fixed, and the uniform
continuation hypotheses of Definition~\ref{def:uniform-delta} are satisfied. Within this framework, the
paper proves a rank-one logarithmic instability theorem for the renormalized
symmetry blocks, together with a Laplacian-growth corollary along trajectories
for which the same local assumptions hold. A compact-limit refinement
for the remainder and an abstract \(N=\infty\) criterion are considered
separately.

\subsection{Main results}
\label{subsec:main-results}

All spectral statements in this subsection are formulated for the fixed
weighted block realization introduced in
Section~\ref{subsec:weighted-space}. The theorem assumes a
simple critical point of the inverse-map equation, a unique dominant
\(s\)-orbit of simple square-root branch points along the chosen subcritical
path, and the uniform continuation assumptions of
Definition~\ref{def:uniform-delta}. Under these hypotheses, each symmetry
block exhibits the same logarithmic rank-one instability.

\begin{maintheorem}[Conditional rank-one logarithmic instability in a fixed weighted realization]
\label{thm:A}
Let \(\varepsilon \mapsto \zeta(\varepsilon)\) be a \(C^{1}\) path approaching a
simple critical point \(\zeta_{c}\in\mathcal C\) from the subcritical side, and
assume that the path is transversal in the sense of
Definition~\ref{def:transversal-path}. Suppose moreover that, for all
sufficiently small \(\varepsilon>0\), the Taylor branch
\(U(\cdot;\zeta(\varepsilon))\) has a unique dominant \(s\)-orbit of simple
square-root branch points and satisfies the uniform continuation assumptions of
Definition~\ref{def:uniform-delta}.

Fix an admissible weighted realization as in
Section~\ref{subsec:weighted-space}, that is, choose \(\beta>0\) together
with an admissible exponential weight \(\alpha\) for the subcritical block
realization. Then, in each symmetry block, the renormalized Gram block
\(\widetilde G^{(q)}(\varepsilon)\) acting on \(\ell^2(\mathbb N_0)\) develops
exactly one variational eigenvalue that diverges on the logarithmic scale,
while all higher variational eigenvalues remain bounded.

More precisely, if \(\mu_{k}^{(q)}(\varepsilon)\) denotes the \(k\)-th min--max
level of the \(q\)-th renormalized block, then
\[
\mu_{1}^{(q)}(\varepsilon)
=
\Gamma^{(q)}\,L(\varepsilon)+O(1),
\qquad
L(\varepsilon)
=
\log\frac{1}{\rho_{*}(\zeta(\varepsilon))-1}+O(1),
\]
where \(\Gamma^{(q)}>0\) depends on the fixed weighted realization, whereas
\(\mu_{k}^{(q)}(\varepsilon)\) remains bounded for every \(k\ge 2\). This is
proved in Theorem~\ref{thm:universality}.
\end{maintheorem}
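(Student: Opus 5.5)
The plan is to reduce the spectral statement to the singularity analysis of the inverse branch \(U(\cdot;\zeta)\) and then apply a rank-one perturbation argument blockwise. First, starting from the logarithmic-kernel representation \eqref{eq:intro-mixed-kernel} and the block decomposition of Proposition~\ref{prop:block-decomposition-scale}, I would write each symmetry block \(H^{(q)}\) as a Gram matrix. Expanding \(\log(1-1/(w\bar w'))=-\sum_{k\ge1}k^{-1}(w\bar w')^{-k}\) and using \(1/w=(r/z)^{-1}\cdot x\,U(x;\zeta)/r\cdot r\), i.e. \(1/w(z)=x\,U(x)\) up to scale, each entry \(H_{mn}\) becomes \(mn\sum_k k^{-1}[x^m](xU)^k\,\overline{[x^n](xU)^k}\). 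Thus \(H=A A^*\), where the columns of \(A\) are Taylor coefficient vectors of \((xU)^k/\sqrt{k}\). Restricting to the residue class \(q\) modulo \(s\) and conjugating by the fixed weight \((\beta,\alpha)\) of Section~\ref{subsec:weighted-space} yields \(\widetilde G^{(q)}(\varepsilon)=\widetilde A^{(q)}\widetilde A^{(q)*}\) on \(\ell^2(\mathbb N_0)\).

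Second, I would extract the singular part. Under the simple-critical-point hypothesis and the uniform \(\Delta\)-continuation of Definition~\ref{def:uniform-delta}, the dominant \(s\)-orbit of square-root branch points \(x_j=\rho_*\omega^j\), with \(\omega=e^{2\pi i/s}\), gives, uniformly in \(\varepsilon\),
\[
U(x;\zeta)=U_0(x;\zeta)-\sum_{j}c_j(\zeta)\sqrt{1-x/x_j}+R(x;\zeta),
\]
with \(c_j\to c\ne0\) by transversality (Definition~\ref{def:transversal-path}) and \(R\) of order \((1-x/x_j)^{3/2}\). Transfer theorems then give coefficient asymptotics \([x^n](xU)^k\sim C_k\,n^{-3/2}\rho_*^{-n}\), with the orbit sum selecting a single coherent phase in each residue class \(q\). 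In the weighted norm, the singular contribution to \(\widetilde A^{(q)}\) is therefore approximately a rank-one operator \(u^{(q)}\otimes v^{(q)}\), where \(\|u^{(q)}\|^2\) is a sum of the form \(\sum_n n^{-1}\rho_*^{-2n}\cdot(\text{weights})\). Since the admissible weight exactly compensates the radius-one growth, this sum behaves like \(\sum_n n^{-1}\rho_*^{-2n}\sim\log\frac{1}{\rho_*-1}\). This produces \(\Gamma^{(q)}L(\varepsilon)\), with \(\Gamma^{(q)}=|c|^2\|v^{(q)}\|^2\) times a weight constant, which is positive.

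Third, I would write \(\widetilde G^{(q)}=\Gamma^{(q)}L(\varepsilon)\,P_\varepsilon+B_\varepsilon\), where \(P_\varepsilon\) is a rank-one projection converging to a limiting projection and \(\|B_\varepsilon\|=O(1)\) uniformly. The bound on \(B_\varepsilon\) comes from the regular parts \(U_0\) and \(R\), whose coefficients are \(O(n^{-5/2}\rho_*^{-n})\) and hence square-summable against the weight uniformly in \(\varepsilon\), together with cross terms controlled by Cauchy--Schwarz, at cost \(O(\sqrt{L})\). These cross terms must be absorbed. I would handle this by computing \(\langle u,\widetilde G u\rangle\) directly for the lower bound, and by using the min--max comparison \(\mu_1\le\Gamma L+\|B\|+O(\sqrt L)\) refined by the exact asymptotics of the \(2\times2\) compression onto \(\Span\{u,\text{range}\}\), so that the \(O(\sqrt L)\) cross term cancels to \(O(1)\). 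Weyl's inequality for the rank-one perturbation then gives \(\mu_k^{(q)}\le\|B_\varepsilon\|=O(1)\) for \(k\ge2\). Finally, \(L(\varepsilon)\) is related to \(\log\frac{1}{\rho_*-1}\) up to \(O(1)\) by the definition of the weighted sum.

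The main obstacle is the uniformity of the coefficient asymptotics jointly in \(k\), \(n\) and \(\varepsilon\). The powers \((xU)^k\) must be controlled for all \(k\) simultaneously, and the square-root coefficient \(c_j(\zeta)\) must stay nondegenerate along the path. I would attack this by applying the uniform \(\Delta\)-analyticity to the full bivariate generating function \(\log(1-t\,xU(x))\) rather than to each power separately. This lets a single Cauchy-integral estimate on a Hankel contour bound the remainder uniformly.
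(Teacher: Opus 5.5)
Your architecture is the paper's: Gram factorization, transfer asymptotics for the powers of $U$, the Hessian factor $(p+sm)^2$ turning $m^{-3}$ into the borderline sum $\sum_m\eta^m/(m+1)$, rank-one extraction, and min--max. However, three steps fail as written.

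\emph{Placement of the weight.} In Section~\ref{subsec:weighted-space} the weight $w_j=p_j^{3/2+\beta}\alpha^{p_j}$ acts on the \emph{mode} index. Thus $\widetilde V^{(q)}=V^{(q)}\mathcal W^{-1}$ and $\widetilde G^{(q)}=\mathcal W^{-1}V^{(q)*}V^{(q)}\mathcal W^{-1}$, while the Taylor-index sum that produces $L(\varepsilon)$ carries no weight at all. Nothing ``compensates the radius-one growth''. If you literally conjugate $H^{(q)}$ by $\mathcal W$ on the Taylor index, then $|R_p(m;\varepsilon)|\le M_0^p$ and $\alpha>M_0\ge1$ make the family uniformly trace class, and nothing diverges. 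The actual job of $\alpha^{p}$, with $\alpha>\max\{M_0,L_0\}$, is to absorb growth in the mode index, which leads to your second gap.

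\emph{Uniformity in the mode index.} Your remainder bound $O(n^{-5/2}\rho_*^{-n})$ holds for $U$, not uniformly for its powers. The $\delta_z^{3/2}$-coefficient of $\widehat U^{p}$ has size $p^3|\lambda|^{p-3}$, so transfer is uniform only for $m\ge M_{\mathrm{tr}}(1+p^2)$, with relative error $O((1+p^2)/m)$. This is Lemma~\ref{lem:uniform-transfer}, and it uses the regular-part clause of \textup{(U2)}, not mere $\Delta$-analyticity. The paper therefore splits each Gram entry at $M=\lceil M_{\mathrm{tr}}(1+P^2)\rceil$, with $P=\max\{p_{j_1},p_{j_2}\}$. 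It controls the pre-asymptotic head $m<M$ by the midpoint Cauchy bound $|R_p(m)|\le M_0^p\rho(\varepsilon)^{-ms}$ of Lemma~\ref{lem:cauchy-Rp}. That head has $O(P^2)$ terms with $O(P^4)$ prefactors, absorbed by $(M_0/\alpha)^{p_{j_1}+p_{j_2}}$.

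A single Hankel estimate for $\log(1-t\,xU)$ does not replace this split. Recovering the $k$-th power by Cauchy's formula in $t$ costs $r^{-k}$ with $r^{-1}>\sup|xU|\ge|\lambda|$ on the contour. The resulting relative errors are roughly $(r^{-1}/|\lambda|)^k/n$, giving a threshold exponential in $k$, which the fixed admissible $\alpha$ need not absorb.

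\emph{Cross terms and the choice of side.} With the weight placed correctly, your $\widetilde A\widetilde A^{*}$ is the paper's $\widetilde H^{(q)}$. Write $\widetilde A=S+E$ with $S=u\otimes v$. The cross term $SE^*=u\otimes Ev$ has norm $\|u\|\,\|Ev\|$, generically of order $\sqrt{L}$, so it is not an artifact of Cauchy--Schwarz.

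Your $2\times2$ compression does reduce the off-diagonal part to an $O(1)$ shift, since the correction is second order, $(\sqrt{L})^2/L$. But the diagonal entry is $\langle\hat u,\widetilde A\widetilde A^*\hat u\rangle=\|u\|^2\|v\|^2+2\,\mathrm{Re}\langle v,E^*u\rangle+O(1)$, and this needs $\langle v,E^*u\rangle=O(1)$. That holds only because the transfer remainder gains a factor $m^{-1}$, making $(p+sm)^2m^{-3/2}m^{-5/2}$ summable; this is exactly the paper's estimate \eqref{eq:tail-remainder-envelope-r1}. The paper sidesteps the issue by working on $\widetilde V^*\widetilde V$. There the cross term is $v\otimes E^*u$ with $\|E^*u\|=O(1)$, so $\widetilde C^{(q)}$ is bounded outright.

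For the same reason, ``Weyl gives $\mu_k\le\|B_\varepsilon\|=O(1)$'' is inconsistent, because your $B_\varepsilon$ contains the $O(\sqrt{L})$ cross terms. Use min--max on $u^{\perp}$ instead: there $\widetilde A^*$ reduces to $E^*$, so $\mu_k\le\|E\|^2$ for $k\ge2$.

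Finally, $P_\varepsilon$ has no norm limit, since the normalized $u$ tends weakly to $0$. What converges is the mode-side profile $\widetilde d^{(q)}(\varepsilon)\to\widetilde d^{(q)}_c$. Its nonvanishing comes from simplicity and $|\kappa(\varepsilon)|\ge c_0$ in \textup{(U2)}, not from transversality, which only supplies the parametrization $\rho_*-1=\varepsilon$.
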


At the operator level, the proof proceeds by extracting a singular rank-one
term from the block Gram representation of the Hessian. More precisely,
Proposition~\ref{prop:rank-one-extraction} shows that each weighted block
admits a decomposition into a logarithmically divergent rank-one part and a
remainder which stays uniformly bounded. Lemma~\ref{lem:compact-remainder}
strengthens this by proving that, after subtraction of the singular rank-one
term, the remainder converges in operator norm to a compact limit. Thus the
instability is genuinely one-dimensional at leading order: one direction
diverges logarithmically, whereas the rest of the spectrum remains controlled.

\begin{maincorollary}[Spectral instability along Laplacian growth]
\label{cor:B}
Let \(\zeta(T)\) be a reduced Laplacian-growth trajectory on a fixed
polynomial leaf, and suppose that it meets the restricted analytic critical
locus \(\mathcal C\) transversally at time \(T_c\) while the hypotheses of
Theorem~\ref{thm:A} hold along the left-hand approach \(T\uparrow T_c\). Then
each symmetry block of the renormalized mixed Hessian develops exactly one
logarithmically diverging variational eigenvalue as \(T\uparrow T_c\), while
all higher min--max levels remain bounded. This is the content of
Theorem~\ref{thm:lg-spectral}, viewed as the Laplacian-growth realization of
the local mechanism described in Theorem~\ref{thm:A}.
\end{maincorollary}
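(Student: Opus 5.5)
The corollary reduces to Theorem~\ref{thm:A}; the work is to verify that a reduced Laplacian-growth trajectory near \(T_c\) is an admissible path in the sense of that theorem. I would reparametrize by \(\varepsilon:=T_c-T\) and set \(\zeta(\varepsilon):=\zeta(T_c-\varepsilon)\).

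\textbf{Step 1: the path is \(C^1\) and ends at \(\zeta_c\).} The reduced Laplacian-growth flow on a fixed polynomial leaf is a finite-dimensional ODE for \((r,\zeta)\), coming from the conservation of the harmonic moments \(t_k\) and the linear growth of the area moment \(t_0=T\). Its right-hand side is smooth wherever the reduced Jacobian is nondegenerate. I would check that this Jacobian stays nondegenerate up to and including \(T_c\), for instance using the assumption that the reduced map is still univalent, or at least locally invertible, at \(T_c\). Then \(T\mapsto\zeta(T)\) is \(C^1\) on a left neighbourhood of \(T_c\), and \(\zeta(T)\to\zeta_c\).

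\textbf{Step 2: transversality and the subcritical side.} Write \(\mathcal C\) locally as the level set \(\{\Phi=0\}\), where \(\Phi(\zeta)\) is, say, \(\rho_*(\zeta)-1\) or the discriminant of the system \(\{U=1+\sum\zeta_n x^{s_n}U^{s_n},\ \partial_U(\cdot)=0\}\) evaluated at \(|x|=1\). The hypothesis that the trajectory meets \(\mathcal C\) transversally at \(T_c\) means \(\frac{d}{dT}\Phi(\zeta(T))|_{T_c}\neq0\). This is exactly the condition in Definition~\ref{def:transversal-path}. Because \(\Phi\) changes sign at \(T_c\), the left-hand approach lies on one side of \(\mathcal C\). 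That side is the subcritical one (\(\rho_*>1\)), because for \(T<T_c\) the Taylor branch is analytic on the closed unit disc by the definition of \(T_c\) as the first hitting time.

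\textbf{Step 3: the remaining hypotheses and the conclusion.} The hypotheses of a unique dominant \(s\)-orbit and uniform continuation (Definition~\ref{def:uniform-delta}) are assumed along \(T\uparrow T_c\), so they carry over verbatim. Theorem~\ref{thm:universality} then gives, in each block \(q\), that \(\mu_1^{(q)}(T)=\Gamma^{(q)}L+O(1)\), with all \(\mu_k^{(q)}\), \(k\ge2\), bounded. Transversality also gives \(\rho_*(\zeta(T))-1\asymp T_c-T\). Hence \(L=\log\frac{1}{T_c-T}+O(1)\), so the divergence may equally be stated in terms of time.

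\textbf{Main obstacle.} The delicate point is Step 1, specifically regularity of the reduced flow up to \(T_c\). A cusp or loss of local invertibility at \(T_c\) would break the \(C^1\) property. I would first try to show that the Jacobian of the moment map \(\zeta\mapsto(t_1,\dots)\) is a polynomial in \(\zeta\) that is nonvanishing wherever \(f\) is locally univalent on the closed exterior disc. Analytic criticality (\(\rho_*=1\)) concerns the inverse Taylor branch, not the derivative of \(f\) on \(|w|=1\), so it should not, by itself, force this Jacobian to degenerate. If that argument fails, I would instead impose the condition as part of the hypothesis of Theorem~\ref{thm:lg-spectral}.
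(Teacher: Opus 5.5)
Your Step~3 is the paper's proof of Theorem~\ref{thm:lg-spectral}: reparametrize by $\varepsilon(T)=\rho_*(\zeta(T))-1$, and use Lemma~\ref{lem:lg-transversal} to get $\varepsilon(T)=\sigma(T_c-T)+O((T_c-T)^2)$ with $\sigma>0$, so that $\varepsilon(T)$ is strictly monotone. Then apply Theorem~\ref{thm:universality} and convert $L(\varepsilon(T))$ into $\log\frac{1}{T_c-T}+O(1)$ via Lemma~\ref{lem:Lambda-asymp}.

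Steps~1--2 try to derive facts that the paper simply assumes, namely real-analyticity of $\zeta(T)$ on $(T_c-\delta_0,T_c]$ and a subcritical left-hand approach. Both are part of ``the hypotheses of Theorem~\ref{thm:A} hold''. Univalence at $T_c$, your proposed tool for Step~1, is not a hypothesis of the corollary. Your first-hitting-time argument for $\rho_*>1$ is also invalid: $\mathcal C$ is the restricted locus of Definition~\ref{def:alg-critical-locus}, so $T<T_c$ does not rule out earlier contact with $\{\rho_*=1\}\setminus\mathcal C$ or a supercritical stretch before $T_c$.
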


A further consequence concerns the possible separation between spectral
criticality and geometric breakdown. Proposition~\ref{prop:zc-before-zuniv}
shows that if the reduced map is still univalent at the spectral critical time
\(T_c\), then the spectral transition occurs strictly before the geometric
threshold at which univalence is lost:
\[
T_c<T_{\mathrm{univ}}.
\]
This is a genuine assumption: for the one-harmonic family studied in
\cite{Alekseev2025-1}, it can be verified on an explicit parameter range. Whenever such a
separation occurs on a finite polynomial leaf, the same local rank-one
mechanism governs the spectral transition.

The proof of Theorem~\ref{thm:A} has three steps. First, one analyzes the
characteristic system associated with \eqref{eq:intro-inv}. At a simple
critical point this yields a dominant square-root singularity of the inverse
map. Second, one transfers this local branching structure to uniform
asymptotics for the coefficients of the powers \(U(x;\zeta)^p\). Third, one
inserts these asymptotics into the blockwise Gram representation of the mixed
Hessian and extracts the singular contribution at the operator level.
Corollary~\ref{cor:B} and Proposition~\ref{prop:zc-before-zuniv} reinterpret the same local instability along a
distinguished dynamical trajectory and compare it with geometric loss of
univalence.

Two points about the scope of the theorem are worth emphasizing. First, the
analysis is local: the paper does not provide a general criterion ensuring the
dominance and continuation hypotheses on an arbitrary polynomial leaf.
Second, the operator statement is formulated after a fixed weighted
renormalization. The qualitative rank-one conclusion is therefore intrinsic at
the level of the extracted singular mechanism, whereas the concrete block
realization and the coefficient \(\Gamma^{(q)}\) depend on the chosen
admissible weights.

A secondary theme of the paper is the possible extension of this mechanism
beyond finite polynomial leaves. Section~\ref{subsec:infty-extension}
formulates an abstract \(N=\infty\) criterion on the Gram side, isolating the
operator-theoretic mechanism needed to extend the rank-one extraction to
infinite-dimensional leaves. The explicit pole and logarithmic families
discussed in Section~\ref{sec:infty-explicit-leaves} should be read in this
spirit.

\subsection{Organization of the paper}

\label{subsec:organization}

Section~\ref{sec:framework} introduces the polynomial conformal-map class,
derives the inverse-map equation, identifies the mixed kernel with the mixed
Toda Hessian, and proves the symmetry-block and Gram decompositions.
Section~\ref{sec:alg-criticality} analyzes the characteristic system of
\eqref{eq:intro-inv}, identifies simple square-root branching, and derives the
coefficient asymptotics associated with a dominant orbit.
Section~\ref{sec:universal-instability} contains the operator-theoretic core
of the paper: the weighted block realization, the rank-one extraction, the
proof of the universality theorem, and the abstract \(N=\infty\) criterion.
Section~\ref{sec:laplacian-growth} interprets the local theory along
Laplacian-growth trajectories and clarifies the separation between spectral
criticality and geometric loss of univalence.
Section~\ref{sec:infty-explicit-leaves} discusses explicit infinite-dimensional
pole and logarithmic leaves as illustrative characteristic cases.
Section~\ref{sec:conclusion} summarizes the picture and outlines directions
for further work.  The appendices collect the uniform transfer estimates and
operator bounds used in the proof of the main theorem.

% =======================

\section{Framework: Conformal Maps and the Hessian}
\label{sec:framework}

% =======================

\subsection{Polynomial conformal maps and reduced parameters}

We consider conformal maps from the exterior disk to the complement of a bounded simply connected
domain,
\begin{equation}\label{eq:framework-map}
	f(w)=r w+\sum_{n=1}^{N} a_n\,w^{1-s_n},
	\qquad |w|>1,
\end{equation}
where $r>0$ is the conformal radius at infinity, the translational degree of freedom has been
removed (no constant term), and
\[
	2\le s_1<\cdots<s_N.
\]
Throughout, the exponents $\{s_n\}_{n=1}^N$ are fixed (a polynomial leaf), while the coefficients
vary.

It is convenient to work with the scale--free parameters
\begin{equation}\label{eq:reduced-parameters}
	\zeta_n:=\frac{a_n}{r},\qquad n=1,\dots,N,
\end{equation}
and we regard
\[
	\zeta:=(\zeta_1,\dots,\zeta_N)\in\CC^N
\]
as the parameter vector. This removes the trivial scaling dependence carried by $r$. In particular,
all scale--invariant objects introduced below depend on $\zeta$ only.

We also set
\begin{equation}\label{eq:sdef}
	s:=\gcd(s_1,\dots,s_N).
\end{equation}
The integer $s$ is the symmetry index: it will govern an $s$--fold rotational invariance of the
inverse--map equation and hence the block decomposition of the mixed Hessian.

\begin{remark}[Infinite--dimensional families]\label{rem:infty-families}
	Although most of the paper works with $N<\infty$ (so that the inverse--map equation is algebraic),
	the universality mechanism uses only analytic continuation of the Taylor branch $U(\cdot;\zeta)$,
	the dominant square--root singularity scenario, and uniform control of the remainder in an analytic
	Gram representation. Subsection~\ref{subsec:infty-extension} presents the corresponding extension to
	$N=\infty$ under explicit analytic hypotheses.
\end{remark}

\subsection{Inverse map and normalized generating function}

Let $w(z)$ denote the inverse of $f$, defined for $|z|$ sufficiently large and normalized by
$w(z)\sim z/r$ as $z\to\infty$. Introduce the dimensionless coordinate
\[
	x:=\frac{r}{z},
\]
and define the normalized inverse--map generating function
\begin{equation}\label{eq:U-def}
	U(x;\zeta):=\frac{z}{r}\,\frac{1}{w(z)}.
\end{equation}
By construction, $U(x;\zeta)$ is analytic near $x=0$ and satisfies $U(0;\zeta)=1$.

Introduce the polynomial
\begin{equation}\label{eq:F-def}
	F(y,x;\zeta):=y-1-\sum_{n=1}^N \zeta_n\,x^{s_n}\,y^{s_n}.
\end{equation}
Algebraic criticality will be characterized in terms of the branch--point system for $F$
(Section~\ref{sec:alg-criticality}).

\begin{proposition}[Inverse map equation]\label{prop:inv-eq}
	The function $U(x;\zeta)$ is characterized as the unique germ analytic at $x=0$ with $U(0;\zeta)=1$
	solving
	\begin{equation}\label{eq:inv-eq}
		U(x;\zeta)=1+\sum_{n=1}^N \zeta_n\,x^{s_n}\,U(x;\zeta)^{s_n}.
	\end{equation}
	Equivalently, $F(U(x;\zeta),x;\zeta)=0$ with $F$ given by \eqref{eq:F-def}.
\end{proposition}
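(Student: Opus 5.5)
The plan is to derive \eqref{eq:inv-eq} by substituting the inverse relation into $f$, and then to get uniqueness from the implicit function theorem applied to $F$ at $(y,x)=(1,0)$.

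\emph{Step 1: the inverse map exists and $U$ is analytic.} Since $f(w)=rw\bigl(1+O(w^{-2})\bigr)$ as $w\to\infty$ with $r>0$, the map $f$ is injective on some neighbourhood $|w|>R$. By the inverse function theorem at $\infty$ (in the coordinates $1/w$, $1/z$), it has an analytic inverse $w(z)$ for $|z|$ large, with $w(z)=z/r+O(1/z)$. Consequently $U=\frac{z}{r}\frac{1}{w(z)}=\frac{z}{r\,w(z)}$ is analytic in $x=r/z$ near $0$, and $U(0;\zeta)=1$.

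\emph{Step 2: the equation.} Substitute $w=w(z)$ into $z=f(w)$ and divide by $rw$. This gives
\[
\frac{z}{rw}=1+\sum_n \zeta_n w^{-s_n}.
\]
The left-hand side is $U$. For the right-hand side, note that $w^{-1}=\frac{r}{z}\cdot\frac{z}{rw}=xU$, so $w^{-s_n}=x^{s_n}U^{s_n}$. Substituting yields \eqref{eq:inv-eq}, equivalently $F(U(x;\zeta),x;\zeta)=0$.

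\emph{Step 3: uniqueness.} At $(y,x)=(1,0)$ we have
\[
\partial_yF=1-\sum_n\zeta_n s_n x^{s_n}y^{s_n-1}=1\neq0.
\]
The analytic implicit function theorem therefore gives a unique analytic solution germ $y(x)$ with $y(0)=1$, so any analytic germ with $U(0)=1$ solving the equation coincides with the one from Step 2. Alternatively, the equation determines the Taylor coefficients recursively, because the coefficient of $x^k$ on the right involves only lower-order coefficients; this is the fixed-point/Lagrange structure.

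\emph{Step 4: the converse.} Conversely, if a germ $U$ solves the equation, setting $w:=1/(xU)=z/(rU)$ and reversing the computation shows $f(w)=z$. Hence this germ recovers the inverse map.

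No step is a real obstacle. The only point needing care is justifying that $w(z)$ is the inverse branch at infinity, which is the local inverse function theorem at $\infty$.
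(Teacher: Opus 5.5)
Your proposal is correct and follows essentially the same route as the paper: substitute $w=z/(rU)$ into $z=f(w)$ and rearrange (you divide by $rw$ rather than by $z$, a trivial difference), then apply the implicit function theorem at $(x,U)=(0,1)$, where $\partial_yF=1$, to get uniqueness. Your Step 1, which justifies the inverse branch at infinity and the analyticity of $U$ in $x$, and your converse in Step 4 are harmless additions that the paper simply takes for granted from its setup.
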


\begin{proof}
	Substituting $w(z)=\frac{z}{rU(x;\zeta)}$ into $z=f(w)$ gives
	\[
		z=\frac{z}{U}+\sum_{n=1}^{N}a_n\left(\frac{z}{rU}\right)^{1-s_n}.
	\]
	Dividing by $z$ and using $x=r/z$ and $\zeta_n=a_n/r$ yields
	\[
		1=\frac{1}{U}+\sum_{n=1}^{N}\zeta_n\,x^{s_n} U^{s_n-1}.
	\]
	Multiplying by $U$ yields \eqref{eq:inv-eq}. At $(x,U)=(0,1)$ the derivative of the left-hand side
	of \eqref{eq:inv-eq} with respect to $U$ equals $1$, hence the implicit function theorem yields a
	unique analytic solution $U(\cdot;\zeta)$ near $x=0$ for each fixed $\zeta$.
\end{proof}

\begin{lemma}[$s$--symmetry]\label{lem:s-symmetry}
	Let $s=\gcd(s_1,\dots,s_N)$. The inverse--map equation \eqref{eq:inv-eq} is invariant under
	$x\mapsto e^{2\pi i/s}x$. In particular, the Taylor branch satisfies
	\begin{equation}\label{eq:s-sym}
		U(e^{2\pi i/s}x;\zeta)=U(x;\zeta).
	\end{equation}
	Consequently, for each $p\ge1$ the expansion of $U(x;\zeta)^p$ involves only powers $x^{ms}$.
\end{lemma}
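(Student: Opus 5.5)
The plan is to prove the symmetry through the uniqueness statement in Proposition~\ref{prop:inv-eq}, applied to a rotated function. Set $\omega:=e^{2\pi i/s}$. Since $s$ divides every $s_n$, we have $\omega^{s_n}=1$ for each $n$. Hence for any $y$,
\[
F(y,\omega x;\zeta)=y-1-\sum_{n=1}^N\zeta_n\,\omega^{s_n}x^{s_n}y^{s_n}=F(y,x;\zeta).
\]
This is exactly the invariance of \eqref{eq:inv-eq} under $x\mapsto\omega x$.

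Next consider the germ $V(x):=U(\omega x;\zeta)$. It is analytic at $x=0$, because the rotation preserves a neighborhood of $0$, and it satisfies $V(0)=U(0;\zeta)=1$. Using the invariance just shown,
\[
F(V(x),x;\zeta)=F\bigl(U(\omega x;\zeta),\omega x;\zeta\bigr)=0.
\]
By the uniqueness part of Proposition~\ref{prop:inv-eq}, which comes from the implicit function theorem at $(x,U)=(0,1)$, we get $V=U(\cdot;\zeta)$ near $0$. This is \eqref{eq:s-sym}. If one wants the identity on the whole disc of analyticity of the Taylor branch, it follows from the identity theorem, since that disc is itself rotation invariant.

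For the coefficient statement, write $U(x;\zeta)^p=\sum_{k\ge0}c_k^{(p)}x^k$ near $0$. Raising \eqref{eq:s-sym} to the $p$-th power gives $U(\omega x;\zeta)^p=U(x;\zeta)^p$. Comparing Taylor coefficients yields $c_k^{(p)}\omega^k=c_k^{(p)}$. Since $\omega$ is a primitive $s$-th root of unity, $c_k^{(p)}=0$ unless $s\mid k$. So only the powers $x^{ms}$ appear.

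As an independent check, one can avoid uniqueness altogether. The fixed-point iteration $U_0=1$, $U_{j+1}=1+\sum_n\zeta_n x^{s_n}U_j^{s_n}$ converges formally to $U$. If $U_j$ is a power series in $x^s$, then so is $U_{j+1}$, because each $x^{s_n}$ is a power of $x^s$. Hence $U$, and therefore $U^p$, are power series in $x^s$.

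There is no real obstacle here. The only point that needs care is invoking the uniqueness of the analytic germ with $U(0)=1$ rather than merely its existence, and this is supplied by Proposition~\ref{prop:inv-eq}.
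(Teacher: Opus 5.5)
Your proof is correct and takes the same route as the paper. You use that $s\mid s_n$ makes the equation invariant under $x\mapsto e^{2\pi i/s}x$, apply the uniqueness of the normalized analytic germ from Proposition~\ref{prop:inv-eq} to identify $U(e^{2\pi i/s}x;\zeta)$ with $U(x;\zeta)$, and compare Taylor coefficients of $U^p$. You spell out what the paper leaves implicit (the rotated germ $V$ and the root-of-unity argument), and your fixed-point-iteration check is valid but not needed.
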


\begin{proof}
	Since each exponent $s_n$ is divisible by $s$, one has $(e^{2\pi i/s}x)^{s_n}=x^{s_n}$, hence the
	right-hand side of \eqref{eq:inv-eq} is unchanged under $x\mapsto e^{2\pi i/s}x$.
	The Taylor branch is uniquely determined by analyticity at $x=0$ and the normalization $U(0;\zeta)=1$,
	so \eqref{eq:s-sym} follows. The final claim is an immediate consequence of \eqref{eq:s-sym}.
\end{proof}

\begin{definition}[Radius of analyticity]\label{def:rho-star}
	Let $\rho_*(\zeta)$ denote the radius of the largest disk centered at $x=0$ on which the Taylor
	branch of \eqref{eq:inv-eq} is analytic:
	\begin{equation}\label{eq:rho-star}
		\rho_*(\zeta):=\sup\{\rho>0:\ U(\cdot;\zeta)\ \text{analytic on }|x|<\rho\}.
	\end{equation}
\end{definition}

For each fixed parameter value \(\zeta\), the phrase \emph{Taylor branch}
means the unique germ at \(x=0\) selected by \(U(0;\zeta)=1\). Whenever this
germ is analytically continued along a path, the resulting continuation is
called the \emph{chosen Taylor sheet} along that path. Statements about
boundary singularities, dominant orbits, or characteristic values always refer
to this continuation of the distinguished germ.

\subsection{Kernel representation of the mixed Hessian}

In the standard dispersionless Toda normalization, $\mathcal F:=\log\tau$ denotes the free energy and
$(t_m)_{m\ge1}$, $(\bar t_n)_{n\ge1}$ are the exterior harmonic moments (with the convention of
\cite{WiegmannZabrodin2000,BOOKtau,MarshakovWiegmannZabrodin2002}). In the present paper we use
only the mixed second derivatives $\partial_{t_m}\partial_{\bar t_n}\mathcal F$ and the kernel identity below.
Equivalent forms, in particular, the generating-operator identity
\[
D(z)\,\bar D(\bar z')\mathcal F=-\log\!\left(1-\frac{1}{w(z)\overline{w(z')}}\right), \quad D(z):=\sum_{m\ge1}\frac{z^{-m}}{m}\,\frac{\partial}{\partial t_m},
\]
can be found, for example, in
\cite[Eq.~(42)]{BOOKtau} and \cite[\S4]{MarshakovWiegmannZabrodin2002}. We state the
coefficient form explicitly because it is the entry point for the scale-free reduction used
throughout the paper.

\begin{proposition}[Kernel identity for the mixed Toda Hessian]
	\label{prop:kernel-hessian}
	Let $f$ be an exterior conformal map and let $w(z)\sim z/r$ be its inverse
map near $\infty$. In the standard dispersionless Toda normalization of
\cite{WiegmannZabrodin2000,BOOKtau,MarshakovWiegmannZabrodin2002}, one then has, for
	$m,n\ge1$,
	\begin{equation}\label{eq:hessian-kernel}
		\frac{\partial^2\log\tau}{\partial t_m\,\partial\bar t_n}
		=
		-mn\,[z^{-m}][\bar z'^{-n}]\,
		\log\!\left(1-\frac{1}{w(z)\overline{w(z')}}\right),
	\end{equation}
	where $z$ and $z'$ are independent complex variables and $[z^{-m}]$ denotes coefficient extraction
	in the Laurent expansion at infinity.
\end{proposition}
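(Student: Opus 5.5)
The plan is to derive \eqref{eq:hessian-kernel} from the generating-operator identity
\[
D(z)\,\bar D(\bar z')\mathcal F=-\log\!\left(1-\frac{1}{w(z)\overline{w(z')}}\right),
\qquad D(z)=\sum_{m\ge1}\frac{z^{-m}}{m}\partial_{t_m},
\]
quoted above from \cite[Eq.~(42)]{BOOKtau} and \cite[\S4]{MarshakovWiegmannZabrodin2002}. The argument compares coefficients of the double Laurent series at $z=\infty$, $z'=\infty$ on both sides.

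\textbf{Left-hand side.} Expanding $D(z)$ and $\bar D(\bar z')=\sum_{n\ge1}\frac{\bar z'^{-n}}{n}\partial_{\bar t_n}$ gives
\[
D(z)\bar D(\bar z')\mathcal F=\sum_{m,n\ge1}\frac{z^{-m}\bar z'^{-n}}{mn}\,\partial_{t_m}\partial_{\bar t_n}\mathcal F.
\]
Hence $[z^{-m}][\bar z'^{-n}]$ of the left side equals $\frac{1}{mn}\partial_{t_m}\partial_{\bar t_n}\log\tau$. Multiplying by $-mn$ and using the identity gives \eqref{eq:hessian-kernel}.

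\textbf{Right-hand side.} To make the coefficient extraction legitimate, one checks that the right side is a convergent double series in $z^{-1}$ and $\bar z'^{-1}$ with no terms of nonnegative order in either variable.

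1. Since $w(z)=\frac{z}{r}\bigl(1+O(z^{-1})\bigr)$, the function $1/w(z)$ is holomorphic in $z^{-1}$ near $\infty$ and vanishes at $z=\infty$. The same holds for $1/\overline{w(z')}$ in $\bar z'^{-1}$.
2. For $|z|,|z'|$ large, $\bigl|1/(w(z)\overline{w(z')})\bigr|<1$. So the principal branch of $\log(1-u)$ applies, and
$-\log(1-u)=\sum_{k\ge1}u^k/k$ converges absolutely and uniformly there.
3. Each term $u^k$ is a product of a function of $z$ alone and a function of $\bar z'$ alone, each vanishing at infinity. So the double expansion has only strictly negative powers in both variables, and $[z^{-m}][\bar z'^{-n}]$ is well defined for $m,n\ge1$.

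\textbf{Main obstacle.} The real difficulty is not this algebra but fixing the normalization, since the identity is quoted from the literature. One must confirm that the convention for $t_m$, $\bar t_n$ (including factors such as $\pi$ and the sign of $\mathcal F$) matches the one in which the generating identity holds with exactly the factors $\frac{z^{-m}}{m}$ in $D(z)$.

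The way to verify this is to check the lowest coefficient independently:
\begin{itemize}
\item On the kernel side, $-[z^{-1}][\bar z'^{-1}]\log\bigl(1-1/(w\bar w')\bigr)=r^2$.
\item On the Toda side, this should match the known relation $\partial_{t_1}\partial_{\bar t_1}\log\tau=r^2$ (i.e. $|w_1|^{2}$ in the standard normalization).
\end{itemize}
Agreement here fixes the overall constant. The statement then follows for all $m,n$ by the coefficient comparison above.
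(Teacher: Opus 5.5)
Your proposal is correct and follows the same route as the paper. The paper also treats the generating-operator identity $D(z)\bar D(\bar z')\mathcal F=-\log\bigl(1-1/(w(z)\overline{w(z')})\bigr)$ as a result recalled from the literature, and obtains \eqref{eq:hessian-kernel} simply by extracting the coefficient of $z^{-m}\bar z'^{-n}$ at infinity. Your well-definedness remarks and the $(1,1)$ consistency check are correct: both sides give $r^2$, matching the dispersionless Toda relation $\partial_{t_1}\partial_{\bar t_1}\log\tau=r^2$. Note, however, that this check only pins down an overall constant and cannot detect mode-dependent normalizations of the $t_m$; this is why the paper defers all convention questions to the cited references.
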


We use \eqref{eq:hessian-kernel} only as a recalled identity; for derivations and convention choices see
\cite{WiegmannZabrodin2000,BOOKtau,MarshakovWiegmannZabrodin2002}. In particular,
\cite[Eq.~(42)]{BOOKtau} gives the generating-operator form, from which \eqref{eq:hessian-kernel}
follows by coefficient extraction at infinity. The right-hand side of \eqref{eq:hessian-kernel} depends on the conformal radius $r$ through the
normalization $w(z)\sim z/r$ at infinity. For the spectral questions studied here this dependence is
inessential: it is absorbed by the scale-free variable $x=r/z$ introduced above. Since
$w(z)^{-1}=x\,U(x;\zeta)$, we introduce the scale-free kernel
\begin{equation}\label{eq:kernel-scaled}
	K(x,\bar x';\zeta)
	:=
	\log\Bigl(1-x\bar x'\,U(x;\zeta)\,\overline{U(x';\zeta)}\Bigr).
\end{equation}
This expression depends only on $(x,\bar x',\zeta)$, with no residual $r$-dependence.

\begin{definition}[Scale--invariant mixed Hessian]\label{def:scale-inv-hessian}
	The \emph{scale--invariant mixed Hessian} is the infinite Hermitian matrix
	\begin{equation}\label{eq:H-rescaled}
		H_{mn}(\zeta):=-mn [x^m][\bar x'^n] K(x,\bar x';\zeta),
		\qquad m,n\ge1.
	\end{equation}
	Equivalently, by \eqref{eq:hessian-kernel} and the change of variables $x=r/z$, one has
	\begin{equation}\label{eq:H-vs-tau}
		H_{mn}
		=
		r^{-(m+n)}\,
		\frac{\partial^2\log\tau}{\partial t_m\,\partial\bar t_n}.
	\end{equation}
	Thus $H=(H_{mn})$ is the natural scale-invariant form of the mixed Toda Hessian.
\end{definition}

	The scale-invariant Hessian $H=(H_{mn})_{m,n\ge1}$ is naturally a positive semidefinite
	quadratic form on finitely supported sequences. Spectral statements require a compatible Hilbert
	realization, and the standard $\ell^2(\N)$ topology is not stable at analytic criticality. The
	appropriate framework is the weighted Gram realization introduced later in
	Definition~\ref{def:conjugated-gram}. All spectral statements in
	Sections~\ref{sec:universal-instability}--\ref{sec:laplacian-growth} refer to these renormalized operators.

\subsection{Symmetry and block decomposition}
The $s$--symmetry \eqref{eq:s-sym} implies that, for each $p\ge1$,
\begin{equation}\label{eq:Sp-expansion}
	U(x;\zeta)^p=\sum_{m\ge0} R_p(m;\zeta)\,x^{ms}.
\end{equation}
The coefficients $R_p(m;\zeta)$ are the basic ingredients in the Gram representations of the Hessian
blocks.

\begin{remark}[One--mode case and Raney numbers]\label{rem:raney-one-mode}
	In the highly symmetric one--mode situation $N=1$ with exponent $s_1=s$ and parameter $\zeta=\zeta_1$,
	the inverse--map equation \eqref{eq:inv-eq} reduces to $U=1+\zeta\,x^{s}U^{s}$. In this case the
	coefficients $R_p(m;\zeta)$ factorize as
	\begin{equation}\label{eq:Rp-raney}
		U(x;\zeta)^p=\sum_{m=0}^{\infty} R_{s,p}(m)\,\zeta^{m}\,x^{sm},
		\qquad\text{i.e.}\qquad
		R_p(m;\zeta)=R_{s,p}(m)\,\zeta^{m},
	\end{equation}
	where $R_{s,p}(m)$ are the Raney (Fuss--Catalan) numbers associated with the parameters $(s,p)$.
	We refer to \cite{Alekseev2025-1} for this specialization and its role in the one--parameter spectral
	analysis, and to the classical literature on Fuss--Catalan/Raney numbers for explicit closed forms and interpretation~\cite{Raney1960,GrahamKnuthPatashnik1994}.
\end{remark}

\begin{proposition}[Block decomposition]\label{prop:block-decomposition-scale}
	The scale--invariant Hessian satisfies
	\[
		H_{mn}=0
		\qquad\text{unless}\qquad
		m\equiv n\pmod{s}.
	\]
	Equivalently,
	\[
		H=\bigoplus_{q=1}^{s} H^{(q)},
	\]
	where $H^{(q)}$ acts on indices $m\equiv q\pmod{s}$.
\end{proposition}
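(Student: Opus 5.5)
The plan is to use the $s$-symmetry of Lemma~\ref{lem:s-symmetry} directly at the level of the kernel \eqref{eq:kernel-scaled}, and then read off the vanishing of the coefficients from \eqref{eq:H-rescaled}. Write $\omega:=e^{2\pi i/s}$. By \eqref{eq:s-sym}, $U(\omega x;\zeta)=U(x;\zeta)$, and since $\overline{U(\omega x';\zeta)}=\overline{U(x';\zeta)}$ as well, the product $x\bar x' U(x;\zeta)\overline{U(x';\zeta)}$ is multiplied by $\omega\bar\omega=1$ under the simultaneous substitution $x\mapsto\omega x$, $x'\mapsto\omega x'$. Hence
\[
K(\omega x,\overline{\omega x'};\zeta)=K(x,\bar x';\zeta)
\]
as germs near $(0,0)$. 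Here we use that the logarithm is the principal branch, defined by its convergent power series $\log(1-u)=-\sum_{k\ge1}u^k/k$ for small $u$, so no branch ambiguity arises.

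The second step is to expand in coefficients. Using \eqref{eq:Sp-expansion}, write
\[
K=-\sum_{k\ge1}\frac1k\,x^k\bar x'^k\,U(x)^k\,\overline{U(x')^k}
=-\sum_{k\ge1}\frac1k\sum_{a,b\ge0}R_k(a;\zeta)\overline{R_k(b;\zeta)}\,x^{k+as}\,\bar x'^{\,k+bs}.
\]
The coefficient of $x^m\bar x'^n$ therefore collects only terms with $m=k+as$ and $n=k+bs$, and every such pair satisfies $m\equiv n\pmod s$. By \eqref{eq:H-rescaled}, $H_{mn}=0$ whenever $m\not\equiv n\pmod s$. The same conclusion also follows from the invariance in the first step: comparing coefficients gives $\omega^{m-n}[x^m][\bar x'^n]K=[x^m][\bar x'^n]K$, so this coefficient vanishes unless $s\mid m-n$.

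Finally, the index set $\N$ splits as the disjoint union of the residue classes $I_q=\{m\ge1: m\equiv q \pmod s\}$ for $q=1,\dots,s$. Since $H_{mn}=0$ whenever $m$ and $n$ lie in different classes, $H$ is block diagonal with respect to this partition. This holds as a quadratic form on finitely supported sequences, and it yields $H=\bigoplus_q H^{(q)}$ with $H^{(q)}=(H_{mn})_{m,n\in I_q}$.

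No step is a real obstacle. The only point needing care is justifying coefficient extraction from the double series. For this, note that near the origin $|xx'U(x)U(x')|<1$, so the series converges absolutely and can be rearranged.
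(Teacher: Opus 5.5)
Your proof is correct, and its core step is exactly the paper's argument: you expand $\log\bigl(1-x\bar x'\,U(x)\overline{U(x')}\bigr)$ in powers, insert the expansion of $U^k$ in powers of $x^{s}$, and observe that every monomial $x^{k+as}\bar x'^{\,k+bs}$ has both exponents congruent to $k$ modulo $s$. The invariance under $(x,x')\mapsto(e^{2\pi i/s}x,e^{2\pi i/s}x')$ is an equivalent repackaging of the same $s$-symmetry, and your convergence remark justifies the coefficient extraction that the paper leaves implicit; it is also automatic formally, since each coefficient receives contributions only from the finitely many $k\le\min\{m,n\}$.
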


\begin{proof}
Expanding the logarithm in \eqref{eq:kernel-scaled} yields
\[
		K(x,\bar x';\zeta)
		=
		-\sum_{p\ge1}\frac{1}{p}\,(x\bar x')^{p}\,U(x;\zeta)^p\,\overline{U(x';\zeta)^p}.
	\]
	Using \eqref{eq:Sp-expansion} we obtain
	\[
		K(x,\bar x';\zeta)
		=
		-\sum_{p\ge1}\frac{1}{p}
		\sum_{m,n\ge0}
		R_p(m;\zeta)\,\overline{R_p(n;\zeta)}\,
		x^{p+ms}\,\bar x'^{\,p+ns}.
	\]
	Thus every monomial in this expansion has exponents congruent to $p$ modulo $s$ in both variables,
	and therefore the coefficient of $x^m\bar x'^{\,n}$ vanishes unless $m\equiv n\pmod{s}$. Because of \eqref{eq:H-rescaled}, the same congruence condition holds for $H$.
\end{proof}

\subsection{Gram representation}

For each residue class $q\in\{1,\dots,s\}$ define the index sequence
\[
	p_j:=q+js,\qquad j\ge0,
\]
and identify the block $\ell^2(\{p_j:j\ge0\})$ with $\ell^2(\N_0)$ via $j\leftrightarrow p_j$.
We write $H^{(q)}_{j_1j_2}:=H_{p_{j_1},p_{j_2}}$ for the corresponding block matrix elements.

\begin{proposition}[Algebraic Gram representation]\label{prop:gram-scale}
	For each $q\in\{1,\dots,s\}$ the block $H^{(q)}$ admits the exact Gram representation
	\begin{equation}\label{eq:Gram-alg}
		H^{(q)}_{j_1 j_2}
		=
		\sum_{p\equiv q\!\!\!\!\pmod{s}}
		v^{(p)}_{j_1}\,\overline{v^{(p)}_{j_2}},
	\end{equation}
	where
	\begin{equation}\label{eq:Gram-alg-vectors}
		v^{(p)}_j
		:=
		\begin{cases}
			\dfrac{p_j}{\sqrt p}\,R_p\!\left(\dfrac{p_j-p}{s};\zeta\right),
			   & \text{if } p_j\ge p \text{ (equivalently } \dfrac{p_j-p}{s}\in\N_0), \\[1.2ex]
			0, & \text{otherwise.}
		\end{cases}
	\end{equation}
\end{proposition}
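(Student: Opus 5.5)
The plan is to read off $H_{mn}$ from the series expansion of $K$ already obtained in the proof of Proposition~\ref{prop:block-decomposition-scale}, and then regroup the resulting sum by the index $p$ of the logarithm expansion. Using $\log(1-u)=-\sum_{p\ge1}u^p/p$ with $u=x\bar x' U(x;\zeta)\overline{U(x';\zeta)}$, together with \eqref{eq:Sp-expansion}, we have
\[
K(x,\bar x';\zeta)=-\sum_{p\ge1}\frac1p\sum_{a,b\ge0}R_p(a;\zeta)\overline{R_p(b;\zeta)}\,x^{p+as}\,\bar x'^{\,p+bs}.
\]
This identity is exact as a formal double power series. For fixed exponents, only finitely many $p$ contribute, so coefficient extraction is legitimate termwise. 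Here we use that the conjugate of the series $U(x')^p$ in $\bar x'$ has coefficients $\overline{R_p(b;\zeta)}$.

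The first step extracts the coefficient. Fix $q$ and indices $j_1,j_2$, and set $m=p_{j_1}$, $n=p_{j_2}$. The monomial $x^m\bar x'^{\,n}$ arises exactly from those $p$ with $p\equiv q \pmod s$, $p\le \min(m,n)$, $a=(m-p)/s$ and $b=(n-p)/s$. Hence
\[
[x^m][\bar x'^n]K=-\sum_{\substack{p\equiv q\ (\mathrm{mod}\ s)\\ p\le\min(m,n)}}\frac1p R_p\!\left(\tfrac{m-p}{s}\right)\overline{R_p\!\left(\tfrac{n-p}{s}\right)}.
\]

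The second step multiplies by $-mn$ as in \eqref{eq:H-rescaled}. We split the factor as
\[
\frac{mn}{p}=\frac{m}{\sqrt p}\cdot\frac{n}{\sqrt p},
\]
which gives $H^{(q)}_{j_1j_2}=\sum_p v^{(p)}_{j_1}\overline{v^{(p)}_{j_2}}$ with $v^{(p)}$ as in \eqref{eq:Gram-alg-vectors}. The convention $v^{(p)}_j=0$ for $p_j<p$ encodes the constraint $p\le\min(m,n)$ automatically. Since $m$, $n$ and $p$ are positive reals, no conjugation issue arises in the prefactors.

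The only point requiring care, and the main obstacle, is justifying the rearrangement of the double series. I would handle it in either of two ways. One option is to work purely formally: coefficient extraction is finite for each monomial, because $u$ has $x\bar x'$-order at least one. The other option is to note absolute convergence for small $|x|,|x'|$, since $U$ is analytic near $0$. The sum in \eqref{eq:Gram-alg} is then a finite sum for each entry, with $p$ ranging over $q,q+s,\dots,\min(p_{j_1},p_{j_2})$. Positive semidefiniteness of each block follows as an immediate byproduct.
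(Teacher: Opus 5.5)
Your proposal is correct and follows the paper's proof: expand the logarithm as in Proposition~\ref{prop:block-decomposition-scale}, extract the coefficient of $x^{p_{j_1}}\bar x'^{\,p_{j_2}}$ as a finite sum over $p\equiv q \pmod s$ with $p\le\min\{p_{j_1},p_{j_2}\}$, and split $mn/p=(m/\sqrt p)(n/\sqrt p)$. Your version is slightly cleaner than the paper's. The paper writes the single-mode contribution as if it were the full entry $H_{p+ks,\,p+\ell s}$ and only then sums over $p$, whereas you treat the sum over $p$ explicitly from the start.
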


\begin{proof}
	From the logarithmic expansion used in the proof of Proposition~\ref{prop:block-decomposition-scale},
	the coefficient of $x^{p+ks}\bar x'^{\,p+\ell s}$ in $K$ equals
	$-\frac{1}{p}R_p(k;\zeta)\overline{R_p(\ell;\zeta)}$. Therefore,
	\[
		H_{p+ks,\;p+\ell s}
		=
		\frac{(p+ks)(p+\ell s)}{p}\,
		R_p(k;\zeta)\,\overline{R_p(\ell;\zeta)}.
	\]
	Fix $q\in\{1,\dots,s\}$ and set $p_j=q+js$. Restricting to $m=p_{j_1}$ and $n=p_{j_2}$, and summing
	over those $p\equiv q\pmod{s}$ for which $p\le \min\{p_{j_1},p_{j_2}\}$, yields \eqref{eq:Gram-alg}
	with vectors \eqref{eq:Gram-alg-vectors}. For fixed $j_1,j_2$ only finitely many $p$ satisfy
	$p\le \min\{p_{j_1},p_{j_2}\}$, hence the sum is finite entrywise.
\end{proof}

\begin{remark}[Algebraic versus tail Gram representations]
	\label{rem:alg-vs-tail-gram}
	Proposition~\ref{prop:gram-scale} gives the mode-indexed representation
	\(H^{(q)}=VV^*\): for fixed matrix indices only finitely many logarithmic
	modes contribute. For the critical analysis one instead uses
	Proposition~\ref{prop:analytic-gram}, where the tail index \(m\) is the
	natural Hilbert-space index and the dual Gram block has the form \(G^{(q)}=V^*V\).
	By Remark~\ref{rem:HG-spectrum}, these dual realizations share the same nonzero
	spectrum. The dominant asymptotic
	$R_p(m;\zeta)\sim A_p(\zeta)m^{-3/2}\rho_*(\zeta)^{-ms}$ then produces the
	borderline sum $\sum_{m\ge0}\eta^m/(m+1)=L(\varepsilon)$ responsible for the
	logarithmic divergence.
\end{remark}

% ======================

\section{Algebraic Criticality and the Onset of Analytic Breakdown}
\label{sec:alg-criticality}

% =======================

The Gram representations of the Hessian reduce spectral questions to asymptotics of Taylor
coefficients of powers of the inverse--map generating function $U(\cdot;\zeta)$.  The latter are
governed by the \emph{first obstruction to analytic continuation} of the Taylor branch
$x\mapsto U(x;\zeta)$, which for algebraic functions is typically a square--root branch point.
This section presents the corresponding algebraic mechanism and the resulting coefficient asymptotics
used in Section~\ref{sec:universal-instability}.

\subsection{Characteristic points and simple square--root branching}
Recall the polynomial $F(y,x;\zeta)=y-1-\sum_{n=1}^{N}\zeta_n\,x^{s_n}y^{s_n}$
from \eqref{eq:F-def},
so that the Taylor branch is characterized by
\begin{equation}\label{eq:branch-eq}
	F(U(x;\zeta),x;\zeta)=0,
	\qquad
	U(0;\zeta)=1.
\end{equation}

\begin{proposition}[Branch points and the characteristic system]\label{prop:xi-star-system}
	Fix $\zeta$ and consider the Taylor branch $x\mapsto U(x;\zeta)$ determined by \eqref{eq:branch-eq}.
	Assume that this branch admits analytic continuation in a slit neighborhood of a point $x_*\in\CC$
	with $0<|x_*|<\infty$, and that $x_*$ is a singular point on the Taylor sheet. Let $\lambda$ denote
	the corresponding finite branch value attained by that continuation at $x_*$. Then $(x_*,\lambda)$
	satisfies the \emph{characteristic system}
	\begin{equation}\label{eq:char-system-F}
		F(\lambda,x_*;\zeta)=0,
		\qquad
		\partial_yF(\lambda,x_*;\zeta)=0.
	\end{equation}
	Equivalently,
	\begin{equation}\label{eq:char-system}
		\lambda
		=
		1+\sum_{n=1}^N \zeta_n\,x_*^{s_n} \lambda^{s_n},
		\qquad
		1
		=
		\sum_{n=1}^N s_n\,\zeta_n\,x_*^{s_n} \lambda^{s_n-1}.
	\end{equation}

	Conversely, suppose that the Taylor branch admits analytic continuation to a
	punctured neighborhood of $x_*$ with branch value $\lambda$, that
	$(x_*,\lambda)$ satisfies \eqref{eq:char-system-F}, and that the
	nondegeneracy conditions
	\begin{equation}\label{eq:Fx-nondeg}
		\partial_xF(\lambda,x_*;\zeta)\neq0,
	\end{equation}
	and
	\begin{equation}\label{eq:Fyy-nondeg}
		\partial_y^2F(\lambda,x_*;\zeta)\neq0
	\end{equation}
	hold. Then $x_*$ is a square--root branch point of the Taylor sheet: in a neighborhood of $x_*$ one has a
	local Puiseux expansion
	\[
		U(x;\zeta)=\lambda + c\,\sqrt{x-x_*}+O(x-x_*),
		\qquad c\neq0,
	\]
	for a suitable choice of the square--root branch.
\end{proposition}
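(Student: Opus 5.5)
The proof has two parts, matching the two directions of Proposition~\ref{prop:xi-star-system}. In both, $F$ is a polynomial in $(y,x)$. The continuation of the Taylor branch near $x_*$ stays an algebraic function element satisfying $F(U(x;\zeta),x;\zeta)=0$ identically, since this identity holds near $x=0$ and is preserved under analytic continuation. Throughout, we work on the chosen Taylor sheet in a slit (or punctured) neighborhood of $x_*$.

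\emph{Necessity.} We argue by contraposition. Suppose $F(\lambda,x_*;\zeta)=0$ but $\partial_yF(\lambda,x_*;\zeta)\neq0$. By the holomorphic implicit function theorem there are neighborhoods $D\ni x_*$ and $W\ni\lambda$ and a unique holomorphic $g:D\to W$ with $g(x_*)=\lambda$ and $F(g(x),x;\zeta)=0$. The branch value $\lambda$ is the limit of the continuation as $x\to x_*$ within the slit neighborhood. Hence, on a slightly smaller slit neighborhood, the continued $U$ takes values in $W$ and solves $F=0$ there. Uniqueness then forces $U=g$ on each component of the slit neighborhood, and on a disk minus a slit this set is connected. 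So $g$ provides a holomorphic extension across the slit, which contradicts $x_*$ being singular on the Taylor sheet.

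It remains to check $F(\lambda,x_*;\zeta)=0$. This follows by letting $x\to x_*$ in $F(U(x),x)=0$, using continuity of the polynomial $F$ and the finiteness of $\lambda$. The equivalent form \eqref{eq:char-system} comes from writing out $F=0$ and $\partial_yF=1-\sum s_n\zeta_n x^{s_n}y^{s_n-1}=0$.

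\emph{Sufficiency.} We apply the Weierstrass preparation theorem to $G(y,x):=F(y,x;\zeta)$ at $(\lambda,x_*)$. Since $G(\lambda,x_*)=\partial_yG(\lambda,x_*)=0$ and $\partial_y^2G(\lambda,x_*)\neq0$, $G$ has order exactly $2$ in $y-\lambda$. Preparation gives
\[
G=\bigl((y-\lambda)^2+b_1(x)(y-\lambda)+b_0(x)\bigr)\,E(y,x),\qquad E(\lambda,x_*)\neq0,
\]
with $b_0(x_*)=b_1(x_*)=0$. Completing the square turns the local zero set into $\{(y-\lambda+b_1/2)^2=\Delta(x)\}$, where $\Delta=b_1^2/4-b_0$. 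Comparing $\partial_xG(\lambda,x_*)$ with the expansion shows $\partial_xG(\lambda,x_*)=-b_0'(x_*)E(\lambda,x_*)$, so $\Delta'(x_*)=-b_0'(x_*)\neq0$ by \eqref{eq:Fx-nondeg}. Thus $\Delta(x)=(x-x_*)h(x)$ with $h(x_*)\neq0$, and both local roots are
\[
y=\lambda-\tfrac12 b_1(x)\pm\sqrt{h(x)}\sqrt{x-x_*}=\lambda\pm c\sqrt{x-x_*}+O(x-x_*),\qquad c=\sqrt{h(x_*)}\neq0.
\]
(Equivalently, $c^2=-2\partial_xF/\partial_y^2F$ at $(\lambda,x_*)$.)

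The main obstacle is identifying the continued Taylor branch with these two local roots. As $x\to x_*$ in the punctured neighborhood, $U(x)\to\lambda$, so $U$ eventually lies in the neighborhood where the zero set of $F$ consists only of the two Puiseux roots. Since these are the two values of the single function $\lambda+\sqrt{\Delta(x)}-b_1/2$, and continuation around $x_*$ exchanges them, $U$ coincides with one determination of this function on the punctured neighborhood. This gives the claimed expansion with $c\neq0$. Moreover, $U$ cannot be single-valued near $x_*$, because $\sqrt{\Delta}$ changes sign around $x_*$. Hence $x_*$ is a genuine square-root branch point of the Taylor sheet.
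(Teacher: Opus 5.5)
Your proof is correct. The necessity half is essentially the paper's argument: if $\partial_yF\neq0$, the implicit function theorem gives a holomorphic solution that must agree with the continued sheet near $x_*$. You also spell out, via the limit in $F(U(x),x;\zeta)=0$, why $F(\lambda,x_*;\zeta)=0$, which the paper simply asserts.

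The converse takes a genuinely different route. The paper uses $\partial_xF\neq0$ to solve for $x=x(y)$ near $y=\lambda$ by the implicit function theorem. It also notes that this condition is automatic here, since $\partial_xF=-\lambda/x_*$ at characteristic points. It then reads off $x'(\lambda)=0$ and $x''(\lambda)=-\partial_y^2F/\partial_xF\neq0$, and inverts the fold $x-x_*\sim c(y-\lambda)^2$. You instead apply Weierstrass preparation in $y$ at order $2$ and show that the discriminant $\Delta=b_1^2/4-b_0$ has a simple zero at $x_*$.

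The paper's route is more elementary, needing only the implicit function theorem. Yours costs the preparation theorem but gives the exact local normal form $y=\lambda-\tfrac12 b_1(x)\pm\sqrt{(x-x_*)h(x)}$, with $b_1$ and $h$ holomorphic. That form makes three things transparent: the $O(x-x_*)$ remainder, the monodromy exchange of the two roots, and the identification of the continued branch with one determination. The paper's inversion step writes $(1+O(x-x_*))$, whereas inverting a generic fold gives $(1+O(\sqrt{x-x_*}))$; its final Puiseux expansion is unaffected.

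There is one small slip in your computation. Differentiating $G=PE$ at $(\lambda,x_*)$ gives $\partial_xG(\lambda,x_*)=+b_0'(x_*)E(\lambda,x_*)$, not $-b_0'(x_*)E(\lambda,x_*)$. This is harmless, since only nonvanishing is used, and your stated $c^2=-2\partial_xF/\partial_y^2F$ agrees with the correct sign.
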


\begin{proof}
	Deferred to Appendix~\ref{app:xi-star-system}.
\end{proof}

The later local theory uses branch points of the Taylor sheet with finite
branch value \(\lambda\) on the continued sheet. This is precisely the case
relevant for the dominant-orbit analysis carried out below.

\subsection{Dominant orbit and critical locus}

The $s$--symmetry (Lemma \ref{lem:s-symmetry}) implies that singularities occur in $s$--orbits.  For the spectral problem we
consider the situation where the analytic boundary is governed by a single dominant orbit.

\begin{definition}[Dominant $s$--orbit at the analytic boundary]\label{def:dominant-orbit}
	Fix $\zeta$ and assume the Taylor branch is analytic in $|x|<\rho_*(\zeta)$, where $\rho_*(\zeta)$ is
	as in Definition~\ref{def:rho-star}. Let $s:=\gcd(s_1,\dots,s_N)$ so that
	$U(e^{2\pi i/s}x;\zeta)=U(x;\zeta)$. We say that the analytic boundary is \emph{dominant modulo the $s$--symmetry} if the chosen Taylor sheet has
	exactly $s$ singular points on the circle $|x|=\rho_*(\zeta)$, forming a single orbit
	\[
		\bigl\{e^{2\pi i j/s}\,x_*(\zeta)\bigr\}_{j=0}^{s-1},
		\qquad |x_*(\zeta)|=\rho_*(\zeta),
	\]
	and no other singularities on that circle. We refer to $x_*(\zeta)$ as a chosen representative of the
	dominant orbit.
\end{definition}

\begin{definition}[Analytic critical locus]
\label{def:alg-critical-locus}
We define the analytic critical locus \(\mathcal C\subset\CC^N\) by
\[
\mathcal C
:=
\bigl\{
\zeta\in\CC^N:
\rho_*(\zeta)=1
\ \text{and}\
\zeta \text{ satisfies Definition~\ref{def:dominant-orbit}}
\bigr\}.
\]
Equivalently, \(\mathcal C\) consists of those parameters for which the first
obstruction to analytic continuation of \(U(\cdot;\zeta)\) occurs on the unit
circle \(|x|=1\) and is represented by a unique dominant \(s\)-orbit in the
sense of Definition~\ref{def:dominant-orbit}.
\end{definition}

In particular, \(\mathcal C\) need not coincide with the full radius-one locus
\(\{\rho_*=1\}\). It singles out the part of that locus for which a unique
dominant \(s\)-orbit has already been identified, which is the case used in the
spectral theorem below.

\begin{definition}[Simple critical point]\label{def:simple-critical}
	A parameter $\zeta_c\in\mathcal C$ is called a \emph{simple critical point}
	if, for one representative $x_{*,c}$ of the dominant
	orbit and the corresponding finite branch value $\lambda_c$ on the chosen Taylor
	sheet, one has
	\[
		F(\lambda_c,x_{*,c};\zeta_c)=0,
		\qquad
		\partial_yF(\lambda_c,x_{*,c};\zeta_c)=0,
		\qquad
		\partial_y^2F(\lambda_c,x_{*,c};\zeta_c)\neq0.
	\]
\end{definition}

For the polynomial $F(y,x;\zeta)=y-1-\sum_{n=1}^N \zeta_n x^{s_n}y^{s_n}$,
the characteristic equations imply
\[
	\partial_xF(\lambda,x_*;\zeta)=-\sum_{n=1}^N s_n\zeta_n x_*^{s_n-1}\lambda^{s_n}=-\frac{\lambda}{x_*}.
\]
Moreover, $F(\lambda_c,x_{*,c};\zeta_c)=0$ forces $\lambda_c\neq0$, since
$F(0,x;\zeta)=-1$. Hence the fold condition
$\partial_xF(\lambda_c,x_{*,c};\zeta_c)\neq0$ is automatic at a critical
point of the Taylor sheet, and only the nonvanishing of
$\partial_y^2F(\lambda_c,x_{*,c};\zeta_c)$ needs to be imposed.

\subsection{Distance to criticality and branch-point parameters}

Fix a norm on $\CC^N$ (all norms are equivalent).  In a neighborhood of a simple critical point $\zeta_c\in\mathcal C$
and within the dominant regime of Definition~\ref{def:dominant-orbit}, the critical locus is characterized by the analytic--radius condition
$\rho_*(\zeta)=1$.  We therefore measure proximity to criticality by
\begin{equation}\label{eq:eps-def}
	\varepsilon:=\rho_*(\zeta)-1.
\end{equation}
All asymptotic statements below are understood in the limit $\varepsilon\downarrow0^+$.

\begin{definition}[Subcritical, critical, supercritical]
\label{def:subcritical-critical-supercritical}
Within the dominant regime of Definition~\ref{def:dominant-orbit}, we call a
parameter \(\zeta\) \emph{subcritical} if \(\rho_*(\zeta)>1\),
\emph{critical} if \(\rho_*(\zeta)=1\), and \emph{supercritical} if
\(\rho_*(\zeta)<1\).
\end{definition}

\begin{definition}[Transversal approach]\label{def:transversal-path}
	Let $t\mapsto\zeta(t)$ be a $C^1$ path with $\zeta(0)=\zeta_c\in\mathcal C$.
	Assume that, near $t=0$, the path stays in the dominant regime of
	Definition~\ref{def:dominant-orbit} and that the representative dominant
	branch point continues $C^1$-smoothly along the path, so that
	$\rho_*(\zeta(t))$ is differentiable there. By
	Lemma~\ref{lem:local-branch-data}, this holds whenever the continued point
	remains a representative of the unique dominant orbit.
	We say that $\zeta(t)$ satisfies the \emph{scalar transversality condition} at
	$\zeta_c$ if
	\[
	\frac{d}{dt}\rho_*\bigl(\zeta(t)\bigr)\Big|_{t=0}\neq0.
	\]
	Then, on either side of the crossing where $\rho_*(\zeta(t))>1$, one may use
	\[
		\varepsilon:=\rho_*\bigl(\zeta(t)\bigr)-1
	\]
	as a local parameter. After this reparametrization the path may be written as
	$\varepsilon\mapsto\zeta(\varepsilon)$ with
	$\rho_*(\zeta(\varepsilon))=1+\varepsilon$.
\end{definition}

Throughout the paper, ``transversal'' refers only to the scalar
nondegeneracy condition of Definition~\ref{def:transversal-path}, namely the
nonvanishing derivative of the analyticity radius once that derivative is
well defined along the chosen dominant continuation.

\begin{lemma}[Local continuation of the critical branch parameters]
\label{lem:local-branch-data}
Let \(\zeta_c\in\mathcal C\) be a simple critical point, and choose a
representative critical point \(x_{*,c}\) together with the corresponding
branch value \(\lambda_c\) on the chosen Taylor sheet. Then there exist a
neighborhood \(\mathcal O\) of \(\zeta_c\) and unique \(C^1\) maps
\[
\zeta\mapsto x_*(\zeta),\qquad
\zeta\mapsto \lambda(\zeta),\qquad
\zeta\in\mathcal O,
\]
such that \((x_*(\zeta_c),\lambda(\zeta_c))=(x_{*,c},\lambda_c)\) and
\[
F(\lambda(\zeta),x_*(\zeta);\zeta)=0,
\qquad
\partial_yF(\lambda(\zeta),x_*(\zeta);\zeta)=0.
\]
If, in addition, the continued point \(x_*(\zeta)\) remains a representative
of the unique dominant orbit for \(\zeta\in\mathcal O\), then
\[
\rho_*(\zeta)=|x_*(\zeta)|
\]
on \(\mathcal O\), and hence \(\rho_*\) is \(C^1\) on \(\mathcal O\).
\end{lemma}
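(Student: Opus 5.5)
The plan is to apply the implicit function theorem to the map $\Phi(x,y;\zeta):=\bigl(F(y,x;\zeta),\,\partial_yF(y,x;\zeta)\bigr)$ at $(x_{*,c},\lambda_c;\zeta_c)$. Since $F$ is polynomial in $(y,x)$ and linear in $\zeta$, $\Phi$ is holomorphic in $(x,y)$ and real-analytic (indeed holomorphic) in $\zeta$. The relevant Jacobian with respect to $(x,y)$ is
\[
\begin{pmatrix}\partial_xF & \partial_yF\\ \partial_x\partial_yF & \partial_y^2F\end{pmatrix}
\]
evaluated at the critical data. At that point $\partial_yF=0$, so the determinant equals $\partial_xF\cdot\partial_y^2F$.

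Both factors are nonzero. By the computation following Definition~\ref{def:simple-critical}, $\partial_xF(\lambda_c,x_{*,c};\zeta_c)=-\lambda_c/x_{*,c}\neq0$, because $\lambda_c\neq0$ and $|x_{*,c}|=1$. The simple-criticality assumption gives $\partial_y^2F\neq0$. The holomorphic implicit function theorem therefore produces a neighborhood $\mathcal O$ and unique holomorphic, hence $C^1$, maps $x_*(\zeta),\lambda(\zeta)$ solving the characteristic system with the prescribed values at $\zeta_c$. Uniqueness is understood within a small neighborhood of $(x_{*,c},\lambda_c)$.

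For the second claim, assume $x_*(\zeta)$ remains a representative of the unique dominant orbit for $\zeta\in\mathcal O$. By Definition~\ref{def:dominant-orbit}, all singularities on the circle $|x|=\rho_*(\zeta)$ are the rotations $e^{2\pi ij/s}x_*(\zeta)$, so $|x_*(\zeta)|=\rho_*(\zeta)$. Since $x_*(\zeta_c)=x_{*,c}\neq0$, we may shrink $\mathcal O$ so that $x_*$ stays away from $0$. Then $|x_*(\zeta)|=\bigl(x_*\overline{x_*}\bigr)^{1/2}$ is a composition of $C^1$ maps with the square root evaluated away from $0$, hence $C^1$ as a function of the real coordinates of $\zeta$.

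The main subtlety is consistency. The IFT solution must be the same object as the branch point of the \emph{chosen Taylor sheet}, with the branch value $\lambda(\zeta)$ actually attained on that sheet. To handle this, I will use the hypothesis as stated: the continued point is a dominant representative. Then Proposition~\ref{prop:xi-star-system} forces its sheet value to satisfy the characteristic system. Continuity of the sheet value, together with IFT uniqueness near $(x_{*,c},\lambda_c)$, identifies that solution with $(x_*(\zeta),\lambda(\zeta))$. The nondegeneracy conditions persist on $\mathcal O$ by continuity, so this point is a simple square-root branch point.
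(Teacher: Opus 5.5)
Your proposal is correct and follows the paper's proof. You apply the implicit function theorem to $(F,\partial_yF)$ in the variables $(x,y)$, reduce the Jacobian determinant to $\pm\,\partial_xF\,\partial_y^2F\neq0$ using $\partial_xF=-\lambda_c/x_{*,c}$, and then read off $\rho_*=|x_*|$ directly from the dominant-orbit hypothesis, with $x_*\neq0$ giving the $C^1$ regularity.

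Your final consistency paragraph is not needed for the statement as formulated, since the extra hypothesis already gives $|x_*(\zeta)|=\rho_*(\zeta)$. Its appeal to continuity of the sheet value in $\zeta$ would itself need justification, so it is better dropped or presented only as a remark.
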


\begin{proof}
	Apply the implicit function theorem to the map
	\[
		(y,x,\zeta)\mapsto \bigl(F(y,x;\zeta),\partial_yF(y,x;\zeta)\bigr)
	\]
	in the variables $(y,x)$ at $(\lambda_c,x_{*,c};\zeta_c)$. Its Jacobian in
	$(y,x)$ equals
	\[
		\begin{pmatrix}
			\partial_yF & \partial_xF \\
			\partial_y^2F & \partial_x\partial_yF
		\end{pmatrix},
	\]
	whose determinant at the critical point is
	$-\partial_xF(\lambda_c,x_{*,c};\zeta_c)\,\partial_y^2F(\lambda_c,x_{*,c};\zeta_c)\neq0$
	because $\zeta_c$ is simple. This yields unique $C^1$ continuations
	$\lambda(\zeta)$ and $x_*(\zeta)$ near $\zeta_c$.

	This proves the first part of the statement, namely the local continuation of
	the characteristic pair. The additional identity
	$\rho_*(\zeta)=|x_*(\zeta)|$ does not follow from the implicit function theorem by itself. It also requires that the continued point remain a representative of the unique dominant orbit. Under that extra hypothesis one may shrink
	$\mathcal O$ so that the same representative branch is followed throughout,
	and then $\rho_*(\zeta)=|x_*(\zeta)|$ is $C^1$ because $x_*(\zeta)\neq0$.
\end{proof}

Thus the continuation of \((x_*,\lambda)\) is automatic near a simple critical
point, but the additional dominance and uniform \(\Delta\)-analytic control
required later are not. Those are precisely the extra hypotheses encoded in
Definition~\ref{def:uniform-delta}.
Along such a path we write
\[
	\rho_*(\varepsilon):=\rho_*(\zeta(\varepsilon)),\qquad x_*(\varepsilon):=x_*(\zeta(\varepsilon)),
\]
where $x_*(\zeta)$ is the $C^1$ representative branch from
Lemma~\ref{lem:local-branch-data}, chosen within the dominant orbit of
Definition~\ref{def:dominant-orbit}.
The corresponding branch values on the Taylor sheet will be used as auxiliary variables:
\begin{equation}\label{eq:lambda-def}
	\lambda(\varepsilon)
	:=
	\lim_{x\to x_*(\varepsilon)}
	U(x;\zeta(\varepsilon)),
\end{equation}
where $x$ is on the chosen Taylor sheet.

\subsection{Transfer asymptotics from a dominant simple orbit}

We next translate the square--root singularity at the dominant branch point
(Proposition~\ref{prop:xi-star-system}) into coefficient asymptotics for powers of the inverse map.
The required estimates follow from standard singularity analysis (transfer theorems) in
$\Delta$--domains; we use \cite[Ch.~VI]{FlajoletSedgewick2009} as a convenient reference.
These results extract the large-order asymptotics
of Taylor coefficients from the local singular expansion at the dominant singularity
and provide the coefficient estimates needed for the spectral analysis below.

\begin{definition}[$\Delta$--domains and $\Delta$--analyticity]\label{def:delta-domain}
	Fix a point $\xi_*\in\CC\setminus\{0\}$ and an opening angle $\phi\in(0,\pi/2)$. A \emph{$\Delta$--domain at $\xi_*$} is a domain
	of the form
	\[
		\Delta(\xi_*,\phi)
		:=
		\Bigl\{x\in\CC:\ |x|<|\xi_*|+\eta,\ x\neq\xi_*,\ |\arg(\xi_*-x)|>\phi\Bigr\},
	\]
	for some thickness parameter $\eta>0$. A function $G$ is \emph{$\Delta$--analytic at $\xi_*$} if it admits an analytic
	continuation to $\Delta(\xi_*,\phi)$ for some $\phi\in(0,\pi/2)$.
\end{definition}

For each fixed $\zeta$, the Taylor sheet $U(\cdot;\zeta)$ is an algebraic function of $x$ (it solves
\eqref{eq:F-def}). Hence its analytic continuation has only algebraic branch points and admits local
Puiseux expansions. In particular, at isolated dominant singularities the standard $\Delta$--analytic
setup holds, and the simple square--root local form from Proposition~\ref{prop:xi-star-system} is the
generic case.

For the operator estimates below, we require the \(\Delta\)-domains and the
local Puiseux expansions at the moving dominant singularity to be controlled
uniformly in \(\varepsilon\). In the present paper we isolate this requirement
as the standing hypothesis of Definition~\ref{def:uniform-delta}. In concrete
finite-dimensional leaves it may be verified by a separate continuation
analysis of the Taylor sheet near the dominant orbit, but we do not prove a
general continuation theorem here.

\begin{proposition}[Coefficient transfer from the dominant orbit]
\label{prop:transfer-dominant}
Assume Definition~\ref{def:dominant-orbit} and that the Taylor sheet has a dominant $s$--orbit of
simple square--root branch points as in Proposition~\ref{prop:xi-star-system}, with $\Delta$--analytic
continuation near the orbit.
Then for each $p\ge1$, writing
\[
	U(x;\zeta)^p=\sum_{m\ge0}R_p(m;\zeta)\,x^{ms},
\]
one has the transfer asymptotics
\[
	R_p(m;\zeta)=A_p(\zeta)\,x_*(\zeta)^{-ms}\,m^{-3/2}\bigl(1+O(m^{-1})\bigr),
	\qquad m\to\infty,
\]
with an amplitude bound
\[
|A_p(\zeta)|\le C\,p\,|\lambda|^{p-1}
\]
(locally in $\zeta$), where $\lambda$ is the corresponding branch value at $x_*$ on the chosen
Taylor sheet.
\end{proposition}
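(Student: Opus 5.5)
The plan is to reduce to a single dominant singularity by passing to the variable $X:=x^s$, and then to apply the standard square-root transfer theorem \cite[Thm.~VI.1, Thm.~VI.3]{FlajoletSedgewick2009} to $U^p$.

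By Lemma~\ref{lem:s-symmetry}, $U(x;\zeta)=\mathcal U(x^s;\zeta)$ for a function $\mathcal U(X;\zeta)=\sum_m R_1(m;\zeta)X^m$ analytic in $|X|<\rho_*(\zeta)^s$. More generally, $U^p=\mathcal U^p$ has Taylor coefficients $R_p(m;\zeta)$ in $X$. The $s$ points $e^{2\pi ij/s}x_*$ of the dominant orbit all map to the single point $X_*:=x_*^s$. Since $x\mapsto x^s$ is a local biholomorphism near $x_*\neq0$, the following properties transfer directly:
\begin{itemize}
\item the $\Delta$-analyticity of $U$ near the orbit gives $\Delta$-analyticity of $\mathcal U$ at $X_*$, after possibly shrinking the angle and the thickness;
\item the absence of other singularities on $|x|=\rho_*$ gives that $X_*$ is the unique singularity of $\mathcal U$ on $|X|=\rho_*^s$.
\end{itemize}

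Next we transport the Puiseux expansion. From Proposition~\ref{prop:xi-star-system}, $U=\lambda+c\sqrt{x-x_*}+O(x-x_*)$. Writing $x-x_*=(X-X_*)/(s x_*^{s-1})+O((X-X_*)^2)$, we get
\[
\mathcal U(X)=\lambda-\tilde c\sqrt{1-X/X_*}+O(1-X/X_*),
\]
with $\tilde c\neq0$. Because $\mathcal U$ is algebraic and the branch point is simple, the full expansion is a convergent series in $\sqrt{1-X/X_*}$. Raising to the $p$-th power then gives
\[
\mathcal U^p=\lambda^p-p\lambda^{p-1}\tilde c\sqrt{1-X/X_*}+O(1-X/X_*),
\]
where the error is again a convergent series in half-integer powers. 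The transfer theorem applies termwise:
\begin{itemize}
\item integer powers of $(1-X/X_*)$ contribute nothing asymptotically;
\item $(1-X/X_*)^{1/2}$ contributes $X_*^{-m}m^{-3/2}/\Gamma(-1/2)\,(1+O(1/m))$;
\item the next term, $(1-X/X_*)^{3/2}$, contributes $O(X_*^{-m}m^{-5/2})$.
\end{itemize}
This yields
\[
R_p(m;\zeta)=A_p(\zeta)\,x_*^{-ms}\,m^{-3/2}\bigl(1+O(m^{-1})\bigr),
\qquad
A_p=\frac{-p\lambda^{p-1}\tilde c}{\Gamma(-1/2)}=\frac{p\lambda^{p-1}\tilde c}{2\sqrt\pi}.
\]

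For the amplitude bound, $|A_p|\le C p|\lambda|^{p-1}$ holds with $C=|\tilde c|/(2\sqrt\pi)$. The quantity $\tilde c$ can be computed from the characteristic system: $\tilde c^2$ is proportional to $\partial_xF/\partial_y^2F$ at $(\lambda,x_*)$, i.e.\ to $2X_*\,\partial_X F/\partial_y^2F$ in the $X$ variable. This is continuous in $\zeta$ by Lemma~\ref{lem:local-branch-data}, so $C$ can be taken locally uniform in $\zeta$.

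The main obstacle is the uniformity in $p$ of the $O(m^{-1})$ remainder. The subleading coefficient of $\mathcal U^p$ grows like $p^2|\lambda|^{p-2}$, so the relative error is only $O(p/m)$ unless it is controlled. The statement as written fixes $p$ and claims only the amplitude bound locally in $\zeta$, so I would prove it for each fixed $p$. I would also record the explicit $p$-dependence of the error by bounding $\mathcal U^p-\lambda^p+p\lambda^{p-1}\tilde c\sqrt{1-X/X_*}$ on a small circle around $X_*$ via Cauchy estimates: the bound is $C p^2 (|\lambda|+\delta)^{p}\,|1-X/X_*|$. This is the input needed later in the uniform estimates of Definition~\ref{def:uniform-delta}.
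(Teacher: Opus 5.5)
Your proposal is correct and follows essentially the same route as the paper's proof (Lemma~\ref{lem:coeff-asymp-dominant}). You pass to $z=x^s$ so that the dominant orbit collapses to a single $\Delta$-singularity $z_*=x_*^s$, transport the square-root Puiseux expansion (picking up the factor $s^{-1/2}$), raise it to the $p$-th power, and read off $|A_p|=\frac{s^{-1/2}}{2\sqrt{\pi}}\,p\,|\kappa|\,|\lambda|^{p-1}$ from the standard transfer theorem, which gives the amplitude bound at once.

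Keeping the $(1-z/z_*)^{3/2}$ term, as you do, is what actually justifies the relative error $O(m^{-1})$; the bare $O(1-z/z_*)$ remainder recorded in the paper would only give $O(m^{-1/2})$. Your closing remark on uniformity in $p$ is not needed for this fixed-$p$ statement. Note also that the relevant subleading singular coefficient there is the $(1-z/z_*)^{3/2}$ one, of size about $p^3|\lambda|^{p-3}$, not the $p^2|\lambda|^{p-2}$ coefficient of the integer power, which contributes nothing. That is why the uniform version (Lemma~\ref{lem:uniform-transfer}) carries relative error $O((1+p^2)/m)$ rather than $O(p/m)$.
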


\begin{proof}
	This is Lemma~\ref{lem:coeff-asymp-dominant}; see Appendix~\ref{app:transfer-dominant}.
\end{proof}

\begin{corollary}[Critical limit of the transfer coefficients]
	\label{cor:transfer-critical-data}
	Let $\zeta(\varepsilon)\to\zeta_c\in\mathcal C$ be a $C^1$ path satisfying the scalar
	transversality condition of Definition~\ref{def:transversal-path}, through the dominant
	subcritical regime toward a simple critical point. Let $\kappa(\varepsilon)$ denote the square-root
	coefficient in the Taylor-sheet Puiseux expansion at the representative dominant point
	$x_*(\varepsilon)$, for a fixed choice of square-root branch. Then the branch-point parameters and
	transfer amplitudes admit limits
	\[
		x_*(\varepsilon)\to x_{*,c},
		\qquad
		\lambda(\varepsilon)\to\lambda_c,
		\qquad
		A_p(\varepsilon)\to A_p^{(c)},
	\]
	with
	\[
		A_p^{(c)}
		=
		-\frac{s^{-1/2}}{2\sqrt{\pi}}\;p\,\kappa_c\,\lambda_c^{\,p-1}.
	\]
	Moreover, for $\varepsilon$ small,
	\[
		|A_p(\varepsilon)|\le C\,p\,|\lambda(\varepsilon)|^{p-1}.
	\]
\end{corollary}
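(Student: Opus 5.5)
The plan is to treat the corollary as a continuity statement for the quantities produced in Proposition~\ref{prop:transfer-dominant}. First, the limits $x_*(\varepsilon)\to x_{*,c}$ and $\lambda(\varepsilon)\to\lambda_c$ follow directly from Lemma~\ref{lem:local-branch-data}. At a simple critical point the implicit function theorem gives $C^1$ maps $\zeta\mapsto(x_*(\zeta),\lambda(\zeta))$ on a neighborhood $\mathcal O$ of $\zeta_c$. The transversality and dominance assumptions of Definition~\ref{def:transversal-path} ensure that the representative dominant point along the path is exactly this continued branch. Composing with the $C^1$ path $\varepsilon\mapsto\zeta(\varepsilon)$ then gives the limits.

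Second, we identify the amplitude explicitly at fixed $\varepsilon$. Near $x_*$ the local Puiseux expansion $U=\lambda+\kappa\sqrt{x-x_*}+O(x-x_*)$ is raised to the $p$-th power. This gives
\[
U^p=\lambda^p+p\lambda^{p-1}\kappa\sqrt{x-x_*}+O(x-x_*).
\]
Because $U$ is a function of $X=x^s$, we pass to $X$, with singularity at $X_*=x_*^s$. Since $x-x_*\approx (X-X_*)/(s x_*^{s-1})$, we get $\sqrt{x-x_*}=(s x_*^{s-1})^{-1/2}\sqrt{X-X_*}\,(1+O(X-X_*))$. Writing $\sqrt{X-X_*}=\sqrt{-X_*}\sqrt{1-X/X_*}$ and applying the standard transfer $[X^m](1-X/X_*)^{1/2}\sim -X_*^{-m}m^{-3/2}/(2\sqrt\pi)$ from \cite[Ch.~VI]{FlajoletSedgewick2009} yields
\[
A_p(\varepsilon)=-\frac{1}{2\sqrt\pi}\,p\,\kappa(\varepsilon)\,\lambda(\varepsilon)^{p-1}\,\sigma(\varepsilon),
\]
where $\sigma$ is an explicit factor built from $s^{-1/2}$ and powers of $x_*$. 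With the paper's normalization of $\kappa$ (the square-root coefficient taken in the natural orbit variable, with branch fixed so that $|x_{*,c}|=1$ factors are absorbed), $\sigma\to s^{-1/2}$, which gives the stated $A_p^{(c)}$. The same formula, together with the local boundedness of $\kappa$, gives $|A_p(\varepsilon)|\le Cp|\lambda(\varepsilon)|^{p-1}$. This bound is also the one already recorded in Proposition~\ref{prop:transfer-dominant}.

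Third, we show $\kappa(\varepsilon)\to\kappa_c$. Expanding $F$ to second order at $(\lambda,x_*)$ and using $\partial_yF=0$ gives
\[
\kappa^2=-\frac{2\,\partial_xF}{\partial_y^2F}=\frac{2\lambda/x_*}{\partial_y^2F(\lambda,x_*;\zeta)}.
\]
Here we use the identity $\partial_xF=-\lambda/x_*$ derived after Definition~\ref{def:simple-critical}. The right-hand side is continuous in $\zeta$ near $\zeta_c$ and nonzero, by simplicity. Having fixed a square-root branch, $\kappa(\varepsilon)$ is then the continuous square root, and it converges to $\kappa_c\neq0$.

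The main obstacle is the bookkeeping of the branch choice and the normalization constant relating the $x$-plane Puiseux coefficient to the $X=x^s$ transfer. This includes the factor $s^{-1/2}$, the phase of $\sqrt{-X_*}$, and the choice of the representative in the orbit. These must be fixed consistently so that the limiting amplitude is exactly the displayed constant rather than just up to a unimodular factor. Continuity itself is unproblematic: the uniformity needed to let $\varepsilon\to0$ inside the $O(m^{-1})$ error is not required for this corollary, because only the amplitudes, not the remainders, are being passed to the limit.
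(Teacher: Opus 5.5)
Your three steps follow the paper's proof of Lemma~\ref{lem:critical-coeff-asymp}: the limits of $x_*$ and $\lambda$ come from Lemma~\ref{lem:local-branch-data}, the amplitude is read off from the transfer theorem after passing to $z=x^s$, and convergence of $\kappa$ comes from a closed formula for $\kappa^2$. The step that fails as written is the identification of the constant. You normalize the Puiseux term as $\hat\kappa\sqrt{x-x_*}$. With that convention your factor satisfies $\sigma^2=(s x_*^{s-1})^{-1}\cdot(-x_*^s)=-x_*/s$, so $\sigma=\pm s^{-1/2}\sqrt{-x_*}$. As $\varepsilon\downarrow0$ this tends to $s^{-1/2}$ times the unimodular number $\pm\sqrt{-x_{*,c}}$. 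The condition $|x_{*,c}|=1$ controls only the modulus of that number, not its phase. So $\sigma\to s^{-1/2}$ fails unless $x_{*,c}=-1$, and with $\hat\kappa_c$ in place of $\kappa_c$ the displayed $A_p^{(c)}$ is off by this phase. Taking the coefficient in the orbit variable $z$ would not help either, since then no factor $s^{-1/2}$ appears at all.

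The fix is the normalization the statement intends; see the last line of Definition~\ref{def:uniform-delta}\textup{(U2)}. In the $x$-variable, write $U=\lambda+\kappa\sqrt{1-x/x_*}+O(1-x/x_*)$, so that $\kappa=\sqrt{-x_*}\,\hat\kappa$ for a consistent branch. Then $1-z/z_*=s(1-x/x_*)\bigl(1+O(1-x/x_*)\bigr)$. Both radicands are small with nearly equal arguments in the $\Delta$-domain, so no extra phase arises, and the coefficient of $\sqrt{1-z/z_*}$ is exactly $s^{-1/2}\kappa$. Hence $A_p(\varepsilon)=-\frac{s^{-1/2}}{2\sqrt{\pi}}\,p\,\kappa(\varepsilon)\lambda(\varepsilon)^{p-1}$ holds identically in $\varepsilon$, and the limit is immediate. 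Your $\kappa^2$ formula becomes $\kappa^2=2x_*\partial_xF/\partial_y^2F=-2\lambda/\partial_y^2F$. This is still continuous and nonvanishing near $\zeta_c$, so your continuity argument and the bound $|A_p(\varepsilon)|\le C\,p\,|\lambda(\varepsilon)|^{p-1}$ go through unchanged.
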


\begin{proof}
	This is Lemma~\ref{lem:critical-coeff-asymp}; see Appendix~\ref{app:transfer-critical}.
\end{proof}

	The factor $s^{-1/2}$ in the formula for $A_p^{(c)}$ comes from the change of
	variable $z=x^s$. Indeed, if $z_*(\varepsilon)=x_*(\varepsilon)^s$, then
	\[
		1-\frac{z}{z_*(\varepsilon)}
		=
		1-\Bigl(\frac{x}{x_*(\varepsilon)}\Bigr)^s
		=
		s\Bigl(1-\frac{x}{x_*(\varepsilon)}\Bigr)
		+
		O\!\left(\Bigl(1-\frac{x}{x_*(\varepsilon)}\Bigr)^2\right),
	\]
	so the coefficient of $\sqrt{1-z/z_*(\varepsilon)}$ is
	$s^{-1/2}\kappa(\varepsilon)$, up to the fixed choice of square-root branch.

\subsection{Role in the spectral problem}
\label{subsec:role-spectral}

Proposition~\ref{prop:xi-star-system} identifies the first obstruction to analytic continuation of the
Taylor sheet: at a simple characteristic point the continuation develops a nondegenerate fold, hence a
square--root branch point on the analytic boundary $|x|=\rho_*(\zeta)$.  Under the dominant $s$--orbit
hypothesis (Definition~\ref{def:dominant-orbit}), this singularity governs the large--mode behavior of
the coefficients $R_p(m;\zeta)$ in \eqref{eq:Sp-expansion} via standard transfer theorems
(summarized in Proposition~\ref{prop:transfer-dominant}). Along a critical approach, the branch-point parameters and transfer amplitudes admit the limits presented in Corollary~\ref{cor:transfer-critical-data}.

These coefficient asymptotics underlie the spectral analysis: in each symmetry block
$H^{(q)}$, they control the leading part of the analytic Gram vectors (Proposition~\ref{prop:analytic-gram})
and hence determine which directions in $\ell^2$ can become large as $\rho_*(\zeta)\downarrow1$.
The quantitative implementation of this mechanism is carried out in
Section~\ref{sec:universal-instability}.

% =======================

\section{Conditional local universality of logarithmic instability}
\label{sec:universal-instability}

% =======================

This section establishes the operator-level rank-one extraction and its
spectral consequence, namely Theorem~\ref{thm:universality}. Section~\ref{sec:alg-criticality} provides the analytic framework for the spectral problem: along a transversal subcritical
approach to a simple critical point, the inverse map
\(U(\,\cdot\,;\zeta(\varepsilon))\) has a unique dominant \(s\)-orbit of
simple square-root branch points, and the corresponding coefficient
asymptotics are uniform. The task here is to convert that local branch-point parameters
into a precise operator statement for the renormalized symmetry blocks of the mixed Hessian.

The proof of the finite-\(N\) result has three steps. First, we introduce the
logarithmic scale \(L(\varepsilon)\) associated with the moving dominant
orbit. Second, we perform a fixed diagonal renormalization that places the
block Gram operators in a fixed weighted Hilbert-space framework up to criticality.
Third, we insert the uniform transfer asymptotics into the Gram
representation and extract a single rank-one logarithmic term, with the
remainder bounded in operator norm. The section concludes with an axiomatic
\(N=\infty\) extension.

\begin{assumption}[Standing hypotheses for the critical approach]
\label{ass:standing-hypotheses}
	Throughout Section~\ref{sec:universal-instability} we work along a \(C^1\) path
	\(\varepsilon\mapsto\zeta(\varepsilon)\), \(\varepsilon\downarrow0\), such that:
	\begin{enumerate}[label=\textup{(H\arabic*)},leftmargin=2.4em]
		\item \emph{Subcritical simple limit.}
		\(\zeta(\varepsilon)\to\zeta_c\in\mathcal C\), where \(\zeta_c\) is a
		simple critical point, and \(\zeta(\varepsilon)\) is subcritical for all
		sufficiently small \(\varepsilon>0\).

		\item \emph{Scalar transversality / parametrization.}
		The path satisfies the nondegeneracy condition of
		Definition~\ref{def:transversal-path}. Equivalently, after
		reparametrization we may and do assume
		\[
			\rho_*(\zeta(\varepsilon))=1+\varepsilon,
			\qquad 0<\varepsilon<\varepsilon_0.
		\]

		\item \emph{Dominant regime.}
		For all sufficiently small \(\varepsilon>0\), the Taylor branch lies in the
		dominant regime of Definition~\ref{def:dominant-orbit}, with a unique
		dominant \(s\)-orbit of simple branch points.

		\item \emph{Uniform square-root continuation.}
		The path satisfies Definition~\ref{def:uniform-delta}.
	\end{enumerate}
\end{assumption}

Items \textup{(H1)}--\textup{(H2)} describe the local algebraic setup established in
Section~\ref{sec:alg-criticality}: a simple critical point together with a
transversal subcritical parametrization. By contrast, \textup{(H3)}--\textup{(H4)}
contain the additional analytic assumptions. These require one to single out a
unique dominant orbit, keep it separated from the other singularities, and
control the continuation of the Taylor branch uniformly near the dominant orbit.
Under these hypotheses, the operator-theoretic conclusion proved below is a
local one. What remains open here is a global continuation result guaranteeing these assumptions on an arbitrary polynomial leaf.

Since the main theorem is conditional, we present the minimal guide needed to
apply it on a concrete polynomial leaf: (i) Locate a simple critical point by
solving the characteristic system from Section~\ref{sec:alg-criticality} and
checking the nondegeneracy conditions in Definition~\ref{def:simple-critical}; (ii) Choose a transversal subcritical approach by
verifying the scalar transversality condition of Definition~\ref{def:transversal-path},
so that \(\rho_*(\zeta(\varepsilon))\downarrow1\) monotonically along the chosen path; (iii) Establish dominant-orbit separation by checking that the Taylor branch has a unique dominant \(s\)-orbit of simple branch points and that no competing singularity of the same modulus enters the continuation region; (iv) Verify the uniform continuation properties stated in
Definition~\ref{def:uniform-delta}. The first part of this verification is local and algebraic, while the second is a genuinely uniform analytic criterion. Only after these checks can one pass to the operator statement and extract the logarithmic instability.

\subsection{Standing critical geometry and the logarithmic scale}
\label{subsec:standing-eta}

Fix $s\ge1$ and let $\varepsilon\mapsto\zeta(\varepsilon)$,
$\varepsilon>0$, be a path of subcritical parameters approaching a
simple critical point $\zeta_c\in\mathcal C$ as
$\varepsilon\downarrow0$, in the sense of
Section~\ref{sec:alg-criticality} and
Corollary~\ref{cor:transfer-critical-data}.
In particular, there is a unique dominant orbit
$\mathcal O_*(\varepsilon)$
(Definition~\ref{def:dominant-orbit}) with representative point
$x_*(\varepsilon)$ on the Taylor sheet and modulus
$\rho_*(\varepsilon)=|x_*(\varepsilon)|>1$
such that $\rho_*(\varepsilon)\downarrow1$ as $\varepsilon\downarrow0$.
For any $\rho\in(1,\rho_*(\varepsilon))$ set
\begin{equation}\label{eq:eta-zeta-rho}
	\eta(\zeta(\varepsilon);\rho)
	:=\bigl(\rho\,\rho_*(\varepsilon)\bigr)^{-2s}\in(0,1).
\end{equation}
We adopt the midpoint cutoff
\begin{equation}\label{eq:rho-choice-midpoint}
	\rho(\varepsilon):=\frac{1+\rho_*(\varepsilon)}{2},
\end{equation}
and abbreviate
\begin{equation}\label{eq:eta-eps-abbrev}
	\eta(\varepsilon):=\eta(\zeta(\varepsilon);\rho(\varepsilon)),
	\qquad
	L(\varepsilon):=\sum_{m\ge0}\frac{\eta(\varepsilon)^m}{m+1}.
\end{equation}

\begin{lemma}[Asymptotics of $L$ under the midpoint cutoff]
	\label{lem:Lambda-asymp}
	As $\varepsilon\downarrow0$,
	\begin{equation}\label{eq:Lambda-asymp-statement}
		1-\eta(\varepsilon)
		=3s\bigl(\rho_*(\varepsilon)-1\bigr)
		+O\bigl((\rho_*(\varepsilon)-1)^2\bigr),
		\qquad
		L(\varepsilon)
		=\log\!\frac{1}{\rho_*(\varepsilon)-1}+O(1).
	\end{equation}
	In particular, if the path is parametrized so that
	$\rho_*(\varepsilon)-1=\varepsilon+o(\varepsilon)$
	(as in Section \ref{sec:alg-criticality}), then
	$L(\varepsilon)=\log(1/\varepsilon)+O(1)$.
\end{lemma}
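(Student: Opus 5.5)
This is an elementary computation in two steps: first the Taylor expansion of \(\eta\) in \(\delta:=\rho_*(\varepsilon)-1\), then the closed form of the series \(L\).

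\textbf{Step 1: expansion of \(1-\eta\).} With the midpoint cutoff \eqref{eq:rho-choice-midpoint} we have \(\rho(\varepsilon)=1+\delta/2\). Hence
\[
\rho(\varepsilon)\rho_*(\varepsilon)=(1+\delta/2)(1+\delta)=1+\tfrac32\delta+\tfrac12\delta^2 .
\]
By \eqref{eq:eta-zeta-rho},
\[
\eta(\varepsilon)=\bigl(1+\tfrac32\delta+O(\delta^2)\bigr)^{-2s}.
\]
For fixed \(s\), the binomial expansion (or the mean value theorem applied to \(t\mapsto t^{-2s}\) near \(t=1\)) gives \(\eta=1-3s\delta+O(\delta^2)\). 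This is the first claim. In particular \(\eta\in(0,1)\), and \(1-\eta\asymp\delta\) for small \(\delta>0\).

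\textbf{Step 2: closed form of \(L\).} For \(0<\eta<1\),
\[
L=\sum_{m\ge0}\frac{\eta^m}{m+1}=\frac1\eta\log\frac{1}{1-\eta}.
\]
Since \(\eta\to1\), we have \(1/\eta=1+O(\delta)\). Step 1 gives
\[
\log\frac1{1-\eta}=\log\frac1\delta-\log\bigl(3s+O(\delta)\bigr)=\log\frac1\delta+O(1).
\]
Multiplying, the error \(O(\delta)\cdot\log(1/\delta)=o(1)\) is absorbed. Therefore \(L=\log(1/\delta)+O(1)\), which is the second claim.

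\textbf{The "in particular" clause.} If \(\delta=\varepsilon+o(\varepsilon)\), then \(\log(1/\delta)=\log(1/\varepsilon)-\log(1+o(1))=\log(1/\varepsilon)+o(1)\). Under (H2) we even have \(\delta=\varepsilon\) exactly.

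No step here is a real obstacle. The only point needing care is that the \(O(\cdot)\) constants depend on \(s\), which is fixed on the leaf. One must also check that the cross term \((1/\eta-1)\log(1/(1-\eta))\) vanishes as \(\delta\downarrow0\); it does, since it behaves like \(\delta\log(1/\delta)\to0\).
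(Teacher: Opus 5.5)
Your proof is correct and follows the paper's argument essentially line by line. Like the paper, you expand $\rho\rho_*=1+\tfrac32\delta+O(\delta^2)$ to get $\eta=1-3s\delta+O(\delta^2)$, and then use the identity $L=-\eta^{-1}\log(1-\eta)$; your explicit check that the factor $1/\eta$ costs only $O(\delta\log(1/\delta))=o(1)$ is a step the paper leaves implicit.
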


\begin{proof}
	Write $\delta:=\rho_*(\varepsilon)-1>0$. Since
	$\rho(\varepsilon)=\tfrac12(1+\rho_*(\varepsilon))$,
	\[
		\rho(\varepsilon)\rho_*(\varepsilon)
		=\frac{(1+\rho_*(\varepsilon))\rho_*(\varepsilon)}{2}
		=1+\tfrac{3}{2}\delta+O(\delta^2).
	\]
	Hence
	$\eta(\varepsilon)=(\rho(\varepsilon)\rho_*(\varepsilon))^{-2s}
		=1-3s\,\delta+O(\delta^2)$.
	The identity
	$L(\varepsilon)=-\eta(\varepsilon)^{-1}\log(1-\eta(\varepsilon))$
	then gives
	$L(\varepsilon)
		=\log\bigl(1/(1-\eta(\varepsilon))\bigr)+O(1)
		=\log(1/\delta)+O(1)$.
\end{proof}

The appendix also uses a closely related logarithmic scale defined directly
from the dominant modulus. It differs from \(L(\varepsilon)\) by only a bounded
term as \(\varepsilon\downarrow0\).

\subsection{Block factorization and weighted renormalization}\label{subsec:weighted-space}

Fix $\beta>0$.  Along the critical approach we also need a uniform bound on the Taylor branch on the midpoint circle.
With $\rho(\varepsilon)$ from \eqref{eq:rho-choice-midpoint} set
\[
	M(\varepsilon):=\sup_{|x|=\rho(\varepsilon)}|U(x;\zeta(\varepsilon))|.
\]
\begin{lemma}[Cauchy bounds on the midpoint circle]
	\label{lem:cauchy-Rp}
	Let $R_p(m;\varepsilon)$ be defined as in \eqref{eq:Sp-expansion} and let $M(\varepsilon)$ be the
	supremum of $|U(\cdot;\zeta(\varepsilon))|$ on the midpoint circle $|x|=\rho(\varepsilon)$.
	Then for all $p\ge1$ and $m\ge0$,
	\[
		|R_p(m;\varepsilon)|\le M(\varepsilon)^p\,\rho(\varepsilon)^{-ms}.
	\]
	In particular, there exist $\varepsilon_0>0$ and a constant $M_0<\infty$ such that
	$M(\varepsilon)\le M_0$ for all $0<\varepsilon<\varepsilon_0$.
\end{lemma}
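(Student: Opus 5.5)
The plan has two parts: Cauchy's estimate, and a compactness argument for the uniform bound on $M(\varepsilon)$.

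\emph{The Cauchy bound.} Since $\rho(\varepsilon)=\tfrac12(1+\rho_*(\varepsilon))<\rho_*(\varepsilon)$, the Taylor branch $U(\cdot;\zeta(\varepsilon))$ is analytic on a neighbourhood of the closed disk $|x|\le\rho(\varepsilon)$. Hence $U^p$ is analytic there, and $\sup_{|x|=\rho(\varepsilon)}|U^p|\le M(\varepsilon)^p$. By \eqref{eq:Sp-expansion}, $R_p(m;\varepsilon)$ is the Taylor coefficient of $x^{ms}$ in $U^p$. Cauchy's formula
\[
R_p(m;\varepsilon)=\frac{1}{2\pi i}\oint_{|x|=\rho(\varepsilon)}\frac{U(x;\zeta(\varepsilon))^p}{x^{ms+1}}\,dx
\]
then gives $|R_p(m;\varepsilon)|\le M(\varepsilon)^p\rho(\varepsilon)^{-ms}$ immediately. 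The $s$-symmetry of Lemma~\ref{lem:s-symmetry} only guarantees that this is the right indexing.

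\emph{The uniform bound on $M(\varepsilon)$: main obstacle.} Here the midpoint circle approaches the moving dominant singularity: the gap $\rho_*(\varepsilon)-\rho(\varepsilon)=\varepsilon/2$ goes to zero, so no compactness argument inside a fixed disk of analyticity is available. I would use the local Puiseux structure instead of interior estimates.

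First I would cover the circle $|x|=\rho(\varepsilon)$ by two regions.
\begin{itemize}
\item[(a)] Points at distance $\ge\delta_0$ from the dominant orbit $\{e^{2\pi ij/s}x_*(\varepsilon)\}$. There the uniform $\Delta$-continuation of Definition~\ref{def:uniform-delta}, combined with continuity of the algebraic function $U$ in $(x,\zeta)$ away from its branch points, gives $|U|\le C$ uniformly for small $\varepsilon$. Concretely, the continued sheet on a compact set avoiding the discriminant locus depends continuously on $\zeta$ near $\zeta_c$, by the implicit function theorem applied to $F$, since $\partial_yF\ne0$ there.
\item[(b)] Points within $\delta_0$ of some $e^{2\pi ij/s}x_*(\varepsilon)$. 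By the $s$-symmetry it suffices to treat $j=0$. The uniform local expansion $U=\lambda(\varepsilon)+\kappa(\varepsilon)\sqrt{x-x_*(\varepsilon)}+O(x-x_*(\varepsilon))$ holds on a disk of fixed radius (slit or $\Delta$-shaped), with $O$-constants uniform in $\varepsilon$. Corollary~\ref{cor:transfer-critical-data} gives $\lambda(\varepsilon)\to\lambda_c$, so together with boundedness of $\kappa(\varepsilon)$ this yields $|U|\le|\lambda_c|+1+C\delta_0^{1/2}$ there.
\end{itemize}
Taking the maximum of the two bounds gives $M_0$.

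If Definition~\ref{def:uniform-delta} directly supplies a uniform bound for $U$ on the $\Delta$-domain, I would cite it and bypass (a). The only real point is that the midpoint circle lies inside the uniform $\Delta$-domains, because $\rho(\varepsilon)<|x_*(\varepsilon)|$.
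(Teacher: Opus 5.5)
Your Cauchy-estimate step is the same as the paper's. For the uniform bound on $M(\varepsilon)$ you take a genuinely different route.

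\textbf{The paper's route.} The paper uses the option you list only as a fallback. It passes to $z=x^s$ and invokes clause (U1) of Definition~\ref{def:uniform-delta}. That clause gives analyticity of $\widehat U$ on the indented disk $\Omega_\varepsilon\supset\overline{D(0,\rho(\varepsilon)^s)}$ and the uniform contour bound $\sup_{\Gamma_\varepsilon}|\widehat U|\le M$. The maximum modulus principle then gives $M_0=M$ in one line.

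\textbf{Your route.} Your two-region argument never uses the global contour bound in (U1). Near the orbit you use only the local uniform Puiseux control (U2). Away from the orbit you use information at $\zeta_c$ itself: by the definition of $\mathcal C$, the $\zeta_c$-sheet has no singularities on $|x|=1$ off the orbit, hence extends to a uniform strip there. A uniform implicit function theorem then propagates this to nearby $\zeta$. This shows that (U1) is stronger than this lemma needs. The cost is a compactness/IFT argument in place of a one-line maximum-modulus step.

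\textbf{What part (a) needs to be rigorous.}
\begin{itemize}
\item The relevant condition is not that the compact set avoid the discriminant locus; branch points of other sheets may lie on the unit circle. What you need is that the chosen root be simple along the set, i.e.\ $\partial_yF(U(x;\zeta_c),x;\zeta_c)\neq0$.
\item This is not automatic for general algebraic functions. A sheet can pass analytically through a node, and a node can split into nearby branch points under perturbation.
\item For this $F$ it does hold, and you should say why. If $F=\partial_yF=0$ at a point with $x\neq0$, then $\partial_xF=-y/x\neq0$ (as in the paper's computation), so the curve is locally a graph $x=x(y)$ with $x'(y_0)=0$. No analytic sheet $y=g(x)$ can pass through such a point, since $x(g(x))=x$ would force $x'(y_0)g'(x_0)=1$. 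Hence wherever the $\zeta_c$-sheet is analytic with $x\ne0$, the root is simple.
\item Fix the order of constants: choose $\delta_0$ from (U2) first. Then choose $\varepsilon_0$ from the IFT on the compact set at distance $\ge\delta_0/2$ from the critical orbit, using $x_*(\varepsilon)\to x_{*,c}$.
\item Identify the perturbed root with $U(\cdot;\zeta(\varepsilon))$ by uniqueness of continuation from the origin.
\end{itemize}
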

\begin{proof}
	See Appendix~\ref{app:gram-bounds}.
\end{proof}

Since $\lambda(\varepsilon)\to\lambda_c$ (Corollary~\ref{cor:transfer-critical-data}), there exists
$\varepsilon_0>0$ and constants $M_0,L_0<\infty$ such that
$M(\varepsilon)\le M_0$ and $|\lambda(\varepsilon)|\le L_0$ for $0<\varepsilon<\varepsilon_0$.
Fix $\alpha>\max\{M_0,L_0\}$.
The inequality $\alpha>M_0$ is used for the midpoint-circle contribution and the
finite initial segment of the Gram kernel, while $\alpha>L_0$ is used for the
critical spike envelope coming from the amplitudes $A_p(\varepsilon)$. Thus a
single choice $\alpha>\max\{M_0,L_0\}$ controls both mechanisms uniformly.

For each congruence class $q\in\{1,\dots,s\}$, set $p_j:=q+js$ and
define the diagonal weight operator
\begin{equation}\label{eq:weights}
	w_j:=p_j^{\frac32+\beta}\,\alpha^{p_j},
	\qquad
	\mathcal W:=\diag(w_0,w_1,\dots).
\end{equation}

Define the weighted column Hilbert space
\begin{equation}\label{eq:weighted-Hilbert}
	\mathcal H_\beta
	:= \Bigl\{x=(x_j)_{j\ge0}:\
	\|x\|_{\mathcal H_\beta}^2
	:=\sum_{j\ge0}|x_j|^2 w_j^2<\infty\Bigr\},
	\qquad
	\langle x,y\rangle_{\mathcal H_\beta}
	:=\sum_{j\ge0}\overline{x_j}\,y_j\,w_j^2.
\end{equation}
Thus $\mathcal H_\beta$ is equipped with the standard complex Hilbert-space
convention, linear in the second slot. Then
$\mathcal W:\mathcal H_\beta\to\ell^2(\N_0)$, $(\mathcal W x)_j=w_jx_j$, is unitary, and we identify
$\mathcal H_\beta$ with $\ell^2(\N_0)$ via $\mathcal W$.

The polynomial factor $p_j^{3/2}$ compensates the $m^{-3/2}$ decay
from the transfer asymptotics (Lemma~\ref{lem:uniform-transfer}), and
the exponential factor $\alpha^{p_j}$ absorbs the growth
$|\lambda(\varepsilon)|^{p_j}$ in the amplitudes
$A_{p_j}(\varepsilon)$. Together they ensure that the conjugated Gram
vectors lie in $\ell^2(\N_0)$ uniformly in $\varepsilon$.

The admissible choice of weights is therefore part of the operator framework.  For two fixed admissible choices of \((\alpha,\beta)\), the quantitative coefficient \(\Gamma^{(q)}\) and the concrete realization of the renormalized block may change.  What remains invariant in the argument below is the qualitative conclusion that the extracted singular part has rank one and grows on the logarithmic scale, whereas the higher variational levels stay bounded.

\begin{definition}[Weighted renormalization]\label{def:conjugated-gram}
	Fix $q\in\{1,\dots,s\}$ and write $p_j=q+js$. For finitely supported sequences
	$c=(c_k)_{k\ge0}$ define the synthesis map $V^{(q)}(\zeta)c$ by
	\[
		\bigl(V^{(q)}(\zeta)c\bigr)_j
		:=
		\sum_{k\ge0} v^{(p_k)}_j(\zeta)\,c_k,
		\qquad j\ge0,
	\]
	where $v^{(p_k)}(\zeta)$ are the Gram vectors from \eqref{eq:Gram-alg-vectors}. Since
	$v^{(p_k)}_j(\zeta)=0$ for $j<k$, the sum is finite for each fixed $j$. With this convention, the identities
	\begin{equation}\label{eq:Hq-VV}
		G^{(q)}(\zeta)=V^{(q)}(\zeta)^*V^{(q)}(\zeta),
		\qquad
		H^{(q)}(\zeta)=V^{(q)}(\zeta)V^{(q)}(\zeta)^*
	\end{equation}
	are understood first on finitely supported sequences, equivalently at the level of quadratic forms.
	Let $\mathcal W=\diag(w_0,w_1,\dots)$ be the diagonal weight operator from \eqref{eq:weights}.
	Define the renormalized synthesis operator
	\[
		\widetilde V^{(q)}(\zeta):=V^{(q)}(\zeta)\,\mathcal W^{-1},
	\]
	initially on finitely supported sequences, and the corresponding renormalized Gram and Hessian forms
	\begin{equation}\label{eq:GHtilde}
		\widetilde G^{(q)}(\zeta):=\widetilde V^{(q)}(\zeta)^*\,\widetilde V^{(q)}(\zeta),
		\qquad
		\widetilde H^{(q)}(\zeta):=\widetilde V^{(q)}(\zeta)\,\widetilde V^{(q)}(\zeta)^*.
	\end{equation}
	After Proposition~\ref{prop:subcrit-compact}, these forms extend to bounded positive operators on
	$\ell^2(\N_0)$ in the subcritical regime.
\end{definition}

\begin{remark}[Hessian/Gram spectral equivalence]
	\label{rem:HG-spectrum}
	From~\eqref{eq:GHtilde} it follows that \(\widetilde H^{(q)}(\varepsilon)\) and
	\(\widetilde G^{(q)}(\varepsilon)\) share the same nonzero spectrum with
	multiplicities. In particular, in compact regimes
	(e.g.\ Proposition~\ref{prop:subcrit-compact}), their nonzero eigenvalues
	coincide in nonincreasing order, and the min--max levels
	\(\mu_k^{(q)}(\varepsilon)\) may be computed from either block.
\end{remark}

	For vectors \(u,v\) in a complex Hilbert space we write
	\[
		(u\otimes v)\xi:=\langle v,\xi\rangle\,u.
	\]
	In particular, \(u\otimes u\) is a positive rank--one operator.

Now we discuss subcritical compactness of the renormalized operators, which is a consequence of the choice of weights and the uniform bound on the Taylor branch on the midpoint circle.

\begin{proposition}[Subcritical compactness]\label{prop:subcrit-compact}
	Fix $q\in\{1,\dots,s\}$ and $\beta>0$, and let $\alpha$ be chosen as above.
	For $0<\varepsilon<\varepsilon_0$ the renormalized synthesis map
	$\widetilde V^{(q)}(\varepsilon)$ extends to a Hilbert--Schmidt operator on $\ell^2(\N_0)$.
	In particular, the quadratic forms \eqref{eq:GHtilde} extend uniquely to positive trace-class operators
	$\widetilde G^{(q)}(\varepsilon)$ and $\widetilde H^{(q)}(\varepsilon)$ on $\ell^2(\N_0)$. Hence they are compact and bounded.
\end{proposition}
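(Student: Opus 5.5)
The plan is to prove the Hilbert--Schmidt property directly from the matrix entries of $\widetilde V^{(q)}(\varepsilon)$, estimating them with the Cauchy bounds of Lemma~\ref{lem:cauchy-Rp}. The trace-class and extension statements then follow from standard operator facts. Fix $q$ and $0<\varepsilon<\varepsilon_0$, and write $p_j=q+js$. By Definition~\ref{def:conjugated-gram} and \eqref{eq:Gram-alg-vectors}, the matrix of $\widetilde V^{(q)}=V^{(q)}\mathcal W^{-1}$ in the standard basis has entries
\[
\widetilde V_{jk}=\frac{v^{(p_k)}_j}{w_k}
=\frac{p_j}{\sqrt{p_k}\,p_k^{3/2+\beta}\alpha^{p_k}}\,R_{p_k}(j-k;\varepsilon)\quad (j\ge k),
\qquad \widetilde V_{jk}=0\quad (j<k).
\]
Here I use $(p_j-p_k)/s=j-k$. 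Since $\|\widetilde V\|_{HS}^2=\sum_{j,k}|\widetilde V_{jk}|^2$, it suffices to show this double sum is finite.

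\textbf{Bounding the double sum.} Lemma~\ref{lem:cauchy-Rp} gives $|R_{p_k}(m;\varepsilon)|\le M(\varepsilon)^{p_k}\rho(\varepsilon)^{-ms}$, and $M(\varepsilon)\le M_0$. Writing $j=k+m$, so that $p_j=p_k+ms$, I obtain
\[
\sum_{j,k}|\widetilde V_{jk}|^2\le \sum_{k\ge0}\frac{M_0^{2p_k}}{p_k^{4+2\beta}\alpha^{2p_k}}\sum_{m\ge0}(p_k+ms)^2\rho(\varepsilon)^{-2ms}.
\]
Because $\varepsilon>0$ is fixed and the parameter is subcritical, $\rho(\varepsilon)=\tfrac12(1+\rho_*(\varepsilon))>1$. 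The inner sum is therefore at most $C(\varepsilon)\,p_k^2$, with $C(\varepsilon)=2\sum_m(1+ms)^2\rho(\varepsilon)^{-2ms}<\infty$, using $(p_k+ms)^2\le 2p_k^2(1+ms)^2$. The outer sum is then bounded by $C(\varepsilon)\sum_k p_k^{-2-2\beta}(M_0/\alpha)^{2p_k}$. This converges because $\alpha>M_0$, and the polynomial factor alone would already suffice since $\beta>0$. Hence $\widetilde V^{(q)}(\varepsilon)$, defined first on finitely supported sequences, extends by continuity to a Hilbert--Schmidt operator on $\ell^2(\N_0)$. Its action on finitely supported $c$ agrees with the entrywise formula of Definition~\ref{def:conjugated-gram}, since each row sum there is finite.

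\textbf{Consequences for $\widetilde G^{(q)}$ and $\widetilde H^{(q)}$.} With $\widetilde V$ Hilbert--Schmidt, both $\widetilde V^*\widetilde V$ and $\widetilde V\widetilde V^*$ are positive trace-class operators, with trace $\|\widetilde V\|_{HS}^2$. On finitely supported sequences their quadratic forms coincide with the forms \eqref{eq:GHtilde}, since e.g.\ $\langle c,\widetilde V^*\widetilde V c\rangle=\|\widetilde Vc\|^2$. Finitely supported sequences are dense in $\ell^2(\N_0)$, and a bounded form is determined by its values on a dense subspace, so the extensions are unique. Compactness and boundedness are immediate from the trace-class property.

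The only point I expect to need care is the bookkeeping that the entrywise series defining $V^{(q)}c$ and the algebraic identities \eqref{eq:Hq-VV} are compatible with the Hilbert-space extension. This reduces to the finiteness of each row sum, noted above, together with the density argument. No uniformity in $\varepsilon$ is needed here, since the bound may degenerate as $\rho(\varepsilon)\downarrow1$; the proposition only asserts the property for each fixed subcritical $\varepsilon$.
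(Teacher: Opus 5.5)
Your proposal is correct and follows essentially the same route as the paper. Both arguments bound $R_p(m;\varepsilon)$ by the midpoint-circle Cauchy estimate with $M(\varepsilon)\le M_0<\alpha$, control each column norm of $\widetilde V^{(q)}$ through the fixed-$\varepsilon$ geometric sum, and conclude summability over columns. Your inner-sum bound $C(\varepsilon)p_k^2$ gives the slightly sharper envelope $p_k^{-2-2\beta}$ in place of the paper's $p_j^{-1-2\beta}$, and you spell out the extension and uniqueness bookkeeping for the forms \eqref{eq:GHtilde}, which the paper leaves implicit.
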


\begin{proof}
	Work on the midpoint circle $|x|=\rho(\varepsilon)$ from \eqref{eq:rho-choice-midpoint}.
	By definition of $M(\varepsilon)$ and Cauchy's estimate (Lemma~\ref{lem:cauchy-Rp}),
	for all $p\ge1$ and $m\ge0$,
	\[
		|R_p(m;\varepsilon)|
		\le M(\varepsilon)^p\,\rho(\varepsilon)^{-sm}
		\le M_0^p\,\rho(\varepsilon)^{-sm}.
	\]
	Using the explicit Gram vectors \eqref{eq:Gram-alg-vectors} and the bound $M_0<\alpha$,
	the squared $\ell^2$--norm of the $j$-th renormalized column of $\widetilde V^{(q)}$ satisfies
	\[
		\bigl\|\widetilde v^{(p_j)}(\varepsilon)\bigr\|_{\ell^2}^2
		\lesssim
		\frac{M_0^{2p_j}}{w_j^2}
		\sum_{m\ge0}\frac{(p_j+sm)^2}{p_j}\,\rho(\varepsilon)^{-2sm}.
	\]
	For fixed $\varepsilon>0$ the geometric sum satisfies
	\[
		\sum_{m\ge0}\frac{(p_j+sm)^2}{p_j}\,\rho(\varepsilon)^{-2sm}
		\le C_\varepsilon(1+p_j^2),
	\]
	with $C_\varepsilon<\infty$. Therefore
	\[
		\bigl\|\widetilde v^{(p_j)}(\varepsilon)\bigr\|_{\ell^2}^2
		\lesssim
	C_\varepsilon\,p_j^{-1-2\beta}\,\Bigl(\frac{M_0}{\alpha}\Bigr)^{2p_j}.
	\]
	The right-hand side is summable in $j$ since $M_0/\alpha<1$ and $p_j\to\infty$.
	Hence $\widetilde V^{(q)}(\varepsilon)$ is Hilbert--Schmidt, and the trace-class/compactness conclusions follow.
\end{proof}

\begin{remark}[Min--max levels]\label{rem:rayleigh}
	We measure the spectrum of bounded positive self-adjoint blocks by the
	Courant--Fischer min--max levels $\mu_k(A)$ from
	Definition~\ref{def:variational-eigs}. When the block is compact (for
	example in the subcritical regime of
	Proposition~\ref{prop:subcrit-compact}), these min--max levels coincide
	with the usual eigenvalues in nonincreasing order. For the renormalized
	Hessian and Gram blocks the same nonzero levels are obtained whenever both
	operators are defined; see Remark~\ref{rem:HG-spectrum}.
\end{remark}

\begin{definition}[Variational eigenvalues]\label{def:variational-eigs}
	Let $A$ be a bounded self-adjoint operator on a Hilbert space $\mathcal H$.
	For $k\ge1$ define the Courant--Fischer levels
	\[
		\mu_k(A)
		:=\inf_{\substack{L\subset \mathcal H\\ \dim L=k-1}}\
		\sup_{\substack{u\perp L\\ \|u\|=1}}
		\langle u,Au\rangle.
	\]
	When $A$ is compact and positive, $(\mu_k(A))_{k\ge1}$ coincides with
	the usual eigenvalues of $A$ in nonincreasing order, counted with
	multiplicity. In general, $(\mu_k(A))_{k\ge1}$ presents the ordered
	min--max levels of the quadratic form.
	We apply this to $A=\widetilde G^{(q)}(\varepsilon)$ and write
	$\mu_k^{(q)}(\varepsilon):=\mu_k(\widetilde G^{(q)}(\varepsilon))$.
	For each fixed subcritical $\varepsilon>0$, the same sequence is obtained
	from $\widetilde H^{(q)}(\varepsilon)$, since in that regime the blocks are
	compact and $\widetilde H^{(q)}(\varepsilon)$ and
	$\widetilde G^{(q)}(\varepsilon)$ have the same nonzero eigenvalues; see
	Proposition~\ref{prop:subcrit-compact} and Remark~\ref{rem:HG-spectrum}.
\end{definition}

We next consider the tail Gram structure that will be exploited in the
rank--one extraction.
In view of Definition~\ref{def:conjugated-gram}, the renormalized Hessian
$\widetilde H^{(q)}$ and the renormalized Gram operator $\widetilde G^{(q)}$
share the same nonzero spectrum.  We therefore analyze $\widetilde G^{(q)}$
via an explicit entry formula for $G^{(q)}$ in which the Gram index is the
tail parameter~$m$.

\begin{proposition}[Gram operator entry formula]
	\label{prop:analytic-gram}
	Assume $\rho_*(\zeta)>1$. For $j_1\le j_2$ set $\Delta:=j_2-j_1$. Then
	\begin{equation}\label{eq:gram-entry}
		G^{(q)}_{j_1j_2}(\zeta)
		=
		\sum_{m\ge0}
		\frac{(p_{j_2}+sm)^2}{\sqrt{p_{j_1}p_{j_2}}}\,
		\overline{R_{p_{j_1}}(m+\Delta;\zeta)}\,R_{p_{j_2}}(m;\zeta),
	\end{equation}
	and $G^{(q)}_{j_1j_2}(\zeta)
		=\overline{G^{(q)}_{j_2j_1}(\zeta)}$ for $j_1>j_2$.
\end{proposition}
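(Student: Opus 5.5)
The plan is to compute the matrix entries of \(G^{(q)}=V^{(q)*}V^{(q)}\) directly from the explicit columns in \eqref{eq:Gram-alg-vectors}, and then reindex the summation by the tail parameter. By Definition~\ref{def:conjugated-gram}, the \(k\)-th column of the synthesis map \(V^{(q)}(\zeta)\) is the Gram vector \(v^{(p_k)}(\zeta)\). With the Hilbert-space convention fixed in \eqref{eq:weighted-Hilbert} (antilinear in the first slot), the Gram entries are therefore
\[
G^{(q)}_{j_1j_2}(\zeta)=\langle v^{(p_{j_1})},v^{(p_{j_2})}\rangle_{\ell^2}
=\sum_{j\ge0}\overline{v^{(p_{j_1})}_j}\,v^{(p_{j_2})}_j .
\]
This holds first formally, that is, at the level of quadratic forms on finitely supported sequences.

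Next I use the support structure. By \eqref{eq:Gram-alg-vectors}, \(v^{(p_k)}_j=0\) unless \(j\ge k\), and for \(j\ge k\) one has \(v^{(p_k)}_j=\frac{p_j}{\sqrt{p_k}}R_{p_k}(j-k;\zeta)\), since \((p_j-p_k)/s=j-k\). For \(j_1\le j_2\) only the indices \(j\ge j_2\) contribute. Writing \(j=j_2+m\) with \(m\ge0\) and \(\Delta=j_2-j_1\), I get
\[
p_j=p_{j_2}+sm,\qquad j-j_2=m,\qquad j-j_1=m+\Delta .
\]
The product of the two column entries is then
\[
\frac{(p_{j_2}+sm)^2}{\sqrt{p_{j_1}p_{j_2}}}\,\overline{R_{p_{j_1}}(m+\Delta;\zeta)}\,R_{p_{j_2}}(m;\zeta),
\]
because the prefactors \(p_j/\sqrt{p_k}\) are real. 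Summing over \(m\ge0\) gives exactly \eqref{eq:gram-entry}.

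The case \(j_1>j_2\) follows from the Hermitian symmetry \(\langle u,v\rangle=\overline{\langle v,u\rangle}\).

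The only genuine point, and the one I expect to need care, is that the series actually converges when \(\rho_*(\zeta)>1\), so that the entries are well defined and not merely formal. For this I fix \(\rho\in(1,\rho_*(\zeta))\) and set \(M:=\sup_{|x|=\rho}|U(x;\zeta)|<\infty\). The Cauchy estimate, exactly as in Lemma~\ref{lem:cauchy-Rp}, gives \(|R_p(m;\zeta)|\le M^p\rho^{-ms}\). Hence the \(m\)-th term is bounded by
\[
C_{j_1,j_2}\,(p_{j_2}+sm)^2\,\rho^{-s(2m+\Delta)} .
\]
This is a polynomial times a geometric factor with ratio \(\rho^{-2s}<1\), so the series converges absolutely. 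This also justifies the reindexing and the identification of the entry with the \(\ell^2\) inner product of the two columns, both of which lie in \(\ell^2\) by the same bound.
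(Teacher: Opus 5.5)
Your proposal is correct and follows essentially the same route as the paper: you identify $G^{(q)}_{j_1j_2}$ with the $\ell^2$ inner product of the Gram columns $v^{(p_{j_1})}$ and $v^{(p_{j_2})}$, restrict to the common support $j\ge j_2$, reindex by $j=j_2+m$, and justify absolute convergence by a Cauchy estimate on a circle of radius $\rho\in(1,\rho_*(\zeta))$. Your bound $|R_p(m;\zeta)|\le M^p\rho^{-ms}$ is simply an explicit form of the paper's constant $C_{p,\rho}\rho^{-sm}$.
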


\begin{proof}
	Fix any $\rho$ with $1<\rho<\rho_*(\zeta)$. Since $U(\cdot;\zeta)$ is analytic on $|x|<\rho_*(\zeta)$,
	Cauchy's estimate gives $|R_p(m;\zeta)|\le C_{p,\rho}\,\rho^{-sm}$ for each fixed $p$.
	Hence the series in \eqref{eq:gram-entry} converges absolutely. This justifies
	the reindexing below and the termwise computation of the inner product.
	By definition,
	$G^{(q)}_{j_1j_2}(\zeta)
		=\bigl\langle
		\bm v^{(p_{j_1})}(\zeta),\,
		\bm v^{(p_{j_2})}(\zeta)
		\bigr\rangle_{\ell^2}$,
	where $\bm v^{(p)}(\zeta)$ is the Gram vector from
	Proposition~\ref{prop:gram-scale}.
	Since $v^{(p)}_i=0$ unless $p_i\ge p$, the supports of
	$\bm v^{(p_{j_1})}$ and $\bm v^{(p_{j_2})}$ overlap at indices
	$i\ge j_2$ (assuming $j_1\le j_2$).
	Writing $i=j_2+m$ with $m\ge0$ and $\Delta:=j_2-j_1$, we have
	\[
		v^{(p_{j_1})}_i
		=\frac{p_i}{\sqrt{p_{j_1}}}\,R_{p_{j_1}}(m+\Delta;\zeta),
		\qquad
		v^{(p_{j_2})}_i
		=\frac{p_i}{\sqrt{p_{j_2}}}\,R_{p_{j_2}}(m;\zeta).
	\]
	Using $p_i=p_{j_2}+sm$ and summing over $m\ge0$ yields
	\eqref{eq:gram-entry}.
\end{proof}

\subsection{Uniform transfer on the dominant orbit}\label{subsec:standing-transfer}

\begin{definition}[Uniform continuation hypothesis near the dominant orbit]\label{def:uniform-delta}
	Let $\zeta(\varepsilon)$ be a $C^1$ path with $\zeta(0)=\zeta_c\in\mathcal C$ a simple critical point
	(Definition~\ref{def:simple-critical}). Assume that for $0\le\varepsilon\le\varepsilon_1$ the Taylor branch
	$U(\,\cdot\,;\zeta(\varepsilon))$ lies in the dominant regime of Definition~\ref{def:dominant-orbit} with a unique
	dominant $s$--orbit of simple branch points.
	Write $s=\gcd(s_1,\dots,s_N)$, set $z:=x^{s}$, and, for notational convenience, denote by
	$\widehat U(z;\varepsilon)$ the same Taylor sheet expressed in the variable $z$.
	Let $x_*(\varepsilon)$ be a representative dominant singularity and set $z_*(\varepsilon):=x_*(\varepsilon)^s$.
	Let $\lambda(\varepsilon)$ denote the corresponding branch value of the chosen Taylor sheet at $z_*(\varepsilon)$, that is,
	\[
		\lambda(\varepsilon):=
		\lim_{z\to z_*(\varepsilon)}\widehat U(z;\varepsilon),
	\]
	where $z$ is on the chosen Taylor sheet.
	We say that the path satisfies a \emph{uniform square-root continuation
condition} if there exist constants \(\varepsilon_0>0\),
\(\phi\in(0,\pi/2)\), \(\eta_0>0\), and \(B_0,C_0,c_0>0\) such that, for every
\(0<\varepsilon<\varepsilon_0\), the following properties hold:
	\begin{enumerate}[label=(U\arabic*),leftmargin=2.5em]
		    \item\label{U1} \emph{Uniform indented-domain continuation on an indented disk.}
		      Set
		      \[
			      R(\varepsilon):=|z_*(\varepsilon)|+\eta_0/2,
			      \qquad
			      \Omega_\varepsilon:=\{\,|z|<R(\varepsilon)\,\}\cap \Delta(z_*(\varepsilon),\phi),
		      \]
			    where \(\Delta(z_*(\varepsilon),\phi)\) is the indented domain from
		      Definition~\ref{def:delta-domain}. The Taylor sheet
		      \(\widehat U(\cdot;\varepsilon)\) admits analytic continuation to
		      \(\Omega_\varepsilon\). Let \(\Gamma_\varepsilon:=\partial\Omega_\varepsilon\),
		      oriented positively. Then \(\Gamma_\varepsilon\) is a simple closed contour
		      winding once around \(0\), the closed disk
		      \[
			      \overline{D\bigl(0,\rho(\varepsilon)^s\bigr)}
			      \subset \Omega_\varepsilon,
			      \qquad
			      \rho(\varepsilon):=\frac{1+\rho_*(\varepsilon)}{2},
		      \]
			  and \(z_*(\varepsilon)\notin \Gamma_\varepsilon\).
		      Moreover,
		      \[
			      M:=\sup_{0<\varepsilon<\varepsilon_0}\ \sup_{z\in\Gamma_\varepsilon}\ |\widehat U(z;\varepsilon)|<\infty.
		      \]
		\item\label{U2} \emph{Uniform Puiseux expansion to order $3/2$.}
		        For all $z\in\Delta(z_*(\varepsilon),\phi)$ with $|\delta_z|<\eta_0$, where
		      \[
			     \delta_z:=1-\frac{z}{z_*(\varepsilon)},
		      \]
		      one has a Puiseux expansion on the Taylor sheet of the form
		      \begin{equation}\label{eq:puiseux-uniform}
			      \widehat U(z;\varepsilon)
			      =
			      \lambda(\varepsilon)
			      +
				     s^{-1/2}\kappa(\varepsilon)\,\delta_z^{1/2}
			      +
				     c(\varepsilon)\,\delta_z
			      +
				    d(\varepsilon)\,\delta_z^{3/2}
			      +
			      E(z;\varepsilon),
		      \end{equation}
			    where $\delta_z^{1/2}$ is taken with $\arg\delta_z\in(-\pi,\pi)$, and the coefficients and remainder satisfy
		      \[
			      |\lambda(\varepsilon)|\ge c_0,\qquad |\kappa(\varepsilon)|\ge c_0,\qquad
			      |\lambda(\varepsilon)|+|\kappa(\varepsilon)|+|c(\varepsilon)|+|d(\varepsilon)|\le C_0,
		      \]
		      and $|E(z;\varepsilon)|\le C_0\,|\delta_z|^{2}$.
			  In addition, for each $p\ge1$, let $\mathcal R_p(\cdot;\varepsilon)$ denote the regular part in
		      the local expansion
		      \[
			      \widehat U(z;\varepsilon)^p
			      =
			      \lambda(\varepsilon)^p
			      +
			      b_p(\varepsilon)\,\delta_z^{1/2}
			      +
			      d_p(\varepsilon)\,\delta_z^{3/2}
			      +
			      \mathcal R_p(z;\varepsilon)
		      \]
		      on $\Omega_\varepsilon\cap\{|\delta_z|<\eta_0\}$, where $b_p(\varepsilon)$ and
		      $d_p(\varepsilon)$ are the corresponding Puiseux coefficients. We assume that each
		      $\mathcal R_p(\cdot;\varepsilon)$ extends holomorphically to the full disk
		      \[
			      |z|<R(\varepsilon)=|z_*(\varepsilon)|+\eta_0/2,
		      \]
		      and satisfies the uniform boundary bound
		      \[
			      \sup_{0<\varepsilon<\varepsilon_0}\ \sup_{|z|=R(\varepsilon)}
			      |\mathcal R_p(z;\varepsilon)|
			      \le B_0^{\,p}
			      \qquad (p\ge1).
		      \]
		      Here $\kappa(\varepsilon)$ is the square-root coefficient in the
		      original $x$-variable, so that
		      \[
			      U(x;\zeta(\varepsilon))
			      =
			      \lambda(\varepsilon)
			      +
			      \kappa(\varepsilon)\sqrt{1-x/x_*(\varepsilon)}
			      +
			      O\!\left(1-x/x_*(\varepsilon)\right).
		      \]
	\end{enumerate}
\end{definition}

The uniform continuation hypothesis of Definition~\ref{def:uniform-delta} plays
a central role in the proof of Theorem~\ref{thm:universality}.
Clause \textup{(U1)} gives the global continuation region and the contour bound needed for
Cauchy estimates on the midpoint circle. The first part of \textup{(U2)} gives the local
Puiseux expansion, with coefficients controlled uniformly in \(\varepsilon\). The second part of
\textup{(U2)}, namely the holomorphic continuation and boundary estimate for the regular part
\(\mathcal R_p\), is stronger than standard \(\Delta\)-analyticity near a simple square-root
branch point. It is not a consequence of the local branch-point analysis in
Proposition~\ref{prop:xi-star-system} or Lemma~\ref{lem:local-branch-data}. Rather, it is an
additional uniform assumption that must be verified on a concrete family. In the present paper it is
used only in Lemma~\ref{lem:uniform-transfer} to control the cutoff error uniformly in both \(p\) and
\(\varepsilon\).

\begin{lemma}[Uniform transfer on the moving dominant orbit]
	\label{lem:uniform-transfer}
	Assume that along the transversal path $\zeta(\varepsilon)$ the dominant singularities of the Taylor branch
	form a unique dominant $s$--orbit of simple square--root branch points as in Definition~\ref{def:dominant-orbit},
	and assume the uniform square--root continuation condition of Definition~\ref{def:uniform-delta}.
	Then there exist constants $M_{\mathrm{tr}},C>0$ and transfer amplitudes $A_p(\varepsilon)$ ($p\ge1$) such that,
	for all sufficiently small $\varepsilon$, all $p\ge1$, and all $m\ge M_{\mathrm{tr}}(1+p^2)$,
	\begin{equation}
		\label{eq:uniform-transfer}
		R_p(m;\varepsilon)
		=
		A_p(\varepsilon)\, m^{-3/2}\,x_*(\varepsilon)^{-ms}
		\Bigl(1+O\bigl((1+p^2)/m\bigr)\Bigr),
	\end{equation}
	uniformly in $p$ and $m$.
	Moreover,
	\begin{equation}
		\label{eq:Ap-growth}
		|A_p(\varepsilon)|\le C\,p\,|\lambda(\varepsilon)|^{p-1},
	\end{equation}
	where $\lambda(\varepsilon)$ is the corresponding branch value at
$x_*(\varepsilon)$ on the chosen Taylor sheet.
\end{lemma}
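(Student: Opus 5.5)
Work in the variable $z=x^s$, so that $R_p(m;\varepsilon)=[z^m]\widehat U(z;\varepsilon)^p$. The plan is to use the decomposition of \textup{(U2)}: on $\Omega_\varepsilon\cap\{|\delta_z|<\eta_0\}$,
\[
\widehat U(z;\varepsilon)^p=\lambda(\varepsilon)^p+b_p(\varepsilon)\,\delta_z^{1/2}+d_p(\varepsilon)\,\delta_z^{3/2}+\mathcal R_p(z;\varepsilon).
\]
We treat each term separately. The constant $\lambda^p$ contributes only to $m=0$. The two singular terms $\delta_z^{1/2}$ and $\delta_z^{3/2}$ are entire-in-$\delta$ binomial series. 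Their coefficients are known exactly: $[z^m](1-z/z_*)^{1/2}=-\frac{1}{2\sqrt\pi}m^{-3/2}z_*^{-m}(1+O(1/m))$ and $[z^m](1-z/z_*)^{3/2}=O(m^{-5/2}|z_*|^{-m})$. Here the $O$-constants are absolute and do not depend on $\varepsilon$. The regular part $\mathcal R_p$ is holomorphic on $|z|<R(\varepsilon)=|z_*|+\eta_0/2$ and bounded by $B_0^p$ on the boundary. Cauchy's estimate then gives $|[z^m]\mathcal R_p|\le B_0^pR(\varepsilon)^{-m}$.

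We set $A_p(\varepsilon):=-\frac{1}{2\sqrt\pi}b_p(\varepsilon)$, which is consistent with Corollary~\ref{cor:transfer-critical-data}. Then
\[
R_p(m)=A_p\,m^{-3/2}z_*^{-m}\Bigl(1+O(1/m)+O\bigl(|d_p|/(|b_p|\,m)\bigr)+O\bigl(m^{3/2}|b_p|^{-1}B_0^p\,(|z_*|/R)^m\bigr)\Bigr).
\]
Note that the expansion identity only holds near $z_*$. The global decomposition still makes sense, however, because \textup{(U2)} asserts that the difference $\widehat U^p-\lambda^p-b_p\delta^{1/2}-d_p\delta^{3/2}$ extends holomorphically to the full disk. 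The coefficient identity is therefore exact.

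\textbf{Coefficient bounds.} Next we control $b_p$ and $d_p$ from the Puiseux data, by raising the expansion \eqref{eq:puiseux-uniform} to the $p$-th power. The leading term is $b_p=p\lambda^{p-1}s^{-1/2}\kappa$. This gives \eqref{eq:Ap-growth} together with the lower bound $|b_p|\ge c\,p\,|\lambda|^{p-1}$, since $|\kappa|,|\lambda|\ge c_0$.

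For $d_p$, collect the $\delta^{3/2}$ terms. They are $p\lambda^{p-1}d$, then $p(p-1)\lambda^{p-2}\kappa' c$, then $\binom p3\lambda^{p-3}\kappa'^3$, where $\kappa'=s^{-1/2}\kappa$. Using the uniform bounds $C_0$ and $|\lambda|\ge c_0$, this yields $|d_p|\le C\,p^3|\lambda|^{p-1}$, hence $|d_p|/|b_p|\le C(1+p^2)$. The remainder $E$ contributes only at order $\delta^2$ and is absorbed in $\mathcal R_p$, so it does not enter $d_p$. This produces the $(1+p^2)/m$ error in \eqref{eq:uniform-transfer}.

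\textbf{Main obstacle: the cutoff term.} The hardest step is showing that the cutoff term is $O((1+p^2)/m)$ uniformly. We need
\[
m^{5/2}\,\frac{B_0^p}{p\,|\lambda|^{p-1}}\,\theta^m\le C(1+p^2),\qquad \theta:=\frac{|z_*|}{|z_*|+\eta_0/2}\le\theta_0<1,
\]
where $\theta_0$ is uniform because $|z_*(\varepsilon)|\to1$. The left side is bounded by $C^{p}\,m^{5/2}\theta_0^m$ with $C=B_0/c_0$ (up to constants). For $m\ge M_{\mathrm{tr}}(1+p^2)\ge M_{\mathrm{tr}}\,p$, the geometric factor $\theta_0^{m/2}$ beats $C^p$ once $M_{\mathrm{tr}}\log(1/\theta_0)>2\log C$. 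The remaining factor $\theta_0^{m/2}m^{5/2}$ is bounded. So choosing $M_{\mathrm{tr}}$ large, depending only on $B_0,c_0,\eta_0$, closes the argument.

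Finally, all $O$-terms in the binomial coefficient asymptotics must be made absolute. This holds since the coefficient formulas depend only on $m$ and are independent of $\varepsilon$. Uniformity in $p$ and $m$ then follows.
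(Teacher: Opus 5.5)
Your proposal is correct and follows the paper's proof essentially step by step. It uses the same splitting $\widehat U^p=\lambda^p+b_p\delta_z^{1/2}+d_p\delta_z^{3/2}+\mathcal R_p$, the same $A_p=-b_p/(2\sqrt\pi)$ with two-sided bounds $|A_p|\asymp p|\lambda|^{p-1}$ and $|d_p|\lesssim p^3|\lambda|^{p-1}$, the same Cauchy estimate on $|z|=R(\varepsilon)$, and the same choice of $M_{\mathrm{tr}}$ of order $\log(B_0/c_0)/\log(1/q_0)$ to absorb the regular-part term. Your remark that the coefficients of $\delta_z^{1/2},\delta_z^{3/2}$ are exact binomial coefficients is slightly cleaner than the paper's appeal to a uniform transfer theorem, since it makes uniformity in $\varepsilon$ immediate once $\mathcal R_p$ is holomorphic on the larger disk.
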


\begin{proof}
	Deferred to Appendix~\ref{app:gram-bounds}.
\end{proof}

\subsection{Universality theorem}\label{subsec:universality-thm}

Fix \(\beta>0\), choose \(\alpha\) as in
Section~\ref{subsec:weighted-space}, and recall from \eqref{eq:eta-eps-abbrev} that
\[
	L(\varepsilon):=\sum_{m\ge0}\frac{\eta(\varepsilon)^m}{m+1},\qquad L(\varepsilon)=\log \frac{1}{\rho_*(\varepsilon)-1}+O(1)
\]
by Lemma~\ref{lem:Lambda-asymp}. For each \(q\in\{1,\dots,s\}\), let
\[
	\widetilde G^{(q)}(\varepsilon)
	:=\widetilde G^{(q)}(\zeta(\varepsilon))
\]
denote the renormalized Gram block from
Definition~\ref{def:conjugated-gram}.

\begin{proposition}[Rank--one extraction in a symmetry block]
	\label{prop:rank-one-extraction}
	Assume the standing hypotheses of
	Assumption~\ref{ass:standing-hypotheses}. Fix
	$q\in\{1,\dots,s\}$. Then there exist $\varepsilon_0>0$, vectors
	$\widetilde d^{(q)}(\varepsilon)\in\ell^2(\N_0)$, and bounded operators
	$\widetilde C^{(q)}(\varepsilon)$ such that for
	$0<\varepsilon<\varepsilon_0$,
	\begin{equation}\label{eq:Gq-tilde-decomposition}
		\widetilde G^{(q)}(\varepsilon)
		=
		L(\varepsilon)\,
		\widetilde d^{(q)}(\varepsilon)\otimes
		\widetilde d^{(q)}(\varepsilon)
		+
		\widetilde C^{(q)}(\varepsilon),
		\qquad
		\sup_{0<\varepsilon<\varepsilon_0}
		\|\widetilde C^{(q)}(\varepsilon)\|<\infty.
	\end{equation}
	Moreover,
	\[
		\widetilde d^{(q)}(\varepsilon)\to \widetilde d^{(q)}_c
		\quad\text{in }\ell^2(\N_0),
		\qquad
		\widetilde d^{(q)}_c\neq0.
	\]
\end{proposition}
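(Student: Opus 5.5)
The plan is to work directly from the entry formula of Proposition~\ref{prop:analytic-gram} for the renormalized entries
\[
\widetilde G^{(q)}_{j_1j_2}(\varepsilon)=w_{j_1}^{-1}w_{j_2}^{-1}G^{(q)}_{j_1j_2}(\varepsilon),
\]
and to split the tail sum over $m$ at a cutoff $m_0(j_1,j_2):=M_{\mathrm{tr}}(1+p_{j_2}^2)$ taken from Lemma~\ref{lem:uniform-transfer}. The finite segment $m<m_0$ and the transfer error are sent to $\widetilde C^{(q)}$. The leading transfer term is then summed exactly against the kernel, and this sum produces the rank-one logarithmic piece.

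\emph{Step 1 (leading term).} For $m\ge m_0$, insert \eqref{eq:uniform-transfer} for both factors. The product is
\[
\overline{A_{p_{j_1}}}\,A_{p_{j_2}}\,(m+\Delta)^{-3/2}m^{-3/2}\,\overline{x_*}^{-(m+\Delta)s}x_*^{-ms}.
\]
Its modulus contains $\rho_*^{-2ms}$, and multiplying by $(p_{j_2}+sm)^2\sim s^2m^2$ gives the borderline $m^{-1}$ behaviour. I will write $|x_*|^{-2ms}=\rho_*^{-2ms}$ and compare the weights $\rho_*^{-2ms}/(m+1)$ with $\eta^m/(m+1)$, where $\eta=(\rho\rho_*)^{-2s}$. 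The difference $\sum_m(\rho_*^{-2sm}-\eta^m)/(m+1)=\log\frac{1-\eta}{1-\rho_*^{-2s}}+o(1)$ stays bounded because both denominators are comparable to $\varepsilon$ by Lemma~\ref{lem:Lambda-asymp}; I still need to check that this holds uniformly in the phase factor.

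The leading contribution to entry $(j_1,j_2)$ is then
\[
s^2\,\frac{\overline{A_{p_{j_1}}}\,\overline{x_*}^{-\Delta s}A_{p_{j_2}}}{\sqrt{p_{j_1}p_{j_2}}}\,L(\varepsilon)+(\text{bounded}\times\text{amplitudes}).
\]
The replacements $(m+\Delta)^{-3/2}\to m^{-3/2}$ and $(p_{j_2}+sm)^2\to s^2m^2$ cost $O((p+\Delta)/m)$ each, which is summable, and the sum starting at $m_0$ instead of $0$ costs $O(\log(1+p_{j_2}^2))$. This suggests setting
\[
\widetilde d^{(q)}_j(\varepsilon):=s\,\frac{A_{p_j}(\varepsilon)\,x_*(\varepsilon)^{-js}}{\sqrt{p_j}\,w_j}\times(\text{unimodular normalization}),
\]
so that $\overline{x_*}^{-\Delta s}$ factors as $\overline{x_*^{-j_1 s}}\cdot\overline{x_*}^{\,j_2 s}$. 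Here I expect a phase-bookkeeping issue: $|x_*|\ne1$, so the moduli of these factors must be absorbed into the normalization, which is harmless since $|x_*|^{\pm js}\le\rho_*^{js}$ is dominated by the exponential weight $\alpha^{p_j}$ after possibly enlarging $\alpha$ slightly. By \eqref{eq:Ap-growth} and $\alpha>L_0$ we get $|\widetilde d_j|\lesssim p_j^{-1-\beta}(L_0/\alpha)^{p_j}$, which is uniformly $\ell^2$. Corollary~\ref{cor:transfer-critical-data}, through $A_p\to A_p^{(c)}$, $x_*\to x_{*,c}$, and dominated convergence, gives $\widetilde d\to\widetilde d_c$ in $\ell^2$. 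We have $\widetilde d_c\ne0$ because $A_q^{(c)}\propto q\,\kappa_c\lambda_c^{q-1}\ne0$ by (U2).

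\emph{Step 2 (remainder bound).} The remainder $\widetilde C^{(q)}$ has three pieces: the finite segment $m<m_0$, the relative transfer error $O((1+p^2)/m)$, and the bounded mismatches from Step~1. I will bound each by a Schur test, or simply by the Hilbert--Schmidt norm. For the finite segment, Lemma~\ref{lem:cauchy-Rp} gives $|R_p(m)|\le M_0^p\rho^{-ms}\le M_0^p$, so the entry is at most $\mathrm{poly}(p_{j_1},p_{j_2})\,M_0^{p_{j_1}+p_{j_2}}$ divided by $w_{j_1}w_{j_2}$; this is square summable since $M_0<\alpha$. The transfer error gives a summand of order $(1+p^2)m^{-2}$ times amplitudes, so its $m$-sum is $O(1+p^2)$ and again square summable after weighting. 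The key requirement is that none of these constants depend on $\varepsilon$. This is exactly what the uniformity in $p$ and $m$ in Lemma~\ref{lem:uniform-transfer} and the uniform bound $M_0$ provide.

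The main obstacle I expect is the uniformity of the leading sum when $\Delta$ is large. The convergence of the phases $x_*^{-\Delta s}$, together with the interplay between $|x_*|^{-\Delta s}$ and the weights, must be arranged so that the off-diagonal errors remain Hilbert--Schmidt uniformly in $\varepsilon$. I would handle this by giving away a factor $p_{j_1}p_{j_2}$ of polynomial growth, which the exponential weights $(L_0/\alpha)^{p}$ absorb.
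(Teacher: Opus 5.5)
Your bookkeeping largely matches the paper: the cutoff $M_{\mathrm{tr}}(1+p_{j_2}^2)$, the Cauchy bound on the finite segment with $M_0<\alpha$, the $(1+p^2)$-weighted transfer error, the $O(\log(1+P))$ initial segment, and the bounded discrepancy between $L$ and the sum built from $\rho_*^{-2sm}$. The step that makes the leading term rank one, however, fails as you set it up.

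After the $m$-sum, the principal part of the entry $(j_1,j_2)$, $j_1\le j_2$, is proportional to $L(\varepsilon)\,\overline{A_{p_{j_1}}}A_{p_{j_2}}\,\overline{z_*}^{\,-\Delta}$. Here $\overline{z_*}^{\,-\Delta}=\theta^{\Delta}e^{i\Delta\phi}$, with $\theta=\rho_*^{-s}$ and $\phi=\arg z_*$. The phase $e^{i\Delta\phi}$ factors, and the paper puts exactly $e^{-ij\phi}$, and no modulus, into $\widetilde d^{(q)}_j$. The modulus does not factor. Together with the Hermitian extension to $j_1>j_2$, the leading operator is $L\,D\,T_\theta\,D^*$, where $(T_\theta)_{j_1j_2}=\theta^{|j_1-j_2|}$ is a Toeplitz kernel of infinite rank for every $\theta<1$.

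Your proposed split $\overline{x_*}^{\,-\Delta s}=\overline{x_*}^{\,j_1 s}\,\overline{x_*}^{\,-j_2 s}$ puts two different vectors on the two sides; they coincide only when $|x_*|=1$. So it yields at best $u\otimes v$ on the upper triangle, not $d\otimes d$. ``Absorbing the moduli into the normalization'' controls only the $\ell^2$ size of the vector, not this structural mismatch. Enlarging $\alpha$ is also not available, since the weighted realization is fixed in the statement.

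The missing ingredient is the quantitative removal of the Toeplitz factor, which is Step 4 in the paper. The difference between $L\,D T_\theta D^*$ and $L\,\widetilde d\otimes\widetilde d$ has entries $L\,\widetilde d_{j_1}\bigl(\theta^{|j_1-j_2|}-1\bigr)\overline{\widetilde d_{j_2}}$. Three facts bound it:
\begin{itemize}
\item $|\theta^n-1|\le n(1-\theta)$;
\item $|j_1-j_2|^2\le C(p_{j_1}^2+p_{j_2}^2)$;
\item the uniform moment bound $(1+p_j^2)\widetilde d_j\in\ell^2$.
\end{itemize}
Together they give a Hilbert--Schmidt norm at most $C\,L(\varepsilon)\bigl(1-\theta(\varepsilon)\bigr)$. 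This tends to $0$ only because $1-\theta=O(\rho_*-1)$ while $L=\log\frac{1}{\rho_*-1}+O(1)$.

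Your closing remark about trading polynomial factors against the exponential weights supplies the moment bound. But polynomial moments alone cannot compensate the divergent factor $L$; what makes the off-diagonal modulus negligible is that $1-\theta$ is small relative to $1/L$. If you instead keep $|x_*|^{-js}$ in the vector, the discrepancy $\rho_*^{-(j_1+j_2)s}-\rho_*^{-|j_1-j_2|s}=O\bigl(\min\{j_1,j_2\}(\rho_*-1)\bigr)$ must be handled by exactly the same estimate.
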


The proof is deferred to Appendix~\ref{app:gram-bounds}.

At this point the roles of the assumptions become clear. Lemma~\ref{lem:local-branch-data} provides the local continuation of the characteristic parameters, whereas Assumption~\ref{ass:standing-hypotheses} collects the additional dominance and uniform
$\Delta$-control required to extract the logarithmic tail.

\begin{theorem}[Conditional local rank--one logarithmic instability]
	\label{thm:universality}
	Assume the standing hypotheses of
	Assumption~\ref{ass:standing-hypotheses}. Fix \(\beta>0\), choose
	\(\alpha>\max\{M_0,L_0\}\) as in
	Section~\ref{subsec:weighted-space}, and let
	\(\widetilde G^{(q)}(\varepsilon)\) be the renormalized Gram block of
	Definition~\ref{def:conjugated-gram}. Fix
	$q\in\{1,\dots,s\}$, and let
	\[
		\mu_k^{(q)}(\varepsilon):=\mu_k\bigl(\widetilde G^{(q)}(\varepsilon)\bigr),
		\qquad k\ge1.
	\]
	Let
	\[
		\Gamma^{(q)}:=\|\widetilde d_c^{(q)}\|^2>0,
	\]
	where \(\widetilde d_c^{(q)}\) is the limit vector from
	Proposition~\ref{prop:rank-one-extraction}. Then, as
	\(\varepsilon\downarrow0\),
	\begin{equation}\label{eq:mu-sing}
		\mu^{(q)}_1(\varepsilon)
		=
		\Gamma^{(q)}\,L(\varepsilon)+O(1),
		\qquad
		\frac{\mu^{(q)}_1(\varepsilon)}{L(\varepsilon)}
		\longrightarrow \Gamma^{(q)},
	\end{equation}
	and for every \(k\ge2\),
	\[
	\sup_{0<\varepsilon<\varepsilon_0}\mu_k^{(q)}(\varepsilon)<\infty.
	\]
	In particular, in each symmetry block exactly one variational eigenvalue
	diverges logarithmically, while all higher variational eigenvalues remain
	bounded.
\end{theorem}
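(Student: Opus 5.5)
The plan is to deduce Theorem~\ref{thm:universality} directly from the rank-one extraction of Proposition~\ref{prop:rank-one-extraction} by standard perturbation inequalities for Courant--Fischer levels (Definition~\ref{def:variational-eigs}). Fix $q$ and write
\[
\widetilde G^{(q)}(\varepsilon)=P(\varepsilon)+\widetilde C^{(q)}(\varepsilon),
\qquad
P(\varepsilon):=L(\varepsilon)\,\widetilde d^{(q)}(\varepsilon)\otimes\widetilde d^{(q)}(\varepsilon),
\]
with $K:=\sup_{0<\varepsilon<\varepsilon_0}\|\widetilde C^{(q)}(\varepsilon)\|<\infty$. The operator $\widetilde C^{(q)}(\varepsilon)$ is self-adjoint, being the difference of two self-adjoint operators. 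The Weyl-type inequality $|\mu_k(A+B)-\mu_k(A)|\le\|B\|$ holds for bounded self-adjoint $A,B$ directly from the min--max definition, because the Rayleigh quotients differ by at most $\|B\|$ uniformly in $u$.

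For the top level, we have $\mu_1(P(\varepsilon))=L(\varepsilon)\|\widetilde d^{(q)}(\varepsilon)\|^2$. Hence
\[
\bigl|\mu_1^{(q)}(\varepsilon)-L(\varepsilon)\|\widetilde d^{(q)}(\varepsilon)\|^2\bigr|\le K.
\]
The main point is to replace $\|\widetilde d^{(q)}(\varepsilon)\|^2$ by $\Gamma^{(q)}=\|\widetilde d^{(q)}_c\|^2$ with only an $O(1)$ error after multiplying by $L(\varepsilon)\sim\log(1/\varepsilon)$ (Lemma~\ref{lem:Lambda-asymp}). Mere $\ell^2$ convergence only gives
\[
\mu_1^{(q)}(\varepsilon)=\Gamma^{(q)}L(\varepsilon)+o(L(\varepsilon)).
\]
This already yields the ratio statement $\mu_1^{(q)}/L\to\Gamma^{(q)}>0$, using $\widetilde d_c^{(q)}\neq0$. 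For the sharper $O(1)$ claim, I would go back into the construction of $\widetilde d^{(q)}(\varepsilon)$ in Appendix~\ref{app:gram-bounds}. Its entries are built from the transfer amplitudes $A_{p_j}(\varepsilon)$ divided by the weights $w_j$. I would aim for a rate $\|\widetilde d^{(q)}(\varepsilon)-\widetilde d^{(q)}_c\|=O(1/L(\varepsilon))$; in fact $O(\varepsilon)$ is expected from the $C^1$ dependence of $x_*,\lambda,\kappa$ (Lemma~\ref{lem:local-branch-data}, Corollary~\ref{cor:transfer-critical-data}), together with the domination $|A_p|\le Cp|\lambda|^{p-1}$ against $\alpha^{p}$. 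Since $\varepsilon\log(1/\varepsilon)\to0$, this suffices.

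If such a rate is not available, the alternative is to absorb the discrepancy into the remainder. Put $\widetilde d^{(q)}(\varepsilon)=\widetilde d^{(q)}_c\cdot(\text{normalization})$, so that the difference $L(\varepsilon)\bigl(\|\widetilde d(\varepsilon)\|^2-\|\widetilde d_c\|^2\bigr)$ is bounded. Alternatively, one can interpret the theorem's $O(1)$ with the moving coefficient $\|\widetilde d^{(q)}(\varepsilon)\|^2$. I expect this quantitative rate to be the main obstacle; everything else is soft.

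For $k\ge2$, the operator $P(\varepsilon)$ has rank one and is positive, so $\mu_k(P(\varepsilon))=0$ for $k\ge2$. To avoid any dependence on $L$, one uses the interlacing form of min--max instead of Weyl. Take $L$ in Definition~\ref{def:variational-eigs} to contain $\widetilde d^{(q)}(\varepsilon)$ (for $k\ge2$ one may choose the $(k-1)$-dimensional subspace to include it). For $u\perp\widetilde d^{(q)}(\varepsilon)$ we then get
\[
\langle u,\widetilde G^{(q)}u\rangle=\langle u,\widetilde C^{(q)}u\rangle\le K,
\]
so $\mu_k^{(q)}(\varepsilon)\le\mu_2^{(q)}(\varepsilon)\le K$ uniformly in $\varepsilon$. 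Combined with $L(\varepsilon)\to\infty$ and $\Gamma^{(q)}>0$, this shows that exactly one variational level diverges logarithmically, which completes the argument.
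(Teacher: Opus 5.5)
Your argument is the paper's argument. For $\mu_1$ you use the two-sided Rayleigh bound
$\bigl|\mu_1^{(q)}(\varepsilon)-L(\varepsilon)\|\widetilde d^{(q)}(\varepsilon)\|^2\bigr|\le\|\widetilde C^{(q)}(\varepsilon)\|$.
For $k\ge2$ you choose a $(k-1)$-dimensional subspace containing $\widetilde d^{(q)}(\varepsilon)$; on its orthogonal complement the form reduces to that of $\widetilde C^{(q)}(\varepsilon)$.

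You go beyond the paper at one point, and you are right to. The paper passes from $\widetilde d^{(q)}(\varepsilon)\to\widetilde d^{(q)}_c$ in $\ell^2$ directly to $\mu_1^{(q)}=\Gamma^{(q)}L(\varepsilon)+O(1)$. Without a rate this only gives $\Gamma^{(q)}L(\varepsilon)+o(L(\varepsilon))$.

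Your rate route closes this, and it is available from the paper's hypotheses:
\begin{itemize}
\item The path $\varepsilon\mapsto\zeta(\varepsilon)$ is $C^1$ by (H1)--(H2), so $x_*$, $\lambda$ and $\phi$ are $C^1$ in $\varepsilon$.
\item $\kappa$ is $C^1$ as well, via $\kappa^2=2x_*\partial_xF/\partial_y^2F$ with a fixed branch, since the radicand is nonzero.
\item Differentiate the explicit formula $A_p(\varepsilon)=-\frac{s^{-1/2}}{2\sqrt\pi}\,p\,\kappa(\varepsilon)\lambda(\varepsilon)^{p-1}$ together with the phase $e^{-ij\phi(\varepsilon)}$. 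This gives $\sup_{\varepsilon}|\partial_\varepsilon\widetilde d^{(q)}(\varepsilon)(j)|\le C\,p_j^{-\beta}(L_0/\alpha)^{p_j}$, which lies in $\ell^2$.
\item By the componentwise mean value theorem, $\|\widetilde d^{(q)}(\varepsilon)-\widetilde d^{(q)}_c\|=O(\varepsilon)$.
\item Hence $L(\varepsilon)\bigl|\|\widetilde d^{(q)}(\varepsilon)\|^2-\Gamma^{(q)}\bigr|=O(\varepsilon\log(1/\varepsilon))\to0$.
\end{itemize}

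You should commit to this route and drop both fallbacks. Replacing $\widetilde d^{(q)}(\varepsilon)$ by $\widetilde d^{(q)}_c$ moves $L(\varepsilon)\bigl(\widetilde d^{(q)}(\varepsilon)\otimes\widetilde d^{(q)}(\varepsilon)-\widetilde d^{(q)}_c\otimes\widetilde d^{(q)}_c\bigr)$ into the remainder. That term is bounded only if the same $O(1/L)$ rate holds, so the fallback is circular. Reading the $O(1)$ with the moving coefficient $\|\widetilde d^{(q)}(\varepsilon)\|^2$ proves a weaker statement than the theorem.
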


	By Remark~\ref{rem:HG-spectrum}, for each fixed subcritical
	\(\varepsilon>0\) the same variational sequence may equivalently be
	obtained from the renormalized Hessian block
	\(\widetilde H^{(q)}(\varepsilon)\). The structural decomposition in
	Proposition~\ref{prop:rank-one-extraction} concerns the \emph{renormalized}
	Gram block \(\widetilde G^{(q)}\) obtained from
	Definition~\ref{def:conjugated-gram} using the weights
	\eqref{eq:weights}. In particular, the limiting spike profile
	\(\widetilde d_c^{(q)}\) and the coefficient \(\Gamma^{(q)}\) depend on
	the choice of \((\alpha,\beta)\). The theorem asserts that, for any such
	fixed weighted renormalization, the divergent part is rank one and occurs
	on the logarithmic scale \(L(\varepsilon)\). Changing the local choice of
	square-root branch multiplies \(\widetilde d^{(q)}(\varepsilon)\) by
	a unimodular phase and does not affect \(\Gamma^{(q)}\).

\begin{proof}[Proof of Theorem~\ref{thm:universality}]
	Fix \(q\in\{1,\dots,s\}\). By
	Proposition~\ref{prop:rank-one-extraction}, we have
	\[
		\widetilde G^{(q)}(\varepsilon)
		=
		L(\varepsilon)\,
		\widetilde d^{(q)}(\varepsilon)\otimes
		\widetilde d^{(q)}(\varepsilon)
		+
		\widetilde C^{(q)}(\varepsilon),
	\]
	with $\widetilde d^{(q)}(\varepsilon)\to \widetilde d_c^{(q)}$
		in $\ell^2(\N_0)$, $\sup_{0<\varepsilon<\varepsilon_0}\|\widetilde C^{(q)}(\varepsilon)\|<\infty$. Since \(\widetilde d_c^{(q)}\neq0\), one has
	\(\widetilde d^{(q)}(\varepsilon)\neq0\) for all sufficiently small
	\(\varepsilon>0\).

	Write $A(\varepsilon):=\widetilde G^{(q)}(\varepsilon)$, $d(\varepsilon):=\widetilde d^{(q)}(\varepsilon)$, and $C(\varepsilon):=\widetilde C^{(q)}(\varepsilon)$.
	By Definition~\ref{def:variational-eigs}, the first min--max level is
	\[
		\mu_1^{(q)}(\varepsilon)
		=
		\sup_{\|u\|=1}\langle u,A(\varepsilon)u\rangle .
	\]

	Evaluating the Rayleigh quotient at
	\(u=d(\varepsilon)/\|d(\varepsilon)\|\), we obtain
	\[
		\mu_1^{(q)}(\varepsilon)
		\ge
		\left\langle \frac{d(\varepsilon)}{\|d(\varepsilon)\|},
		A(\varepsilon)\frac{d(\varepsilon)}{\|d(\varepsilon)\|}\right\rangle
		=
		L(\varepsilon)\|d(\varepsilon)\|^2
		+
		\left\langle \frac{d(\varepsilon)}{\|d(\varepsilon)\|},
		C(\varepsilon)\frac{d(\varepsilon)}{\|d(\varepsilon)\|}\right\rangle
		\ge
		L(\varepsilon)\|d(\varepsilon)\|^2-\|C(\varepsilon)\|.
	\]
	On the other hand, for every unit vector \(u\),
	\[
		\langle u,A(\varepsilon)u\rangle
		=
		L(\varepsilon)|\langle u,d(\varepsilon)\rangle|^2
		+
		\langle u,C(\varepsilon)u\rangle
		\le
		L(\varepsilon)\|d(\varepsilon)\|^2+\|C(\varepsilon)\|,
	\]
	since \(|\langle u,d(\varepsilon)\rangle|^2\le \|d(\varepsilon)\|^2\).
	Taking the supremum over \(\|u\|=1\) yields
	\[
		\mu_1^{(q)}(\varepsilon)
		=
		L(\varepsilon)\|d(\varepsilon)\|^2+O(1).
	\]
	Because \(d(\varepsilon)\to \widetilde d_c^{(q)}\) in \(\ell^2(\N_0)\), we have
	\[
		\|d(\varepsilon)\|^2\to \|\widetilde d_c^{(q)}\|^2=\Gamma^{(q)},
	\]
	which proves \eqref{eq:mu-sing}.

Now let \(k\ge2\). By Definition~\ref{def:variational-eigs},
\[
	\mu_k^{(q)}(\varepsilon)
	=
	\inf_{\dim M=k-1}
	\sup_{\substack{u\perp M\\ \|u\|=1}}
	\langle u,A(\varepsilon)u\rangle .
\]
Choose any \((k-1)\)-dimensional subspace \(M_k(\varepsilon)\subset \ell^2(\N_0)\)
such that $\Span\{d(\varepsilon)\}\subset M_k(\varepsilon)$. If \(u\perp M_k(\varepsilon)\) and \(\|u\|=1\), then in particular
\(u\perp d(\varepsilon)\), so $\langle u,A(\varepsilon)u\rangle
	=
	\langle u,C(\varepsilon)u\rangle$. Hence
\[
	\sup_{\substack{u\perp M_k(\varepsilon)\\ \|u\|=1}}
	\langle u,A(\varepsilon)u\rangle
	\le \|C(\varepsilon)\|.
\]
Therefore, $\mu_k^{(q)}(\varepsilon)\le \|C(\varepsilon)\|$, for
	$k\ge2$. Since \(\sup_{0<\varepsilon<\varepsilon_0}\|C(\varepsilon)\|<\infty\), we conclude that
\[
	\sup_{0<\varepsilon<\varepsilon_0}\mu_k^{(q)}(\varepsilon)<\infty,
	\qquad k\ge2.
\]
\end{proof}

The boundedness statement in Proposition~\ref{prop:rank-one-extraction} is
sufficient for the variational eigenvalue bounds \eqref{eq:mu-sing}. The
following lemma states a stronger structural fact proved in the appendix.

\begin{lemma}[Compact remainder and norm limit]
	\label{lem:compact-remainder}
	Under the hypotheses of Theorem~\ref{thm:universality}, for each
	\(q\in\{1,\dots,s\}\) the remainder operators
	\(\widetilde C^{(q)}(\varepsilon)\) in
	\eqref{eq:Gq-tilde-decomposition} are Hilbert--Schmidt, hence compact, for
	all sufficiently small \(\varepsilon>0\). Moreover there exists a
	Hilbert--Schmidt operator \(\widetilde C^{(q)}_*\) such that
	\[
		\|\widetilde C^{(q)}(\varepsilon)-\widetilde C^{(q)}_*\|
		\xrightarrow[\varepsilon\downarrow0]{}0.
	\]
\end{lemma}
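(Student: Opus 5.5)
We will prove Lemma~\ref{lem:compact-remainder} by splitting the remainder into explicit pieces and showing that each piece is Hilbert--Schmidt, with Hilbert--Schmidt norm convergent as \(\varepsilon\downarrow0\). Since Hilbert--Schmidt convergence implies operator-norm convergence, this gives both conclusions.

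\emph{Writing down the remainder.} In the renormalized entry formula obtained from Proposition~\ref{prop:analytic-gram}, the weights are divided out, so
\[
\widetilde G^{(q)}_{j_1j_2}=G^{(q)}_{j_1j_2}/(w_{j_1}w_{j_2}).
\]
We split the tail sum over \(m\) at a cutoff \(m\ge M_{\mathrm{tr}}(1+p_{j_2}^2)\). In the tail we substitute the uniform transfer asymptotics \eqref{eq:uniform-transfer}. Up to the harmless shift \(\Delta\), the leading term of the summand is
\[
\frac{s^2 m^2}{\sqrt{p_{j_1}p_{j_2}}}\,\overline{A_{p_{j_1}}}A_{p_{j_2}}\,m^{-3}\,|x_*|^{-2ms}\,\overline{x_*^{-\Delta s}}.
\]
Summing in \(m\) produces \(\sum_m |x_*|^{-2ms}/m\). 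This differs from \(L(\varepsilon)\) by \(O(1)\), since \(\eta\) and \(|x_*|^{-2s}\) have comparable distance to \(1\) by Lemma~\ref{lem:Lambda-asymp}. The spike vector is
\[
\widetilde d_j=s\,A_{p_j}x_*^{-js}/\bigl(\sqrt{p_j}\,w_j\bigr),
\]
up to the convention used in the appendix. Then \(\widetilde C=\widetilde G-L\,\widetilde d\otimes\widetilde d\) is a sum of four explicit kernels:
\begin{enumerate}[label=(\roman*)]
\item the finite initial segment \(m<M_{\mathrm{tr}}(1+p^2)\);
\item the relative error term \(O((1+p^2)/m)\) times the leading tail;
\item the difference between the tail log-sum and \(L(\varepsilon)\), together with the lower-order pieces \((p_{j_2}+sm)^2-s^2m^2\);
\item the \(\Delta\)-shift corrections.
\end{enumerate}

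\emph{Uniform Hilbert--Schmidt bounds.} For term (i) we use the Cauchy bound of Lemma~\ref{lem:cauchy-Rp}. The segment has only polynomially many terms in \(p\), each bounded by \(M_0^{p}\), so its entries are bounded by \(\mathrm{poly}(p_{j_1},p_{j_2})\,M_0^{p_{j_1}+p_{j_2}}/(w_{j_1}w_{j_2})\). These entries are square-summable uniformly in \(\varepsilon\) because \(\alpha>M_0\). For term (ii) we use \(|A_p|\le Cp|\lambda|^{p-1}\) from \eqref{eq:Ap-growth}, together with \(\alpha>L_0\). The error \(\sum_m (1+p^2)m^{-2}\) over \(m\gtrsim p^2\) is bounded, so the entries are \(O\bigl(p_{j_1}^{1/2}p_{j_2}^{1/2}(L_0/\alpha)^{p_{j_1}+p_{j_2}}p_{j_1}^{-3/2-\beta}p_{j_2}^{-3/2-\beta}\bigr)\), which is again uniformly square-summable. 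Terms (iii) and (iv) are bounded by the same envelope times \(O(\log(1+p))\) factors, and these factors are absorbed by the exponential gap. This yields Hilbert--Schmidt bounds that are uniform in \(\varepsilon\), with a summable majorant independent of \(\varepsilon\).

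\emph{Convergence.} We apply dominated convergence in the Hilbert--Schmidt norm: entrywise convergence together with a uniform summable majorant gives convergence in \(\|\cdot\|_{HS}\). Entrywise convergence of \(\widetilde C_{j_1j_2}(\varepsilon)\) holds for each of the four pieces:
\begin{itemize}[label={}]
\item Each coefficient \(R_p(m;\varepsilon)\) converges, because \(U\) depends continuously on \(\zeta\) near \(x=0\).
\item \(A_p(\varepsilon)\) and \(x_*(\varepsilon)\) converge by Corollary~\ref{cor:transfer-critical-data}.
\item The convergent critical sums \(\sum_m(\dots)\) pass to the limit by dominated convergence in \(m\), using the \(O(m^{-2})\) majorant for the error terms.
\item The difference \(\sum_m |x_*|^{-2ms}/m - L(\varepsilon)\) has a finite limit, computable from \(-\log(1-t)\) expansions.
\end{itemize}

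\emph{Main obstacle.} The delicate point is term (ii). The transfer error is only controlled for \(m\ge M_{\mathrm{tr}}(1+p^2)\), and the entries near that cutoff must stay uniformly majorized without using \(\varepsilon\)-dependent Cauchy radii. We would first try to handle this by bounding the gap region using the uniform constants of Definition~\ref{def:uniform-delta}. We also need to verify that the \(O((1+p^2)/m)\) error remains uniform up to \(\varepsilon=0\), so that the entrywise limits exist rather than only bounds.
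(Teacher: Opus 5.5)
Your skeleton matches the paper's proof. You use $\varepsilon$-independent product envelopes for three pieces: the finite segment (midpoint Cauchy bound, $\alpha>M_0$), the transfer-error tail (Lemma~\ref{lem:uniform-transfer}, $\alpha>L_0$), and the omitted initial part of the log-sum. You then combine entrywise convergence, dominated convergence in Hilbert--Schmidt norm, and $L_*-L\to\log(3/2)$. The gap is the factor $\overline{x_*(\varepsilon)^{-\Delta s}}=\overline{z_*(\varepsilon)}^{\,-\Delta}$ in the leading tail term. You keep it and apparently intend to absorb it into the spike through the factor $x_*^{-js}$.

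That works for the phase: this is the role of $e^{-ij\phi(\varepsilon)}$ in \eqref{eq:spike-def-r1}. The conjugation is also forced, not a convention: your $A_{p_j}$ must be $\overline{A_{p_j}}$, since any leftover phase mismatch is multiplied by $L(\varepsilon)$. It does not work for the modulus. $|\overline{z_*}^{\,-\Delta}|=\theta(\varepsilon)^{|j_1-j_2|}$, with $\theta=\rho_*^{-s}$, is a Toeplitz matrix, not a rank-one product, and $\theta^{j_1}\theta^{j_2}\neq\theta^{|j_1-j_2|}$. Moreover, the lemma's $\widetilde C^{(q)}$ is defined with the paper's spike, which has no $\theta^j$ factor. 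So the leading tail plus the omitted initial part is $L_*\,D(\varepsilon)T_{\theta(\varepsilon)}D(\varepsilon)^*$. Hence $\widetilde C^{(q)}(\varepsilon)$ contains the piece $L_*(\varepsilon)\,D(\varepsilon)\bigl(T_{\theta(\varepsilon)}-I\bigr)D(\varepsilon)^*$, with entries $L_*(\theta^{|j_1-j_2|}-1)\,\widetilde d_{j_1}\overline{\widetilde d_{j_2}}$. This piece appears nowhere in your list (i)--(iv). It is not controlled by an $\varepsilon$-independent envelope times $O(\log(1+p))$: a priori it is only $O(L(\varepsilon))$ in Hilbert--Schmidt norm. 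So your argument yields neither the uniform bound on $\widetilde C^{(q)}$ nor its limit.

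The missing ingredient is the rate comparison from Step~4 of the proof of Proposition~\ref{prop:rank-one-extraction}. Use $|\theta^n-1|\le n(1-\theta)$, $|j_1-j_2|^2\lesssim p_{j_1}^2+p_{j_2}^2$, and the uniform moment bound $(1+p_j^2)\widetilde d_j\in\ell^2$. These give $\|D(T_\theta-I)D^*\|_{\mathrm{HS}}\le C(1-\theta(\varepsilon))$. Since $L_*(\varepsilon)(1-\theta(\varepsilon))=O(\delta\log(1/\delta))\to0$ with $\delta=\rho_*-1$, this piece tends to $0$ in Hilbert--Schmidt norm. Adding it, your proof becomes the paper's.

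By contrast, the ``main obstacle'' you flag is not one. The segment $m<M$ is already handled in your (i) by the $\varepsilon$-independent bound $M_0^{p}$ over $O(1+P^2)$ terms. Entrywise convergence in the tail needs only two things. First, pointwise convergence of each summand, which follows from convergence of the fixed coefficients $R_p(m;\varepsilon)$ and of $A_p(\varepsilon)$ and $z_*(\varepsilon)$. Second, the majorant $C(1+P^2)^2|\widetilde d_{j_1}||\widetilde d_{j_2}|(m+1)^{-2}$, valid for $0<\varepsilon<\varepsilon_0$. No uniformity of the transfer error at $\varepsilon=0$ is required.
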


\begin{proof}
	Deferred to Appendix~\ref{app:compact-remainder}.
\end{proof}

\begin{figure}[t]
	\centering
	\includegraphics[width=\textwidth]{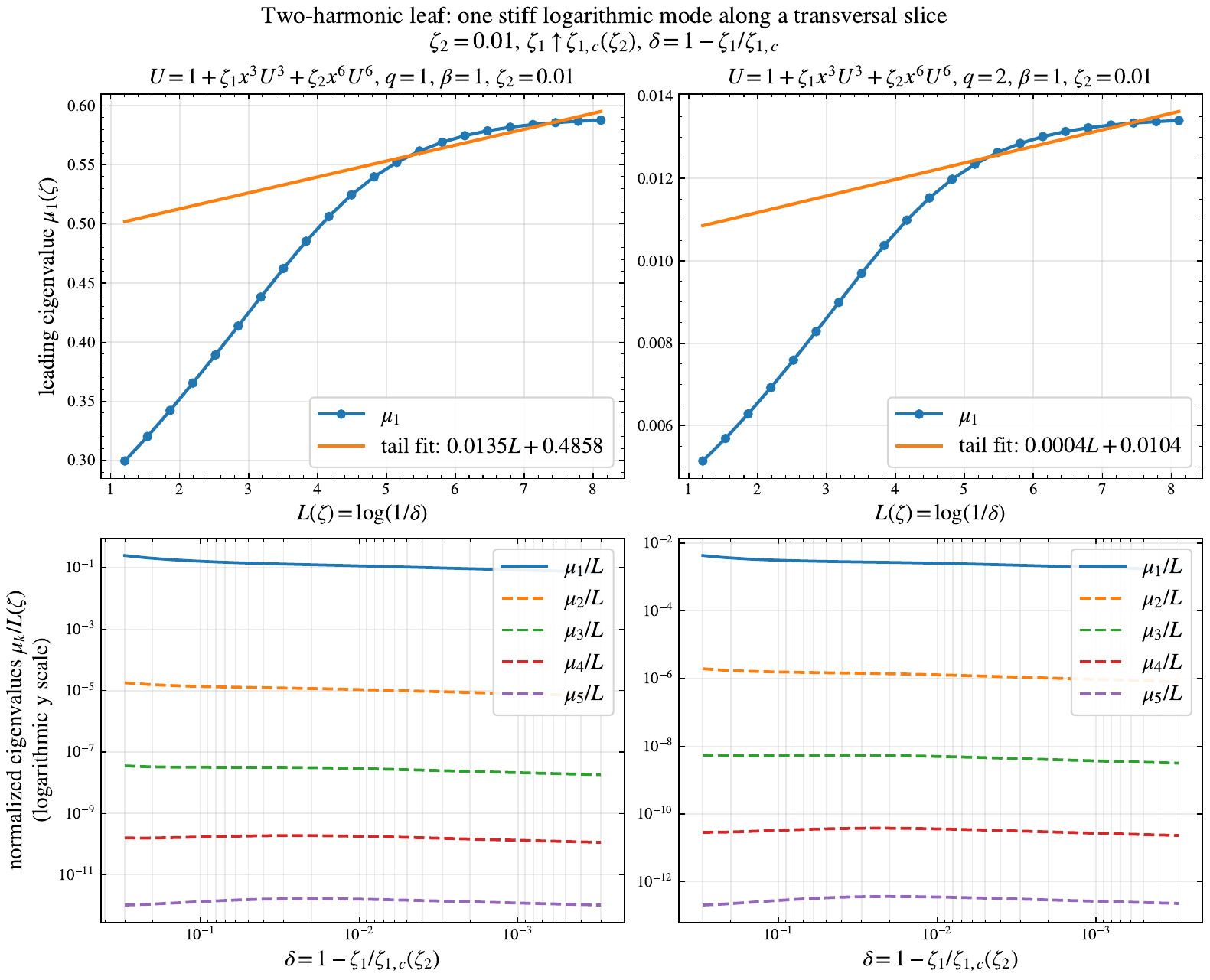}
	\caption{Numerical behavior consistent with Theorem~\ref{thm:universality} on the
	two--harmonic leaf
	\(
	U=1+\zeta_1 x^3U^3+\zeta_2 x^6U^6
	\).
	We fix \(\zeta_2=0.01\) and approach the critical locus along the
	transversal slice \(\zeta_1\uparrow \zeta_{1,c}(\zeta_2)\), with
	\(\delta:=1-\zeta_1/\zeta_{1,c}(\zeta_2)\).
	The spectra are computed from the $(J+1)\times(J+1)$ principal truncation of the renormalized Gram blocks \(\widetilde G^{(q)}\)
	with \(J=70\), \(\alpha=2\), and \(\beta=1\).
	The figure is heuristic: it illustrates the predicted scaling at this fixed truncation.
	Top row: the leading eigenvalue \(\mu_1^{(q)}\) for the representative blocks
	\(q=1,2\), plotted against \(\log(1/\delta)\), together with linear tail fits.
	Bottom row: the normalized eigenvalues \(\mu_k^{(q)}/\log(1/\delta)\),
	\(k=1,\dots,5\), on a logarithmic vertical scale.
	The plots show one logarithmically diverging stiff mode, while the normalized
	soft modes tend to zero.}
	\label{fig:twoharm-numerics}
\end{figure}

Figure~\ref{fig:twoharm-numerics} shows numerical behavior consistent with the same spectral mechanism on a genuinely
two--parameter leaf. We consider
\[
	U(x;\zeta_1,\zeta_2)=1+\zeta_1 x^3U(x;\zeta_1,\zeta_2)^3+\zeta_2 x^6U(x;\zeta_1,\zeta_2)^6,
\]
fix \(\zeta_2=0.01\), and approach the critical locus along the transversal
slice \(\zeta_1\uparrow \zeta_{1,c}(\zeta_2)\).  Writing
\[
	\delta:=1-\frac{\zeta_1}{\zeta_{1,c}(\zeta_2)},
\]
one has \(\varepsilon\asymp\delta\) along this slice, hence
\[
	L(\varepsilon)=\log(1/\delta)+O(1).
\]

The upper plots show the leading variational eigenvalue \(\mu_1^{(q)}\) for the
representative blocks \(q=1,2\), together with linear tail fits in \(\log(1/\delta)\),
which is the behavior predicted by Theorem~\ref{thm:universality} if the
hypotheses of Definition~\ref{def:uniform-delta} hold along this slice. Since
\(s=3\) on this leaf, there is a third symmetry block \(q=3\). We omit it from
the figure only to avoid overloading the comparison. The lower plots show the
normalized eigenvalues \(\mu_k^{(q)}/\log(1/\delta)\) on a logarithmic scale.
The leading ratio approaches a nonzero constant, while the soft modes tend to
zero, displaying the expected rank--one dominance.

\subsection{Extension to \texorpdfstring{\(N=\infty\)}{N=infinity}}
\label{subsec:infty-extension}

The proof of Theorem~\ref{thm:universality} uses only the coefficient
asymptotics along the dominant orbit and the boundedness of the weighted Gram
remainder. This leads to an abstract infinite-dimensional criterion, without providing an automatic extension to arbitrary infinite-dimensional leaves. The remainder of
this subsection therefore formulates an abstract criterion: it isolates the
operator-theoretic conditions that would have to be checked on a concrete
\(N=\infty\) family before the same rank-one mechanism could be concluded. To avoid
confusion with the finite-mode quantity \(L(\varepsilon)\) from
Section~\ref{subsec:standing-eta}, we write \(L_*(\varepsilon)\) here for the
logarithmic tail built directly from the true dominant modulus
\(\rho_*(\varepsilon)\).

\begin{assumption}[Abstract \(N=\infty\) hypotheses]\label{ass:infty-rank-one}
Fix a symmetry block \(q\in\{1,\dots,s\}\), and write \(p_j:=q+js\) and
\(w_j:=p_j^{3/2+\beta}\alpha^{p_j}\) with \(\beta>0\) and \(\alpha>1\).
Assume that there exists a subcritical parameter interval
\(0<\varepsilon<\varepsilon_0\) together with quantities
\[
\rho_*(\varepsilon)>1,\qquad
z_*(\varepsilon)=\rho_*(\varepsilon)^s e^{i\phi(\varepsilon)},\qquad
\eta_*(\varepsilon)=\rho_*(\varepsilon)^{-2s},\qquad
L_*(\varepsilon):=\sum_{m\ge0}\frac{\eta_*(\varepsilon)^m}{m+1},
\]
and coefficient arrays \(R_p(m;\varepsilon)\) (\(p\equiv q \!\!\!\pmod s\))
such that \(\rho_*(\varepsilon)\downarrow1\) as \(\varepsilon\downarrow0\), equivalently
\(\eta_*(\varepsilon)\uparrow1\), so in particular \(L_*(\varepsilon)\to\infty\):

\begin{enumerate}[label=\textup{(A\arabic*)},leftmargin=2.8em]
\item\label{it:infty-A1}
\emph{Midpoint-circle coefficient bound.} There exists
\(M_0<\infty\) such that, with
\(\rho(\varepsilon):=\frac{1+\rho_*(\varepsilon)}{2}\),
\[
|R_p(m;\varepsilon)|\le M_0^p\,\rho(\varepsilon)^{-ms}
\qquad (m\ge0,\ p\equiv q \!\!\!\pmod s,\ 0<\varepsilon<\varepsilon_0).
\]

\item\label{it:infty-A2}
\emph{Uniform transfer along a single dominant orbit.} There exist
constants \(M_{\mathrm{tr}},C,L_0>0\) and amplitudes \(A_p(\varepsilon)\) such
that
\[
|A_p(\varepsilon)|\le C\,p\,L_0^{\,p-1}
\]
for all admissible \(p\), and
\[
R_p(m;\varepsilon)
=
A_p(\varepsilon)m^{-3/2}z_*(\varepsilon)^{-m}
+
O\left(
|A_p(\varepsilon)|(1+p^2)m^{-5/2}|z_*(\varepsilon)|^{-m}
\right)
\]
uniformly for \(m\ge M_{\mathrm{tr}}(1+p^2)\) and
\(0<\varepsilon<\varepsilon_0\).

\item\label{it:infty-A3}
\emph{Weighted Gram realization.} One has \(\alpha>\max\{M_0,L_0\}\), and for each
\(\varepsilon\) there exists a bounded synthesis operator
\(\widetilde V^{(q)}(\varepsilon):\ell^2(\mathbb N_0)\to \ell^2(\mathbb N_0)\)
whose Gram block
\(\widetilde G^{(q)}_0(\varepsilon):=\widetilde V^{(q)}(\varepsilon)^*
\widetilde V^{(q)}(\varepsilon)\) has entries
\[
(\widetilde G^{(q)}_0(\varepsilon))_{j_1j_2}
=
\frac{1}{w_{j_1}w_{j_2}}
\sum_{m\ge0}
\frac{(p_{j_2}+sm)^2}{\sqrt{p_{j_1}p_{j_2}}}\,
\overline{R_{p_{j_1}}(m+\Delta;\varepsilon)}\,
R_{p_{j_2}}(m;\varepsilon),
\qquad
\Delta:=j_2-j_1,
\]
for \(j_1\le j_2\), and by Hermitian symmetry otherwise.

\item\label{it:infty-A4}
\emph{Critical spike convergence.} The vectors
\[
\widetilde d^{(q)}(\varepsilon)(j)
:=
e^{-ij\phi(\varepsilon)}
\frac{s}{\sqrt{p_j}}\,
\frac{\overline{A_{p_j}(\varepsilon)}}{w_j},
\qquad j\ge0,
\]
converge in \(\ell^2(\mathbb N_0)\) to a nonzero limit
\(\widetilde d^{(q)}_c\neq0\) as \(\varepsilon\downarrow0\).
\end{enumerate}
\end{assumption}

\begin{proposition}[Abstract \(N=\infty\) rank-one criterion]
\label{prop:universality-infty}
Fix a symmetry block \(q\in\{1,\dots,s\}\), and assume
Assumption~\ref{ass:infty-rank-one}\textup{\ref{it:infty-A1}--\ref{it:infty-A3}}.
Define the spike profile
\[
\widetilde d^{(q)}(\varepsilon)(j)
:=
 e^{-ij\phi(\varepsilon)}
 \frac{s}{\sqrt{p_j}}\,
 \frac{\overline{A_{p_j}(\varepsilon)}}{w_j},
 \qquad j\ge0.
\]
Then there exist bounded operators
\(\widetilde C^{(q)}_0(\varepsilon)\) on \(\ell^2(\mathbb N_0)\) such that
\[
\widetilde G^{(q)}_0(\varepsilon)
=
L_*(\varepsilon)\,
\widetilde d^{(q)}(\varepsilon)\otimes
\widetilde d^{(q)}(\varepsilon)
+
\widetilde C^{(q)}_0(\varepsilon),
\qquad
\sup_{0<\varepsilon<\varepsilon_0}
\|\widetilde C^{(q)}_0(\varepsilon)\|
<\infty .
\]
If, in addition,
Assumption~\ref{ass:infty-rank-one}\textup{\ref{it:infty-A4}} holds, then
\[
\mu_1\!\left(\widetilde G^{(q)}_0(\varepsilon)\right)
=
\|\widetilde d^{(q)}_c\|^2\,L_*(\varepsilon)+O(1),
\]
and, for every \(k\ge2\),
\[
\sup_{0<\varepsilon<\varepsilon_0}
\mu_k\!\left(\widetilde G^{(q)}_0(\varepsilon)\right)
<\infty .
\]
\end{proposition}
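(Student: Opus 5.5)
The plan is to split each weighted Gram entry from (A3) into a \emph{tail part} and a \emph{core part}, using a cutoff $m\ge m_0(j_1,j_2):=M_{\mathrm{tr}}(1+p_{j_2}^2)$. On the tail, (A2) applies to both factors.

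\emph{Tail part.} Inserting (A2) into the summand of (A3), the leading product is
\[
\overline{A_{p_{j_1}}}\,A_{p_{j_2}}\,(m+\Delta)^{-3/2}m^{-3/2}\,\overline{z_*}^{-(m+\Delta)}z_*^{-m}
=\overline{A_{p_{j_1}}}A_{p_{j_2}}\,e^{i\Delta\phi}\rho_*^{-s\Delta}\,\eta_*^{m}\,(m+\Delta)^{-3/2}m^{-3/2}.
\]
Multiplying by the prefactor $(p_{j_2}+sm)^2/\sqrt{p_{j_1}p_{j_2}}\approx s^2m^2/\sqrt{p_{j_1}p_{j_2}}$ gives $s^2\eta_*^m/m$ times factors depending only on $j_1,j_2$. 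I will then replace $(m+\Delta)^{-3/2}m^{1/2}$ by $m^{-1}$ and $\rho_*^{-s\Delta}$ by $1$, and complete the $m$-sum to the full series $L_*(\varepsilon)$. This produces exactly $L_*\,\widetilde d(j_1)\overline{\widetilde d(j_2)}$, with the phases matching the definition of $\widetilde d^{(q)}$, i.e.\ the kernel of $L_*\,\widetilde d\otimes\widetilde d$.

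\emph{Error terms.} Each replacement has to be shown to cost only a bounded kernel after division by $w_{j_1}w_{j_2}$. There are four sources of error.
\begin{itemize}
\item The completion of the sum to $m<m_0$ costs $\lesssim\log m_0\lesssim\log(1+p_{j_2})$.
\item The replacement $(m+\Delta)^{-3/2}\to m^{-3/2}$ costs a sum $\sum_m \min(1,\Delta/m)/m\lesssim \log(1+\Delta)$.
\item The factor $1-\rho_*^{-s\Delta}$ is bounded.
\item The (A2) remainders carry an extra $(1+p^2)/m$, so their sum converges and is $\lesssim 1+p^2$.
\end{itemize}
All these are at most polynomial in $p_{j_1},p_{j_2}$. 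They are multiplied by $|A_{p_{j_1}}A_{p_{j_2}}|/(w_{j_1}w_{j_2})\le C^2p_{j_1}p_{j_2}L_0^{p_{j_1}+p_{j_2}}/(w_{j_1}w_{j_2})$, which decays like $(L_0/\alpha)^{p_{j_1}+p_{j_2}}$ because $\alpha>L_0$. Hence the error kernel is Hilbert--Schmidt uniformly in $\varepsilon$.

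\emph{Core part.} For $m<m_0$ I use only (A1). It bounds the summand by $(p_{j_2}+sm)^2M_0^{p_{j_1}+p_{j_2}}\rho(\varepsilon)^{-s(2m+\Delta)}\le C(1+p_{j_2}^2)^3M_0^{p_{j_1}+p_{j_2}}$. Summing over $m<m_0$ gives an entry of size at most $\mathrm{poly}(p)\,M_0^{p_{j_1}+p_{j_2}}$. After weighting this is again a uniformly Hilbert--Schmidt kernel, since $\alpha>M_0$.

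\emph{Assembling the estimate.} Set $\widetilde C^{(q)}_0:=\widetilde G^{(q)}_0-L_*\,\widetilde d\otimes\widetilde d$. The previous steps show that its kernel has Hilbert--Schmidt norm bounded uniformly in $\varepsilon$, hence its operator norm is uniformly bounded. The same decay bound on $|A_p|/w_j$ shows that $\widetilde d^{(q)}(\varepsilon)\in\ell^2$, so the rank-one term is well defined.

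\emph{Spectral conclusion.} With (A4) in hand, I repeat verbatim the min--max argument from the proof of Theorem~\ref{thm:universality}. The Rayleigh quotient at $\widetilde d/\|\widetilde d\|$ gives the lower bound for $\mu_1$, the Cauchy--Schwarz inequality gives the matching upper bound, and testing against any subspace containing $\widetilde d$ bounds $\mu_k$ for $k\ge2$. The convergence $\|\widetilde d(\varepsilon)\|^2\to\|\widetilde d_c\|^2$ converts the resulting $L_*\|\widetilde d(\varepsilon)\|^2$ into $\|\widetilde d_c\|^2L_*+o(L_*)$. To reach the stated $O(1)$, I additionally need $\|\widetilde d(\varepsilon)\|^2-\|\widetilde d_c\|^2=O(1/L_*)$. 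If (A4) does not provide this rate, I would state the leading term as $\|\widetilde d(\varepsilon)\|^2L_*+O(1)$, in the same way the finite-$N$ theorem implicitly does.

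The main obstacle is the bookkeeping of the tail replacement. The losses $\log m_0\sim\log p$ and the phase factor $\rho_*^{-s\Delta}e^{i\Delta\phi}$ must be checked to be absorbed by the exponential weight margin uniformly in $\varepsilon$. Beyond that, the phase convention in $\widetilde d^{(q)}$ must be verified to be consistent with the definition $(u\otimes v)\xi=\langle v,\xi\rangle u$, which is linear in the second slot.
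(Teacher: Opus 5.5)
Your architecture matches the paper's proof. You split at $m_0\sim M_{\mathrm{tr}}(1+P^2)$, use (A2) on the tail and (A1) with $\alpha>M_0$ on the core, pay $\log(1+P)$ to complete the logarithmic sum, and then apply min--max. However, your third error bullet fails as stated. The factor $\theta(\varepsilon)^\Delta=\rho_*(\varepsilon)^{-s\Delta}$ multiplies the whole tail sum $\sum_{m\ge m_0}\eta_*(\varepsilon)^m/(m+1)$, and that sum equals $L_*(\varepsilon)-O(\log(1+P))$. Replacing $\theta^\Delta$ by $1$ therefore costs, up to bounded terms,
\[
\bigl(1-\theta(\varepsilon)^{\Delta}\bigr)\,L_*(\varepsilon)\,\bigl|\widetilde d^{(q)}_{j_1}(\varepsilon)\bigr|\,\bigl|\widetilde d^{(q)}_{j_2}(\varepsilon)\bigr| .
\]
Knowing only that $1-\theta^\Delta$ is bounded leaves a remainder of size $O(L_*(\varepsilon))$, which is exactly the divergent scale you are trying to isolate. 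The exponential weight margin cannot absorb this, because the divergence is in $\varepsilon$, not in $p$. So your closing worry about absorbing $\rho_*^{-s\Delta}e^{i\Delta\phi}$ ``by the weight margin'' points at the wrong mechanism. The phase $e^{i\Delta\phi}$ is harmless, since it is exactly what $\widetilde d^{(q)}$ carries; the problem is the modulus $\theta^\Delta$.

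The missing step is quantitative. Use $|1-\theta^\Delta|\le\Delta(1-\theta)$, and absorb $\Delta\lesssim p_{j_1}+p_{j_2}$ into the envelope $p_j^{-1-\beta}(L_0/\alpha)^{p_j}$. Then use that $\theta$ and $\eta_*$ are controlled by the same modulus: $1-\theta(\varepsilon)=1-\rho_*(\varepsilon)^{-s}=O(\rho_*(\varepsilon)-1)$, while $L_*(\varepsilon)=\log\frac{1}{\rho_*(\varepsilon)-1}+O(1)$. Hence $L_*(\varepsilon)\bigl(1-\theta(\varepsilon)\bigr)\to0$.

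This is the paper's final step. It first writes $\widetilde G^{(q)}_0=L_*\,D\,T_{\theta}D^*+\widetilde B^{(q)}_0$ with $(T_\theta)_{j_1j_2}=\theta^{|j_1-j_2|}$ and $\widetilde B^{(q)}_0$ uniformly bounded. It then proves
\[
L_*(\varepsilon)\bigl\|D(\varepsilon)\bigl(T_{\theta(\varepsilon)}-I\bigr)D(\varepsilon)^*\bigr\|_{\mathrm{HS}}\le C\,L_*(\varepsilon)\bigl(1-\theta(\varepsilon)\bigr)\to0 .
\]
With this replacement for your third bullet, your remainder is uniformly Hilbert--Schmidt and the rest of your argument goes through.

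Your remark about the $O(1)$ term in $\mu_1$ is correct. Assumption (A4) gives convergence without a rate, so the min--max argument only yields $\mu_1=\|\widetilde d^{(q)}(\varepsilon)\|^2L_*+O(1)=\|\widetilde d^{(q)}_c\|^2L_*+o(L_*)$. The paper's ``verbatim'' appeal to the proof of Theorem~\ref{thm:universality} passes over the same point. The stated $O(1)$ needs the additional rate $\|\widetilde d^{(q)}(\varepsilon)\|^2-\|\widetilde d^{(q)}_c\|^2=O\bigl(1/L_*(\varepsilon)\bigr)$.
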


\begin{proof}
Fix \(q\), write \(p_j=q+js\), and let
\(\widetilde V^{(q)}(\varepsilon)\) be the synthesis operator from
Assumption~\ref{ass:infty-rank-one}\textup{\ref{it:infty-A3}}. Set
\[
\theta(\varepsilon):=\rho_*(\varepsilon)^{-s}\in(0,1),
\qquad
\eta_*(\varepsilon):=|z_*(\varepsilon)|^{-2}=\rho_*(\varepsilon)^{-2s},
\]
and let \(D(\varepsilon)\) denote the diagonal operator with diagonal
\(\widetilde d^{(q)}(\varepsilon)(j)\). We follow the same decomposition as in
Proposition~\ref{prop:rank-one-extraction}, with
Assumption~\ref{ass:infty-rank-one}\textup{\ref{it:infty-A1}--\ref{it:infty-A2}}
in place of Lemma~\ref{lem:cauchy-Rp} and Lemma~\ref{lem:uniform-transfer}.

Assume first that \(j_2\ge j_1\), set \(\Delta:=j_2-j_1\),
\(P:=\max\{p_{j_1},p_{j_2}\}\), and
\(M:=\lceil M_{\mathrm{tr}}(1+P^2)\rceil\). By
Assumption~\ref{ass:infty-rank-one}\textup{\ref{it:infty-A2}}, for
\(m\ge M\),
\[
R_{p_{j_\nu}}(m+\delta_\nu;\varepsilon)
=
A_{p_{j_\nu}}(\varepsilon)(m+\delta_\nu)^{-3/2}
z_*(\varepsilon)^{-(m+\delta_\nu)}
+
\mathcal E_{\nu,m}(\varepsilon),
\]
with \(\delta_1=\Delta\), \(\delta_2=0\), and
\[
|\mathcal E_{\nu,m}(\varepsilon)|
\le
C\,|A_{p_{j_\nu}}(\varepsilon)|(1+p_{j_\nu}^2)(m+1)^{-5/2}
|z_*(\varepsilon)|^{-(m+\delta_\nu)}.
\]
Substituting into the Gram formula from
Assumption~\ref{ass:infty-rank-one}\textup{\ref{it:infty-A3}}, the product of
the two principal parts gives
\[
\widetilde d^{(q)}_{j_1}(\varepsilon)
\overline{\widetilde d^{(q)}_{j_2}(\varepsilon)}
\theta(\varepsilon)^\Delta
a_{m,\Delta}(P)\eta_*(\varepsilon)^m,
\]
where
\[
a_{m,\Delta}(P):=
\frac{(p_{j_2}+sm)^2}{s^2\,m^{3/2}(m+\Delta)^{3/2}}.
\]
Exactly as in the proof of Proposition~\ref{prop:rank-one-extraction},
\[
a_{m,\Delta}(P)=\frac{1}{m+1}+O\!\left(\frac{1+P^2}{(m+1)^2}\right)
\]
uniformly for \(m\ge M\). Therefore the tail contribution has the form
\begin{equation}\label{eq:infty-tail-structure}
(\widetilde G^{(q)}_{0,j_1j_2}(\varepsilon))_{\mathrm{tail}}
=
\widetilde d^{(q)}_{j_1}(\varepsilon)
\overline{\widetilde d^{(q)}_{j_2}(\varepsilon)}
\theta(\varepsilon)^\Delta
\sum_{m\ge M}\frac{\eta_*(\varepsilon)^m}{m+1} +
\mathcal R^{\infty}_{j_1j_2}(\varepsilon),
\end{equation}
where, after enlarging the constant if necessary,
\[
|\mathcal R^{\infty}_{j_1j_2}(\varepsilon)|
\le
C(1+p_{j_1}^2)(1+p_{j_2}^2)
|\widetilde d^{(q)}_{j_1}(\varepsilon)|
|\widetilde d^{(q)}_{j_2}(\varepsilon)|.
\]
Since \(\alpha>L_0\) and
\(|A_p(\varepsilon)|\le C pL_0^{p-1}\), one has the uniform spike envelope
\[
|\widetilde d^{(q)}(\varepsilon)(j)|
\le
C\,p_j^{-1-\beta}\Bigl(\frac{L_0}{\alpha}\Bigr)^{p_j},
\]
so the right-hand side is a Hilbert--Schmidt envelope on
\(\ell^2(\mathbb N_0)\).

Next, Assumption~\ref{ass:infty-rank-one}\textup{\ref{it:infty-A1}} and the
inequality \(\alpha>M_0\) imply, exactly as in the proof of
Proposition~\ref{prop:rank-one-extraction}, that the portion of the sum with
\(0\le m<M\) defines a uniformly Hilbert--Schmidt operator. It therefore
remains to estimate the omitted tail in \eqref{eq:infty-tail-structure},
namely
\[
\widetilde d^{(q)}_{j_1}(\varepsilon)
\overline{\widetilde d^{(q)}_{j_2}(\varepsilon)}
\theta(\varepsilon)^\Delta
\sum_{m=0}^{M-1}\frac{\eta_*(\varepsilon)^m}{m+1},
\]
is also uniformly Hilbert--Schmidt, since
\(\sum_{m=0}^{M-1}(m+1)^{-1}\le C(1+\log(1+P))\) and the extra logarithmic
factor is absorbed by the exponentially decaying spike envelope. Hence there
exists a uniformly bounded operator \(\widetilde B^{(q)}_0(\varepsilon)\) such
that
\[
\widetilde G^{(q)}_0(\varepsilon)
=
L_*(\varepsilon)
D(\varepsilon)T_{\theta(\varepsilon)}D(\varepsilon)^*
+
\widetilde B^{(q)}_0(\varepsilon),
\qquad
\sup_{0<\varepsilon<\varepsilon_0}
\|\widetilde B^{(q)}_0(\varepsilon)\|<\infty,
\]
where \((T_\theta)_{j_1j_2}=\theta^{|j_1-j_2|}\).

Finally, the same Hilbert--Schmidt estimate as in proof of
Proposition~\ref{prop:rank-one-extraction} gives
\[
L_*(\varepsilon)
\bigl\|D(\varepsilon)(T_{\theta(\varepsilon)}-I)D(\varepsilon)^*\bigr\|_{\mathrm{HS}}
\longrightarrow 0,
\]
since \(1-\theta(\varepsilon)=O(\rho_*(\varepsilon)-1)\) and
\(L_*(\varepsilon)=-\eta_*(\varepsilon)^{-1}\log(1-\eta_*(\varepsilon))
=\log\!\frac{1}{\rho_*(\varepsilon)-1}+O(1)\) as
\(\eta_*(\varepsilon)=\rho_*(\varepsilon)^{-2s}\uparrow1\), hence
\(L_*(\varepsilon)(1-\theta(\varepsilon))\to0\). Thus
\[
\widetilde G^{(q)}_0(\varepsilon)
=
L_*(\varepsilon)
\widetilde d^{(q)}(\varepsilon)\otimes
\widetilde d^{(q)}(\varepsilon)
+
\widetilde C^{(q)}_0(\varepsilon),
\qquad
\sup_{0<\varepsilon<\varepsilon_0}
\|\widetilde C^{(q)}_0(\varepsilon)\|<\infty.
\]
The case \(j_1>j_2\) follows by Hermitian symmetry.

If Assumption~\ref{ass:infty-rank-one}\textup{\ref{it:infty-A4}} also holds,
the min--max argument used in the proof of
Theorem~\ref{thm:universality} applies verbatim to the positive operators
\(\widetilde G^{(q)}_0(\varepsilon)\), yielding
\[
\mu_1\bigl(\widetilde G^{(q)}_0(\varepsilon)\bigr)
=
\|\widetilde d_c^{(q)}\|^2\,L_*(\varepsilon)+O(1),
\qquad
\sup_{0<\varepsilon<\varepsilon_0}
\mu_k\bigl(\widetilde G^{(q)}_0(\varepsilon)\bigr)<\infty
\quad (k\ge2).
\]
\end{proof}

Proposition~\ref{prop:universality-infty} is a conditional
operator-theoretic criterion for infinite-dimensional reductions arising
naturally in Laplacian growth. For orientation,
Section~\ref{sec:infty-explicit-leaves} cleanly separates theorem-level conditions
from characteristic diagnostics. On the single-pole leaf,
Proposition~\ref{prop:single-pole-rho-star} proves the exact identity
\(\rho_*(\zeta)=\rho_{\mathrm{char}}(\zeta)\). On the single-log leaf,
Proposition~\ref{prop:single-log-rho-star} proves the weaker but intrinsic
same-sheet statement that every finite boundary singularity of the continued
Taylor branch is a critical-value obstruction on the logarithmic sheet reached
from the origin. Outside those two propositions, the explicit examples are used
only to test the hypotheses and to organize the characteristic geometry. In
particular, no closed formula for \(\rho_{\mathrm{char}}\) is promoted to a
closed theorem for \(\rho_*\) on logarithmic leaves without an additional
sheet-selection argument.

% =======================

\section{Interpretation in Laplacian growth}
\label{sec:laplacian-growth}

% =======================

The universality mechanism of Theorem~\ref{thm:universality} becomes dynamical
along classical Laplacian growth with injection at infinity: the
Polubarinova--Galin evolution produces a distinguished real--analytic
trajectory $T\mapsto \zeta(T)$ in the reduced parameters~\eqref{eq:reduced-parameters}.
We first recall this trajectory on a fixed polynomial leaf and define the
spectral critical time $T_c$ (Section \ref{subsec:lg-trajectories}). We then apply
Theorem~\ref{thm:universality} along that trajectory, under the same local
hypotheses as in Section~\ref{sec:universal-instability}, to obtain the
logarithmic blow--up of the top variational eigenvalue as $T\uparrow T_c$
(Section \ref{subsec:spectral-lg}), and we discuss how this spectral threshold
compares to geometric breakdown of univalence (Section \ref{subsec:spectral-vs-geom}). Finally, to place the polynomial
ansatz in a broader Laplacian--growth landscape, we consider two explicit
$N=\infty$ leaves (logarithms and poles) that provide explicit characteristic
formulas and principal-sheet numerical illustrations for the sheet-dependent
quantity \(\rho_{\mathrm{char}}(\zeta)\) (Section~\ref{sec:infty-explicit-leaves}).
That section contains two rigorous results for the single-pole and single-log
leaves, together with characteristic diagnostics illustrating the corresponding
sheet structure.

\subsection{Laplacian-growth trajectories on a polynomial leaf}
\label{subsec:lg-trajectories}

Let $D(T)\subset\CC$ be a Laplacian growth domain driven by injection at
infinity, and let
\[
	f(w;T)\colon \{|w|>1\}\to \CC\setminus\overline{D(T)}
\]
be the exterior conformal map normalized by $f(w;T)=r(T)\,w+O(1)$ as
$w\to\infty$, with conformal radius $r(T)>0$.
The evolution is governed by the Polubarinova--Galin equation
\begin{equation}\label{eq:PG}
	\Re\!\left(
	\overline{w\,\partial_w f(w;T)}\;
	\partial_T f(w;T)\right)=1,
	\qquad |w|=1;
\end{equation}
see \cite{Galin1945,PolubarinovaKochina1945}.
We now restrict attention to a Laplacian-growth reduction for which, on a
time interval of interest, the conformal map stays inside the polynomial ansatz
with fixed exponents $2\le s_1<\cdots<s_N$,
\begin{equation}\label{eq:poly-submanifold-lg}
	f(w;T)=r(T)\,w+\sum_{n=1}^{N}a_n(T)\,w^{1-s_n}.
\end{equation}
Equivalently, we assume that the reduction yields a real-analytic family of coefficients solving the Polubarinova--Galin equation on some interval,
and that the corresponding conformal maps remain univalent on $|w|>1$ throughout that interval.

Let $[0,T_{\mathrm{univ}})$ denote the maximal interval on which the chosen
reduction exists and is univalent on $|w|>1$. The reduced parameters
\[
	\zeta_n(T):=\frac{a_n(T)}{r(T)},\qquad n=1,\dots,N,
\]
then define a real--analytic curve $T\mapsto\zeta(T)\in\CC^N$ for
$T\in[0,T_{\mathrm{univ}})$. Since the scale--invariant Hessian $H(\zeta)$
depends only on $\zeta$, any result from
Sections~\ref{sec:framework}--\ref{sec:universal-instability} whose local
hypotheses are satisfied near the trajectory may be applied along this curve.
In particular, once the hypotheses of Theorem~\ref{thm:universality} are
verified near a critical time, the universality mechanism becomes dynamical.
What comes from the Laplacian-growth dynamics itself is the real--analytic
trajectory on the chosen polynomial leaf. The theorem below applies the local finite-mode analysis along that reduced trajectory, rather than deriving the hypotheses of Section~\ref{sec:universal-instability} from the dynamics.

\begin{definition}[Spectral critical time]\label{def:Tc}
	The \emph{spectral critical time} of the trajectory $\zeta(T)$ is
	\begin{equation}\label{eq:Tc-def}
		T_c:=\inf\bigl\{T>0:\;\zeta(T)\in\mathcal C\bigr\},
	\end{equation}
	where $\mathcal C$ is the restricted critical locus
	from Definition~\ref{def:alg-critical-locus}. Thus $T_c$ records the first
	intersection with the part of the radius-one locus where the simple-dominant
	regime needed for the spectral theorem is available, not merely the full set
	$\{\rho_*=1\}$. If the trajectory does not meet~$\mathcal C$, we set
	$T_c:=\infty$.
\end{definition}

\begin{remark}[Moment interpretation]\label{rem:LG-moments}
	For injection at infinity, Richardson's theorem~\cite{Richardson1972}
	implies that the higher harmonic moments (Toda times $t_k$, $k\ge1$)
	are conserved, while the area moment $t_0$ grows linearly in~$T$; see
	also \cite{MineevWeinstein2000}.
	Thus $T\mapsto f(\cdot;T)$ is the distinguished $t_0$--trajectory
	in the integrable hierarchy. The argument relies only on the real-analyticity of $\zeta(T)$ and on transversality at the critical crossing.
\end{remark}

\subsection{Spectral instability along Laplacian growth}
\label{subsec:spectral-lg}

\begin{lemma}[Dominant modulus as a transversal coordinate]
	\label{lem:lg-transversal}
	Let $\zeta(T)$ be a $C^1$ trajectory intersecting the critical locus $\mathcal C$
	at a simple critical point $\zeta_c=\zeta(T_c)$. Assume moreover that $T_c$ is the first intersection time, so that $\rho_*(\zeta(T))>1$ for $T<T_c$ sufficiently close to $T_c$.
	Assume that in a neighborhood of $\zeta_c$ the Taylor branch has a unique dominant $s$--orbit of simple branch points,
	with dominant modulus $\rho_*(\zeta)=|x_*(\zeta)|$, and that the scalar nondegeneracy condition
	\[
		\frac{d}{dT}\rho_*(\zeta(T))\Big|_{T=T_c}\neq0
	\]
	holds.
	Then $\rho_*(\zeta(T))$ is $C^1$ near $T_c$ and
	\[
		\rho_*(\zeta(T))-1=\sigma\,(T_c-T)+O\!\left((T_c-T)^2\right)
	\]
	for some $\sigma>0$.
\end{lemma}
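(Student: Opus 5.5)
The plan is to reduce everything to Lemma~\ref{lem:local-branch-data} and then do a first-order Taylor expansion of a scalar $C^1$ function of $T$. Since $\zeta_c=\zeta(T_c)$ is a simple critical point, Lemma~\ref{lem:local-branch-data} gives a neighborhood $\mathcal O$ of $\zeta_c$ and $C^1$ maps $\zeta\mapsto(x_*(\zeta),\lambda(\zeta))$ solving the characteristic system. By hypothesis the continued point remains a representative of the unique dominant $s$-orbit on a neighborhood of $\zeta_c$. After shrinking $\mathcal O$ if necessary, the second part of that lemma then gives $\rho_*(\zeta)=|x_*(\zeta)|$ on $\mathcal O$, with $\rho_*$ of class $C^1$ there because $x_*(\zeta)\neq0$.

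By continuity of the $C^1$ trajectory, $\zeta(T)\in\mathcal O$ for $|T-T_c|$ small. The composition $g(T):=\rho_*(\zeta(T))=|x_*(\zeta(T))|$ is therefore $C^1$ near $T_c$, which is the first assertion. Since $\zeta_c\in\mathcal C$, we have $g(T_c)=1$. Set $\sigma:=-g'(T_c)$, which is nonzero by the scalar nondegeneracy hypothesis.

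Next I determine the sign of $\sigma$. By hypothesis $g(T)>1$ for $T<T_c$ close to $T_c$, so the one-sided difference quotients $(g(T)-g(T_c))/(T-T_c)$ are negative there. Hence $g'(T_c)\le0$, and together with $g'(T_c)\ne0$ this gives $\sigma>0$.

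The expansion $g(T)-1=\sigma(T_c-T)+O((T_c-T)^2)$ requires a little more than $C^1$, since $C^1$ alone only yields $o(T_c-T)$. This is the one step needing care. I would upgrade regularity as follows. For polynomial leaves $F$ is polynomial in $(y,x,\zeta)$, so the implicit function theorem in Lemma~\ref{lem:local-branch-data} actually yields real-analytic (indeed holomorphic in $\zeta$) branch data $x_*,\lambda$. Then $|x_*|$ is real-analytic near $\zeta_c$, because $x_*\neq0$. In the intended application the trajectory $\zeta(T)$ is real-analytic, as in Section~\ref{subsec:lg-trajectories}, so $g$ is real-analytic near $T_c$ and Taylor's theorem gives the $O((T_c-T)^2)$ remainder. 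For a merely $C^1$ trajectory, I would instead state the conclusion with $o(T_c-T)$, or with $O((T_c-T)^2)$ under a $C^{1,1}$ assumption on the path. In either form, $\varepsilon:=\rho_*(\zeta(T))-1$ becomes an admissible transversal parameter in the sense of Definition~\ref{def:transversal-path}.
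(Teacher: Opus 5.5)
Your proposal follows the same route as the paper's proof: Lemma~\ref{lem:local-branch-data} together with the dominance hypothesis makes $\rho_*=|x_*|$ a $C^1$ function near $\zeta_c$, subcriticality for $T<T_c$ forces the nonzero derivative of $\rho_*(\zeta(T))$ at $T_c$ to be negative, and a first-order expansion finishes. Your caution about the remainder is justified. The paper's proof reads the $O((T_c-T)^2)$ term off a ``Taylor expansion'' of a function known only to be $C^1$, which yields only $o(T_c-T)$; indeed, applying a $C^1$ time change $T\mapsto T+|T-T_c|^{3/2}$ to a real-analytic transversal path already breaks the quadratic bound. Your repair is correct: holomorphic dependence of $(x_*,\lambda)$ on $\zeta$, combined with a real-analytic or $C^{1,1}$ trajectory as assumed in Theorem~\ref{thm:lg-spectral}, gives the quadratic remainder. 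Moreover, the $o(T_c-T)$ form already suffices there, since only strict monotonicity of $\varepsilon(T)$ and $\log\varepsilon(T)=\log(T_c-T)+O(1)$ are used.
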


\begin{proof}
	By Lemma~\ref{lem:local-branch-data}, together with the assumption that the continued point remains a representative of the unique dominant orbit, one may shrink the neighborhood of
	$\zeta_c$ so that the representative dominant branch point
	$x_*(\zeta)$ depends $C^1$ on $\zeta$, and therefore so does its
	modulus $\rho_*(\zeta)=|x_*(\zeta)|$.
	Along the trajectory, Taylor expansion gives
	\[
		\rho_*(\zeta(T))-1 = (T-T_c)\,\frac{d}{dT}\rho_*(\zeta(T))\big|_{T=T_c}+O((T-T_c)^2).
	\]
	By the scalar nondegeneracy hypothesis of the lemma, the derivative is nonzero. By the subcriticality assumption $\rho_*(\zeta(T))>1$ for $T<T_c$ close to $T_c$, this derivative must in fact be negative, hence $\sigma:=-\frac{d}{dT}\rho_*(\zeta(T))|_{T=T_c}>0$ and the claim follows.
\end{proof}

\begin{theorem}[Spectral instability along Laplacian growth]
	\label{thm:lg-spectral}
	Let $\zeta(T)$ be the reduced-parameter curve of a Laplacian-growth reduction
	on a fixed polynomial leaf, as in Section~\ref{subsec:lg-trajectories}, and
	assume that it is real analytic on a left neighborhood $(T_c-\delta_0,T_c]$
	for some $\delta_0>0$.
	Assume that $T_c<\infty$.
	Suppose that $\zeta(T_c)=\zeta_c\in\mathcal C$
	is a simple critical point and that the scalar nondegeneracy condition
	\[
		\frac{d}{dT}\rho_*(\zeta(T))\Big|_{T=T_c}\neq0
	\]
	holds. Assume further that for subcritical times $T<T_c$ sufficiently close to
	$T_c$ the Taylor branch lies in the dominant regime with a unique dominant
	$s$-orbit of simple branch points, and that the uniform square-root
	continuation condition of Definition~\ref{def:uniform-delta} holds along the
	trajectory near $T_c$.

	Fix $q\in\{1,\dots,s\}$ and, for subcritical $T<T_c$ sufficiently close to
	$T_c$, define
	\[
	\mu_k^{(q)}(T):=\mu_k\bigl(\widetilde G^{(q)}(\zeta(T))\bigr).
	\]
	By Remark~\ref{rem:HG-spectrum}, these levels agree on that interval with
	the nonzero variational levels of
	\(\widetilde H^{(q)}(\zeta(T))\).
	Then there exists $\delta>0$ such that, on the left neighborhood
	$T\in(T_c-\delta,T_c)$, the top variational eigenvalue
	$\mu_1^{(q)}(T)$ diverges as $T\uparrow T_c$, while the higher
	variational eigenvalues remain uniformly bounded:
	\[
		\sup_{T_c-\delta<T<T_c}\mu_k^{(q)}(T)<\infty,
		\qquad k\ge2.
	\]
	More precisely,
	\begin{equation}\label{eq:mu-lg}
		\mu_1^{(q)}(T)
		=
		\Gamma^{(q)}\,\log\frac{1}{T_c-T}+O(1),
		\qquad T\uparrow T_c,
	\end{equation}
	where $\Gamma^{(q)}=\|\widetilde d_c^{(q)}\|^2>0$ is the constant from
	Theorem~\ref{thm:universality}.
\end{theorem}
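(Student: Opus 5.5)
The plan is to reduce Theorem~\ref{thm:lg-spectral} to Theorem~\ref{thm:universality} by a change of parameter along the trajectory, and then to convert the logarithmic scale $L(\varepsilon)$ into $\log\frac{1}{T_c-T}$.

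\emph{Step 1: the time-to-$\varepsilon$ change of variables.} Since $\zeta(T)$ is real analytic, hence $C^1$, on $(T_c-\delta_0,T_c]$, and $T_c$ is the first hitting time of $\mathcal C$, we have $\rho_*(\zeta(T))>1$ for $T<T_c$ close to $T_c$. Together with the dominance hypothesis this places us in the setting of Lemma~\ref{lem:lg-transversal}. That lemma gives
\[
\rho_*(\zeta(T))-1=\sigma(T_c-T)+O\bigl((T_c-T)^2\bigr)
\]
with $\sigma>0$. In particular $T\mapsto\varepsilon(T):=\rho_*(\zeta(T))-1$ is $C^1$ and strictly decreasing on some interval $(T_c-\delta,T_c)$, with $\varepsilon(T)\downarrow0$. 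Its inverse $\varepsilon\mapsto T(\varepsilon)$ is therefore $C^1$ on $(0,\varepsilon_0)$. Setting $\zeta(\varepsilon):=\zeta(T(\varepsilon))$ yields a $C^1$ path with $\rho_*(\zeta(\varepsilon))=1+\varepsilon$, which is exactly the reparametrization required in (H2) and Definition~\ref{def:transversal-path}.

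\emph{Step 2: verification of Assumption~\ref{ass:standing-hypotheses}.}
\begin{itemize}
\item (H1) holds because $\zeta_c$ is simple and the path is subcritical.
\item (H2) holds by Step 1.
\item (H3) is the assumed dominant regime for $T<T_c$.
\item (H4) is the assumed uniform continuation condition, transported to the $\varepsilon$-parametrization.
\end{itemize}
Definition~\ref{def:uniform-delta} is a statement about the family of Taylor sheets indexed by points of the path, so reparametrization leaves its constants unchanged. Theorem~\ref{thm:universality} then applies. It gives $\mu_1^{(q)}=\Gamma^{(q)}L(\varepsilon)+O(1)$ and $\sup_\varepsilon\mu_k^{(q)}<\infty$ for $k\ge2$. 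Because $\varepsilon\leftrightarrow T$ is a bijection between $(0,\varepsilon_0)$ and $(T_c-\delta,T_c)$, the bound for $k\ge2$ becomes the uniform bound over $T\in(T_c-\delta,T_c)$. Remark~\ref{rem:HG-spectrum} identifies these levels with those of $\widetilde H^{(q)}$.

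\emph{Step 3: conversion of the logarithmic scale.} Lemma~\ref{lem:Lambda-asymp} gives $L(\varepsilon)=\log\frac{1}{\varepsilon}+O(1)$. From Step 1,
\[
\varepsilon(T)=\sigma(T_c-T)\bigl(1+O(T_c-T)\bigr),
\]
so $\log\frac1{\varepsilon(T)}=\log\frac1{T_c-T}-\log\sigma+o(1)$. The error $\Gamma^{(q)}\cdot O(1)$ is absorbed into the remainder, which gives \eqref{eq:mu-lg}. The constant $\Gamma^{(q)}=\|\widetilde d_c^{(q)}\|^2$ depends only on the limiting critical data at $\zeta_c$ and on the fixed weights. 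It is therefore the same constant as in Theorem~\ref{thm:universality}.

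\emph{Main obstacle.} The only delicate point is Step 1. The differentiability of $\rho_*$ along the trajectory requires that the $C^1$ continuation $x_*(\zeta)$ from Lemma~\ref{lem:local-branch-data} stay the representative of the unique dominant orbit on a full left neighborhood of $T_c$. The plan is to derive this from the dominance hypothesis assumed for subcritical $T$ near $T_c$, together with continuity at $T_c$ and the uniqueness part of the implicit function theorem. This gives $\rho_*(\zeta(T))=|x_*(\zeta(T))|$ for $T\le T_c$, which is exactly the input Lemma~\ref{lem:lg-transversal} needs. Only the one-sided derivative from the left is needed, since the whole argument concerns $T\uparrow T_c$.
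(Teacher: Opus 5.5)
Your proposal is correct and follows essentially the same route as the paper. You reparametrize by $\varepsilon(T)=\rho_*(\zeta(T))-1$ using Lemma~\ref{lem:lg-transversal}, apply Theorem~\ref{thm:universality} to the reparametrized path, and convert $L(\varepsilon(T))$ into $\log\frac{1}{T_c-T}+O(1)$ via Lemma~\ref{lem:Lambda-asymp}.

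The tracking issue you single out is not derived in the paper; it is built into the hypotheses. As in Definition~\ref{def:transversal-path} and Lemma~\ref{lem:lg-transversal}, the scalar nondegeneracy condition only makes sense once the continued branch point is assumed to remain the dominant representative. You can rely on that reading, which is just as well, since IFT uniqueness alone does not show that the dominant singularities of $\zeta(T)$ accumulate on the orbit of $x_{*,c}$.
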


\begin{proof}
	This is an immediate application of Theorem~\ref{thm:universality} to the
	trajectory \(T\mapsto\zeta(T)\), reparametrized by the distance-to-criticality
	coordinate \(\varepsilon(T):=\rho_*(\zeta(T))-1\).
	Set $\varepsilon(T):=\rho_*(\zeta(T))-1>0$.
	By Lemma~\ref{lem:lg-transversal}, there exists $\sigma>0$ such that
	\begin{equation}\label{eq:eps-transversal}
		\varepsilon(T)=\sigma(T_c-T)+O((T_c-T)^2).
	\end{equation}
	In particular, after shrinking the left neighborhood of \(T_c\) if
	necessary, \(\varepsilon(T)\) is strictly monotone and may be used as a
	local parameter. The path \(T\mapsto\zeta(T)\), reparametrized by
	\(\varepsilon(T)\downarrow0\), therefore satisfies the assumptions of
	Theorem~\ref{thm:universality} on that neighborhood.
	For $T<T_c$ sufficiently close to $T_c$, Remark~\ref{rem:HG-spectrum}
	identifies the nonzero min--max levels of
	$\widetilde H^{(q)}(\zeta(T))$ and $\widetilde G^{(q)}(\zeta(T))$.
	Applying Theorem~\ref{thm:universality} to the Gram block along the
	trajectory therefore gives
	\[
		\mu_1^{(q)}(T)
		=
		\Gamma^{(q)}\,L(\varepsilon(T))+O(1).
	\]
	By Lemma~\ref{lem:Lambda-asymp},
	\[
		L(\varepsilon(T))
		=\log\!\bigl(1/\varepsilon(T)\bigr)+O(1)
		=\log\!\bigl(1/(T_c-T)\bigr)+O(1),
	\]
	which yields \eqref{eq:mu-lg}.
	The boundedness of the remaining variational eigenvalues follows from
	Theorem~\ref{thm:universality} applied on a sufficiently small left
	neighborhood of $T_c$.
\end{proof}

	Theorem~\ref{thm:universality} leads to a simple numerical verification for the
	onset of spectral criticality.  Along any path approaching a simple critical
	point transversally from the subcritical side, one expects in each symmetry
	block \(q\) that the leading variational eigenvalue satisfies
	\[
		\mu_1^{(q)}(\varepsilon)=\Gamma^{(q)}L(\varepsilon)+O(1),
		\qquad
		L(\varepsilon)=\log(1/\varepsilon),
	\]
	while the higher eigenvalues remain bounded.  Equivalently,
	\[
		\frac{\mu_1^{(q)}(\varepsilon)}{L(\varepsilon)}\to \Gamma^{(q)},
		\qquad
		\frac{\mu_k^{(q)}(\varepsilon)}{L(\varepsilon)}\to 0
		\quad (k\ge2).
	\]

	In practice, one therefore expects a linear tail fit for the stiff mode and
	vanishing normalized soft modes.  For finite truncations of the renormalized
	blocks \(\widetilde H^{(q)}\) (or, equivalently, \(\widetilde G^{(q)}\)),
	this provides a robust numerical signature of the rank--one logarithmic
	instability predicted by Theorem~\ref{thm:universality}.

\subsection{Spectral criticality precedes geometric breakdown}
\label{subsec:spectral-vs-geom}

We now compare the spectral threshold~$T_c$ with the classical geometric
threshold $T_{\mathrm{univ}}$ at which univalence is lost. The conclusion of this subsection is conditional: it applies only when the reduced map is still univalent at the spectral crossing.

\begin{definition}[Geometric critical time]\label{def:Tuniv}
	The \emph{geometric critical time} is
	\[
		T_{\mathrm{univ}}
		:=
		\sup\bigl\{T>0:\;f(\cdot;T)
		\text{ is univalent on }|w|>1\bigr\}.
	\]
	For polynomial maps on a fixed leaf, loss of univalence typically
	occurs through cusp formation, i.e.\ when $f'(w;T)=0$ for some
	$|w|=1$.
\end{definition}

\begin{definition}[Univalence-breakdown locus]\label{def:calG}
	For $\zeta\in\CC^N$ set
	\[
		g_\zeta(w):=w+\sum_{n=1}^N \zeta_n\,w^{1-s_n},
		\qquad |w|>1,
	\]
	so that $f(w;\zeta)=r\,g_\zeta(w)$ on the polynomial leaf.
	The \emph{univalence-breakdown locus} in parameter space is
	\[
		\mathcal U
		:=
		\bigl\{\zeta\in\CC^N:\;
		g_\zeta \text{ is not univalent on }|w|>1\bigr\}.
	\]
\end{definition}

 \begin{proposition}[Elementary separation under univalence at the spectral threshold]
	\label{prop:zc-before-zuniv}
	Let $\zeta:[0,T]\to\CC^N$ be a continuous curve and let $g_{\zeta(t)}$ be as in Definition~\ref{def:calG}.
	Assume there exists $t_c\in(0,T)$ such that $\rho_*(\zeta(t))>1$ for all $t<t_c$ and
	$\zeta(t_c)\in\mathcal C$ (Definition~\ref{def:alg-critical-locus}).
	Let $t_{\univ}$ be the first time (possibly $+\infty$) at which $g_{\zeta(t)}$ fails to be univalent on $|w|>1$.
	If $\zeta(t_c)\notin\mathcal U$ (Definition~\ref{def:calG}), i.e.\ $g_{\zeta(t_c)}$ \emph{is} univalent on $|w|>1$,
	then
	\[
		t_c < t_{\univ}.
	\]
	When $\zeta(\,\cdot\,)$ is the Laplacian--growth trajectory parameterized by the growth time $T$,
	we write $T_c:=t_c$ and $T_{\univ}:=t_{\univ}$. Hence, if $\zeta(T_c)\notin\mathcal U$ then
	$T_c<T_{\univ}$.
\end{proposition}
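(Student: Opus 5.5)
The claim is essentially order-theoretic, so the plan is to argue directly from the definition of \(t_{\univ}\) as a first failure time, with no spectral input at all.

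First, interpret \(t_{\univ}\) precisely. Set
\[
S:=\{t\in[0,T]:\ g_{\zeta(t)}\ \text{is not univalent on }|w|>1\}
=\{t:\ \zeta(t)\in\mathcal U\},
\]
and \(t_{\univ}:=\inf S\), with \(\inf\emptyset=+\infty\). If \(S=\emptyset\), then \(t_{\univ}=+\infty>t_c\) and there is nothing to prove. The substantive case is \(S\neq\emptyset\), where one must rule out \(t_{\univ}\le t_c\).

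For this, I would use two closure facts. First, \(\mathcal U\) is relatively closed within the relevant parameter region, or equivalently the univalent set is open there. Second, \(\zeta(\cdot)\) is continuous, so \(S\) is closed and \(\inf S\in S\). Granting these, \(t_{\univ}\) is itself a time of non-univalence. Since \(\zeta(t_c)\notin\mathcal U\), we get \(t_{\univ}\neq t_c\). It therefore remains only to exclude \(t_{\univ}<t_c\).

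The main obstacle is justifying the closedness of \(S\). Non-univalence on the open set \(|w|>1\) is an open condition, not a closed one: a limit of non-univalent maps may be univalent on \(|w|>1\), with the collision escaping to \(|w|=1\). I would handle this in one of two ways. The first is to adopt the paper's convention that \(t_{\univ}\) is defined as the supremum of the initial univalence interval, as in Definition~\ref{def:Tuniv}. Then \(g_{\zeta(t)}\) is univalent for all \(t<t_{\univ}\), and univalence on \(|w|>1\) is preserved under locally uniform limits by Hurwitz's theorem, since the limit map is nonconstant with leading term \(w\). Hence the univalent times form an interval \([0,t_{\univ}]\) whenever \(t_{\univ}\) is attained, and the hypothesis that \(\zeta(t_c)\) is univalent is consistent with \(t_c\le t_{\univ}\).

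To obtain strict inequality I would add one extra ingredient. Univalence at \(t_c\) together with \(\rho_*(\zeta(t_c))=1\) should give a small margin: the map remains univalent for slightly larger \(t\), because Hurwitz stability can be applied on \(|w|\ge1\) whenever \(g_{\zeta(t_c)}\) is univalent and locally injective up to the boundary. Alternatively, I would read the proposition in the paper's intended logic. The hypothesis ``\(g_{\zeta(t_c)}\) univalent'' means \(t_c\) lies in the univalence interval, and \(t_c<t_{\univ}\) is then the statement that the first time univalence fails is not \(t_c\) itself. Failure cannot occur at \(t_c\), where the map is univalent by hypothesis, nor before \(t_c\), which would contradict \(t_c\) being in the initial univalence interval.

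I would present the proof along this second, definitional route and remark on the Hurwitz closure property as the technical justification for it. I expect the delicate point to be the boundary behaviour on \(|w|=1\).
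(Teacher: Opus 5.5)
\textbf{Where the argument fails: strictness of $t_c<t_{\univ}$.} The definitional route you say you would present does not give the strict inequality. As you observe, Hurwitz's theorem makes the set of univalent parameters closed. So $\mathcal U$ is open and $S=\{t:\zeta(t)\in\mathcal U\}$ is relatively open in $[0,T]$. Consequently its infimum $t_{\univ}$, when positive, is never itself a failure time. Knowing that univalence does not fail at $t_c$ therefore gives only $t_c\notin S$, not $t_c\neq\inf S$. Nothing in your argument prevents non-univalent times from accumulating at $t_c$ from the right, which would give $t_{\univ}=t_c$ even though $g_{\zeta(t_c)}$ is univalent. So this route yields at best $t_c\le t_{\univ}$.

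Even that requires univalence on $[0,t_c)$. You obtain it by reading ``$t_c$ lies in the initial univalence interval'' into the hypothesis, but the hypothesis asserts univalence only at the single time $t_c$; before $t_c$ you are given only $\rho_*>1$. The paper's proof also takes univalence before $t_c$ for granted, when it passes from univalence on $[t_c,t_c+\delta]$ to $t_{\univ}\ge t_c+\delta$. So the real divergence from the paper is what happens after $t_c$.

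\textbf{What is missing after $t_c$.} You need a margin in parameter space: a neighborhood of $\zeta(t_c)$ in $\CC^N$ on which every $g_{\zeta'}$ is univalent. Continuity of $\zeta(\cdot)$ then gives univalence on $[t_c,t_c+\delta]$. This is exactly the paper's argument. The paper obtains the neighborhood by citing openness of $\Sigma$ under locally uniform convergence with fixed principal part, but that justification is not valid as stated. $\Sigma$ is closed under such limits, not open. Already on the leaf $N=1$, $s_1=2$, the maps $w+a\,w^{-1}$ with $a>1$ converge locally uniformly to the univalent map $w+w^{-1}$ as $a\downarrow1$, yet each has critical points $\pm\sqrt{a}$ outside the closed unit disk.

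What does produce the neighborhood is the ingredient you mention and then set aside. If $g_{\zeta(t_c)}$ is injective on the closed exterior $|w|\ge1$ with $g_{\zeta(t_c)}'\neq0$ there, then a compactness argument followed by Rouch\'e's theorem on a circle slightly inside $|w|=1$ shows that $g_{\zeta'}$ is univalent for all $\zeta'$ near $\zeta(t_c)$. However, univalence on the open set $|w|>1$ does not imply this boundary nondegeneracy: $w+w^{-1}$ has critical points at $\pm1$. Your claim that $\rho_*(\zeta(t_c))=1$ ``should give'' it is precisely the step left unproved. Without an argument for it, or an added hypothesis excluding boundary cusps and boundary self-contacts of $g_{\zeta(t_c)}$, the strict inequality $t_c<t_{\univ}$ is not established.
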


\begin{proof}
	Let
	\[
		\Sigma:=\bigl\{g(w)=w+O(w^{-1})\, :\, g \text{ is holomorphic and univalent on } |w|>1\bigr\}
	\]
	be the standard normalized schlicht class on the exterior disk. It is a classical fact that $\Sigma$ is open in the topology
	of locally uniform convergence on compact subsets of $\{|w|>1\}$ together with preservation of the principal part at infinity;
	see, for example, \cite[Ch.~I]{Pommerenke1992} or \cite[Ch.~2]{Duren1983}. Since every map in our family has the same principal part
	$g_\zeta(w)=w+O(w^{-1})$, this openness applies directly to the finite-dimensional family
	$\zeta\mapsto g_\zeta$.

	Because $g_{\zeta(t_c)}\in\Sigma$, there exists a neighborhood $U$ of $\zeta(t_c)$ in $\CC^N$ such that
	$g_{\zeta'}\in\Sigma$ for all $\zeta'\in U$. Equivalently, every such $g_{\zeta'}$ is univalent on $|w|>1$.
	By continuity of $t\mapsto \zeta(t)$, there exists $\delta>0$ such that
	$\zeta(t)\in U$ for all $t\in[t_c,t_c+\delta]\cap[0,T]$. Hence $g_{\zeta(t)}$ remains univalent on $|w|>1$
	throughout that interval, so the first univalence-breakdown time satisfies
	\[
		t_{\univ}\ge t_c+\delta>t_c.
	\]
	This proves the claim.
\end{proof}

	The hypothesis that the trajectory meets the critical locus \(\mathcal C\) is essential.
	On some explicit \(N=\infty\) leaves, a chosen characteristic level may fail to be
	attained in the interior of the parameter range. For example, on the
	principal-sheet single-log diagram one numerically observes that for
	\(\gamma\le\gamma_c\) the level \(\rho_{\mathrm{char}}=1\) is only approached as
	\(b\uparrow1\). What follows in such cases is merely that no finite threshold is
	detected on that chosen sheet from the explicit characteristic formula.
	A conclusion about the actual spectral critical time \(T_c\) then requires an
	identification of the relevant sheet-dependent characteristic value with the true
	analytic radius \(\rho_*\). On the single-pole leaf this identification is exact by
	Proposition~\ref{prop:single-pole-rho-star}. On the single-log leaf,
	Proposition~\ref{prop:single-log-rho-star} shows only that any finite boundary
	singularity of the continued Taylor branch must come from a critical-value
	obstruction on the same logarithmic sheet. Turning the explicit formula into a closed
	radius theorem still requires identifying the active critical value on that sheet.

% =======================

\section{Explicit \texorpdfstring{\(N=\infty\)}{N=infinity} leaves: poles, logarithms, and characteristic boundaries}
\label{sec:infty-explicit-leaves}

% =======================

We conclude with two explicit infinite-dimensional reductions arising naturally in
Laplacian growth. First, we state two
rigorous explicit results: in the single-pole case a closed radius theorem, and
in the single-log case a description of finite boundary singularities as
critical-value obstructions on the continuation sheet reached from the origin.
The final subsection is diagnostic rather than theorem-level: it visualizes the
sheet-dependent quantity \(\rho_{\mathrm{char}}(\zeta)\) on selected sheets. In the single-pole case we prove that the characteristic modulus coincides with
the true analyticity radius,
\[
\rho_*(\zeta)=\rho_{\mathrm{char}}(\zeta).
\]
In the single-log case the conclusion is weaker: every finite boundary
singularity of the continued Taylor branch is shown to arise as a critical-value
obstruction on the logarithmic sheet reached from the origin. The remaining
discussion concerns \(\rho_{\mathrm{char}}\) on the chosen sheet. Except in the
single-pole case, we do not identify \(\rho_{\mathrm{char}}\) with the full
analyticity radius \(\rho_*\).

\begin{definition}[Characteristic system]\label{def:char-infty-system}
Consider the infinite-dimensional analogue of the inverse-map equation,
\begin{equation}\label{eq:inv-eq-infty}
U(x;\zeta)=1+\sum_{k\ge2}\zeta_k\,x^k U(x;\zeta)^k,
\qquad U(0;\zeta)=1,
\end{equation}
and assume that the series
\begin{equation}\label{eq:Psi-infty}
\Psi_\zeta(u):=\sum_{k\ge2}\zeta_k u^k,
\qquad
u:=x\,U(x;\zeta),
\end{equation}
converges in a neighborhood of the relevant values of \(u\) on the chosen
continuation sheet. Then the branch-point condition takes the same form as in the finite-mode case:
\begin{equation}\label{eq:char-infty}
1+\Psi_\zeta(u_*)=u_*\,\Psi_\zeta'(u_*),
\qquad
x_* =\frac{u_*}{1+\Psi_\zeta(u_*)}
=\frac{1}{\Psi_\zeta'(u_*)}.
\end{equation}
Accordingly, on any chosen continuation sheet we define the characteristic modulus by
\begin{equation}\label{eq:rho-char-infty}
\rho_{\mathrm{char}}(\zeta)
:=
\inf\Bigl\{|x_*|:\ (u_*,x_*) \text{ satisfies \eqref{eq:char-infty} on the chosen sheet}\Bigr\}.
\end{equation}
\end{definition}

The quantity \(\rho_{\mathrm{char}}\) denotes only the smallest modulus of a
sheet-dependent critical value. The regimes
\(\rho_{\mathrm{char}}>1\), \(\rho_{\mathrm{char}}=1\), and
\(\rho_{\mathrm{char}}<1\) will be referred to as subcritical, critical, and
supercritical at the characteristic level.

\begin{remark}[Logarithmic sheets]\label{rem:log-sheet}
On logarithmic leaves, the equation determining \(u_*\) becomes rational, but the value
of \(x_*=1/\Psi_\zeta'(u_*)\) still depends on the branch of the logarithm selected by
analytic continuation. Thus \(\rho_{\mathrm{char}}\) is intrinsically
sheet-dependent in the logarithmic examples below. In the single-log figures, the
\emph{principal sheet} means the branch of \(\log(1-bu)\) obtained by analytic
continuation of the germ \(\log(1-bu)\sim -bu\) at \(u=0\) without crossing the
standard branch cut of the logarithm.
\end{remark}

\subsection{Pole leaves and a single pole example}

For the exterior rational map with simple poles
\begin{equation}\label{eq:multi-pole-map}
f(w)=r\,w+\sum_{k=1}^K \frac{A_k}{w-b_k},
\qquad |w|>1,\quad |b_k|<1,
\end{equation}
one obtains
\begin{equation}\label{eq:Psi-multi-pole}
\Psi_\zeta(u)=\frac{u^2}{r}\sum_{k=1}^K \frac{A_k}{1-b_k u}.
\end{equation}
The characteristic equation reduces to a rational condition.

\begin{lemma}[Multi-pole characteristic equation]\label{lem:char-multi-pole}
For \(\Psi_\zeta\) given by \eqref{eq:Psi-multi-pole}, \eqref{eq:char-infty} is equivalent to
\begin{equation}\label{eq:char-multi-pole-simplified}
r=u^2\sum_{k=1}^K \frac{A_k}{(1-b_k u)^2}.
\end{equation}
\end{lemma}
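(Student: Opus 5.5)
The plan is a direct algebraic computation. We rewrite the first equation of \eqref{eq:char-infty} as
\[
u\,\Psi_\zeta'(u)-\Psi_\zeta(u)=1,
\]
and compute the left-hand side explicitly for the rational function \eqref{eq:Psi-multi-pole}. Throughout, $u$ ranges over points of the continuation sheet where $\Psi_\zeta$ is holomorphic, so $1-b_k u\neq0$ for every $k$ with $b_k\neq0$. The terms with $b_k=0$ are polynomial and need no restriction.

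The key observation is that the operator $\mathcal D:\Psi\mapsto u\Psi'-\Psi$ is linear. It therefore suffices to evaluate it on a single summand
\[
g_b(u):=\frac{u^2}{1-bu}.
\]
Differentiating gives
\[
g_b'(u)=\frac{2u}{1-bu}+\frac{b\,u^2}{(1-bu)^2}.
\]
Hence
\[
u\,g_b'(u)-g_b(u)=\frac{u^2}{1-bu}+\frac{b\,u^3}{(1-bu)^2}
=\frac{u^2\bigl((1-bu)+bu\bigr)}{(1-bu)^2}=\frac{u^2}{(1-bu)^2}.
\]
Summing with weights $A_k/r$ yields
\[
u\,\Psi_\zeta'(u)-\Psi_\zeta(u)=\frac{u^2}{r}\sum_{k=1}^K\frac{A_k}{(1-b_ku)^2}.
\]

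Setting this expression equal to $1$ and multiplying by $r>0$ gives exactly \eqref{eq:char-multi-pole-simplified}. Every step is reversible, since it only multiplies by the nonzero constant $r$ on the common domain of holomorphy, so the two conditions are equivalent.

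For completeness we also check that the second relation in \eqref{eq:char-infty} is consistent and imposes no further constraint. Given the characteristic equation, $x_*=u_*/(1+\Psi_\zeta(u_*))=1/\Psi_\zeta'(u_*)$ holds automatically, because $1+\Psi_\zeta(u_*)=u_*\Psi_\zeta'(u_*)$. It is only needed that $u_*\neq0$, and indeed $u_*=0$ cannot satisfy \eqref{eq:char-multi-pole-simplified} because $r>0$. So the characteristic system reduces to \eqref{eq:char-multi-pole-simplified} alone.

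There is no real obstacle. The only point needing care is the domain bookkeeping: the equivalence is asserted on the set where all $1-b_ku\neq0$, which is exactly where $\Psi_\zeta$ converges on the chosen sheet in Definition~\ref{def:char-infty-system}.
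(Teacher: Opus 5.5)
Your proof is correct and is the paper's argument: the paper proves the lemma in one line, as direct substitution of \eqref{eq:Psi-multi-pole} into \eqref{eq:char-infty}, and your use of the linearity of $\Psi\mapsto u\Psi'-\Psi$ to reduce everything to the single identity $u\,g_b'-g_b=u^2/(1-bu)^2$ simply writes that substitution out. One small refinement to your side remark: for the second relation in \eqref{eq:char-infty} to give a finite $x_*$ you need $\Psi_\zeta'(u_*)\neq0$ (equivalently $1+\Psi_\zeta(u_*)\neq0$), not only $u_*\neq0$. For example, with $K=1$, $b=-1/2$, $A/r=1/16$, the root $u_*=-4$ has $\Psi_\zeta'(u_*)=0$. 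This does not affect the equivalence of the $u_*$-equations that the lemma asserts.
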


\begin{proof}
Direct substitution of \eqref{eq:Psi-multi-pole} into \eqref{eq:char-infty}.
\end{proof}

For \(K=1\), write \(c:=A/r>0\). Then the characteristic values are explicit:
\begin{equation}\label{eq:xstar-one-pole-explicit}
x_*^\pm=\frac{1}{b\pm2\sqrt c},
\qquad
\rho_{\mathrm{char}}(b,c)=\min\{|x_*^+|,|x_*^-|\}.
\end{equation}
On the real slice \(b\in(-1,1)\), the characteristic level \(\rho_{\mathrm{char}}=1\)
consists of the two curves
\begin{equation}\label{eq:pole-critical-curves}
b+2\sqrt c=1,
\qquad
b-2\sqrt c=-1,
\end{equation}
meeting at \((b,c)=(0,1/4)\); see
Figure~\ref{fig:infty-phase-diagrams}\textup{(a)}.
In this case the characteristic description coincides with the true analyticity boundary.

\begin{proposition}[Single pole: \(\rho_*=\rho_{\mathrm{char}}\)]
\label{prop:single-pole-rho-star}
Fix \(c>0\) and \(b\in\mathbb C\) with \(|b|<1\), and let \(u(x)\) be the germ at \(x=0\)
defined by
\[
u=x\Bigl(1+\frac{c u^2}{1-bu}\Bigr).
\]
Equivalently, \(u\) is the local inverse at \(u=0\) of
\[
X(u)=\frac{u(1-bu)}{1-bu+c u^2}.
\]
If \(\rho_*\) denotes the radius of analyticity of this germ, then
\[
\rho_*=\rho_{\mathrm{char}}(b,c)=\min\{|x_*^+|,|x_*^-|\},
\qquad
x_*^\pm=\frac{1}{b\pm2\sqrt c}.
\]
\end{proposition}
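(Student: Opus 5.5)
The plan is to reduce the germ to an explicit quadratic, write it in closed form with a square root of a discriminant, and then locate the singularities of that closed form on the Taylor sheet. Clearing denominators in $u=x\bigl(1+cu^2/(1-bu)\bigr)$ gives $u(1-bu)=x(1-bu+cu^2)$, that is,
\[
(b+cx)\,u^2-(1+bx)\,u+x=0 .
\]
Its discriminant factors as
\[
D(x):=(1+bx)^2-4x(b+cx)=1-2bx+(b^2-4c)x^2=\bigl(1-(b+2\sqrt c)x\bigr)\bigl(1-(b-2\sqrt c)x\bigr),
\]
so its zeros are exactly $x_*^\pm=1/(b\pm2\sqrt c)$. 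These are distinct because $c>0$. The root with $u(0)=0$ is
\[
u(x)=\frac{(1+bx)-\sqrt{D(x)}}{2(b+cx)}=\frac{2x}{(1+bx)+\sqrt{D(x)}},
\]
where $\sqrt{D}$ is the branch with $\sqrt{D(0)}=1$. Write $\rho_{\mathrm{char}}:=\min\{|x_*^+|,|x_*^-|\}$.

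For the inequality $\rho_*\ge\rho_{\mathrm{char}}$, note that $D$ has no zeros on the disk $|x|<\rho_{\mathrm{char}}$, which is simply connected. Hence $\sqrt D$ continues to a single-valued analytic function there. By the second formula for $u$, the only possible singularities of $u$ in this disk are zeros of $N(x):=(1+bx)+\sqrt{D(x)}$. Squaring $\sqrt D=-(1+bx)$ gives $4x(b+cx)=0$. The point $x=0$ is excluded because $N(0)=2$. The only candidate is therefore $x_0:=-b/c$, which is the asymptotic value $X(\infty)$ of the map $X$. Everything thus reduces to showing that, whenever $|x_0|<\rho_{\mathrm{char}}$, the continued branch satisfies $\sqrt{D(x_0)}=+(1+bx_0)$ rather than $-(1+bx_0)$. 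In that case $N(x_0)=2(1+bx_0)\neq0$; the degenerate case $1+bx_0=0$, i.e.\ $c=b^2$, can be handled by continuity in $c$.

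I expect this branch-selection step to be the main obstacle. My first attempt would be to continue along the radial segment $x=tx_0$, $t\in[0,1]$, which lies in the disk. On this segment $x(b+cx)=-t(1-t)b^2/c$. Setting $k:=b^2/c$, the ratio becomes
\[
\frac{D(tx_0)}{(1-tk)^2}=1+\frac{4t(1-t)k}{(1-tk)^2}.
\]
This ratio equals $1$ at both endpoints $t=0$ and $t=1$ and never vanishes on the segment. The sign of $\sqrt D/(1+bx)$ at $t=1$ is then $+1$ or $-1$ according to the parity of the winding of this closed curve about $0$. I would try to show that the winding is trivial using $|b|<1$ together with $|x_0|<\rho_{\mathrm{char}}$. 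If that fails, the fallback is a homotopy in $b$ from $b=0$, where $x_0=0$ and the branch is visibly $+1$: the set of parameters on which the sign flips is open and closed within the connected parameter set $\{|b|<1,\ |x_0|<\rho_{\mathrm{char}}\}$, so it is empty. A conceptual alternative is to argue that the germ, being the inverse of $X$ near $u=0$, cannot reach the sheet of $X$ near $u=\infty$ without first passing a critical value of $X$. Those critical values are exactly $x_*^\pm$, by Lemma~\ref{lem:char-multi-pole}.

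For the reverse inequality $\rho_*\le\rho_{\mathrm{char}}$, take $x_*$ to be the point among $x_*^\pm$ of smallest modulus; by the previous step $u$ continues to the open disk of radius $|x_*|$. Near $x_*$ the function $D$ has a simple zero, so $\sqrt D$ has a genuine square-root branch point there. From the first closed form,
\[
u=\frac{1+bx}{2(b+cx)}-\frac{\sqrt{D(x)}}{2(b+cx)}.
\]
If $b+cx_*\neq0$, the second term is not single-valued near $x_*$ while the first is meromorphic, so $u$ cannot be analytic at $x_*$. If instead $b+cx_*=0$, then $u$ has either a pole or a branch point at $x_*$, and in both cases it is singular. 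Either way $\rho_*\le|x_*|=\rho_{\mathrm{char}}$, which together with the first part gives $\rho_*=\rho_{\mathrm{char}}$.
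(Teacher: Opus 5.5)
Your reduction is correct and follows the same skeleton as the paper's proof. It uses the quadratic $(b+cx)u^2-(1+bx)u+x=0$, its discriminant vanishing exactly at $x_*^\pm$, and a separate look at $x_0=-b/c$, where the leading coefficient vanishes. Your upper bound $\rho_*\le\rho_{\mathrm{char}}$ is fine.

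You are also right that the entire content of $\rho_*\ge\rho_{\mathrm{char}}$ is whether the Taylor branch is the root that escapes to $\infty$ at $x_0$. If $\sqrt{D(x_0)}=-(1+bx_0)$ on the continued branch, then $u=2x/N$ has a pole at $x_0$, which is a genuine singularity. The paper disposes of $x_0$ by noting that the equation becomes linear there. That only shows $x_0$ is not a branch point; it does not decide which root the Taylor branch takes, so you have isolated the real issue. However, your proposal does not settle this step, so as written there is a genuine gap.

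None of your three routes is completed.
The winding computation is only announced.
The homotopy fallback rests on the claim that $\{|b|<1,\ |x_0|<\rho_{\mathrm{char}}\}$ is connected, which you assert without argument.
The ``conceptual alternative'' presupposes what is to be proved. The point $u=\infty$ is an ordinary point of $X$, whose critical points are $1/(b\pm\sqrt c)$. So over the disk $|x|<\rho_{\mathrm{char}}$ the inverse simply splits into two meromorphic sheets: the one through $u=0$ and the one through the other preimage $1/b$ of $x=0$. Which of them contains $u=\infty$ is exactly the claim. Continuing the germ to $x_0$ along two paths in $\mathbb{C}\setminus\{x_*^\pm\}$ that differ by a loop around one critical value gives the two different roots there, one of which is the pole. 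So reaching $u=\infty$ does not require passing through a critical value.

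A minor point: the case $c=b^2$ is vacuous and needs no continuity argument. In that case $D(x_0)=0$, so $x_0\in\{x_*^\pm\}$ lies outside the open disk.

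One way to close the gap is to uniformize. Put $\gamma=\sqrt c$ and $s:=\frac{1-(b+\gamma)u}{1-(b-\gamma)u}$. Since $1-(b\pm2\gamma)X(u)=\bigl(1-(b\pm\gamma)u\bigr)^2/(1-bu+cu^2)$, one gets, for $x=X(u)$,
\[
s^2=\frac{1-(b+2\gamma)x}{1-(b-2\gamma)x}=:\Phi(x),
\qquad
u=\frac{s-1}{(b-\gamma)s-(b+\gamma)},
\]
and the Taylor branch corresponds to the root with $s(0)=1$.

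The map $\Phi$ is M\"obius with zero $x_*^+$ and pole $x_*^-$. It therefore maps $|x|<\rho_{\mathrm{char}}$ onto an open disk or half-plane that omits $0$ and contains $\Phi(0)=1$. That image lies in an open half-plane bounded by a line through $0$, so $\Re s>0$ on the whole disk.

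A pole of $u$ requires $s=\frac{b+\gamma}{b-\gamma}$, whose real part is $\frac{|b|^2-c}{|b-\gamma|^2}$. So a pole can occur only if $|b|>\sqrt c$. In that case $|x_0|=|b|/c>1/\sqrt c\ge 2\rho_{\mathrm{char}}$, because $\max|b\pm2\sqrt c|\ge2\sqrt c$. Hence $u$ is analytic on the whole disk, and together with your upper bound this gives $\rho_*=\rho_{\mathrm{char}}$.
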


\begin{proof}
Since \(X(0)=0\) and \(X'(0)=1\), the germ \(u(x)\) is the local inverse of \(X\) near
\(0\). Any finite boundary singularity of the inverse must therefore occur at a finite
critical value of the rational map \(X\). The critical-point equation is equivalent to
\((1-bu)^2=c u^2\), which yields the two critical points \(u_\pm=(b\pm\sqrt c)^{-1}\) and
hence the two critical values \(x_*^\pm=(b\pm2\sqrt c)^{-1}\).

Conversely, the inverse relation may be written as the quadratic equation
\[
(c x+b)u^2-(b x+1)u+x=0.
\]
Its discriminant is
\[
\Delta(x)=(1+b x)^2-4x(b+c x)=1-2b x+(b^2-4c)x^2,
\]
which vanishes precisely at \(x=x_*^\pm\). Thus \(x_*^\pm\) are genuine branch values of
this algebraic inverse and hence genuine singularities of the continued germ. It remains only to exclude an additional singularity at the point where the
quadratic coefficient vanishes, namely \(x=-b/c\). At that value the equation becomes
linear unless \(1-b^2/c=0\). In the exceptional case \(c=b^2\), one has
\(\Delta(-b/c)=0\), so the point is already one of the discriminant values. Thus no new
finite singularity arises from \(cx+b=0\). Since the inverse branch is algebraic of
degree two, the discriminant points are therefore its only finite singularities, and
\(\rho_*\) is the distance from the origin to the nearer of \(x_*^\pm\).
\end{proof}

\subsection{Logarithmic leaves and a single log example}

For the exterior multi-logarithmic map, on a chosen logarithmic sheet,
\begin{equation}\label{eq:multi-log-map}
f(w)=r\,w+\sum_{k=1}^K c_k \log\Bigl(1-\frac{b_k}{w}\Bigr)+\mathrm{const},
\qquad |w|>1,\quad |b_k|<1,
\end{equation}
one obtains
\begin{equation}\label{eq:Psi-multi-log}
\Psi_\zeta(u)=\frac{u}{r}\sum_{k=1}^K c_k\,\log(1-b_k u).
\end{equation}
Here the logarithms cancel from the equation determining \(u_*\), but they re-enter in the
corresponding value \(x_*\), so the characteristic boundary remains sheet-dependent.

\begin{lemma}[Multi-log characteristic equation]\label{lem:char-multi-log}
For \(\Psi_\zeta\) given by \eqref{eq:Psi-multi-log}, \eqref{eq:char-infty} is equivalent to
\begin{equation}\label{eq:char-multi-log-rational}
r=-u^2\sum_{k=1}^K \frac{c_k b_k}{1-b_k u}.
\end{equation}
\end{lemma}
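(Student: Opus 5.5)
The plan is to compute the derivative of \(\Psi_\zeta\) directly from \eqref{eq:Psi-multi-log} and substitute it into the first relation of \eqref{eq:char-infty}, as in the proof of Lemma~\ref{lem:char-multi-pole}. The only conceptual point is that the logarithms cancel identically, on whatever continuation sheet is chosen. This works because \(\tfrac{d}{du}\log(1-b_ku)=-b_k/(1-b_ku)\) does not depend on the branch of the logarithm.

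Throughout, \(u\) ranges over points with \(1-b_ku\neq0\) for all \(k\), near which each \(\log(1-b_ku)\) is a holomorphic branch on the chosen sheet, as required in Definition~\ref{def:char-infty-system}. On such a neighborhood the product rule gives
\[
\Psi_\zeta'(u)=\frac1r\sum_{k=1}^K c_k\log(1-b_ku)-\frac ur\sum_{k=1}^K\frac{c_kb_k}{1-b_ku}.
\]
Multiplying by \(u\), the first term becomes exactly \(\Psi_\zeta(u)\), so
\[
u\,\Psi_\zeta'(u)=\Psi_\zeta(u)-\frac{u^2}{r}\sum_{k=1}^K\frac{c_kb_k}{1-b_ku}.
\]

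Inserting this into \(1+\Psi_\zeta(u)=u\Psi_\zeta'(u)\), the \(\Psi_\zeta(u)\) terms cancel on both sides. What remains is
\[
1=-\frac{u^2}{r}\sum_{k=1}^K\frac{c_kb_k}{1-b_ku}.
\]
Multiplying by \(r>0\) gives \eqref{eq:char-multi-log-rational}. Every step is a reversible algebraic identity on the admissible set of \(u\), so the two conditions are equivalent.

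The second relation in \eqref{eq:char-infty}, \(x_*=1/\Psi_\zeta'(u_*)\), is not part of the equivalence. It only defines \(x_*\) from \(u_*\), and that is where the sheet dependence re-enters through \(\log(1-b_ku_*)\).

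There is no real obstacle here. The one thing to state explicitly is the branch-independence of the derivative of the logarithm, so that the computation is valid on every logarithmic sheet and not just the principal one.
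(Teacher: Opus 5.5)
Your proposal is correct and follows the paper's own proof, which is just the one line ``direct substitution of \eqref{eq:Psi-multi-log} into \eqref{eq:char-infty}'': in $u\Psi_\zeta'(u)$ the term $\frac{u}{r}\sum_k c_k\log(1-b_ku)$ reproduces $\Psi_\zeta(u)$ and cancels, which leaves the rational condition. Your remark that $\frac{d}{du}\log(1-b_ku)=-b_k/(1-b_ku)$ does not depend on the branch, so the cancellation holds on every logarithmic sheet and the sheet dependence enters only through $x_*$, makes explicit what the paper states in words after \eqref{eq:Psi-multi-log}.
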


\begin{proof}
Direct substitution of \eqref{eq:Psi-multi-log} into \eqref{eq:char-infty}.
\end{proof}

For \(K=1\), with \(b\in(0,1)\) and \(\gamma:=c/r>0\), the characteristic equation becomes
\begin{equation}\label{eq:char-one-log-quadratic}
\gamma b\,u^2-bu+1=0.
\end{equation}
Thus the characteristic points \(u_*^\pm\) are explicit. Writing
\begin{equation}\label{eq:X-one-log}
X(u):=\frac{u}{1+\gamma u\log(1-bu)},
\end{equation}
the corresponding characteristic values are
\begin{equation}\label{eq:xstar-one-log}
x_*^\pm=X(u_*^\pm),
\qquad
\rho_{\mathrm{char}}(b,\gamma):=\min\{|x_*^+|,|x_*^-|\}.
\end{equation}
Because of the logarithm, this formula still depends on the continuation sheet. The next result therefore characterizes the boundary obstruction without giving an explicit expression for $\rho_*$.

\begin{proposition}[Single log: finite boundary singularities are critical-value obstructions]
\label{prop:single-log-rho-star}
Fix \(\gamma>0\) and \(b\in(0,1)\), and let \(u(x)\) be the germ at \(x=0\) determined by
\[
u=x\bigl(1+\gamma u\log(1-bu)\bigr),
\qquad
\log(1-bu)\sim -bu
\quad (u\to0).
\]
Let
\[
X(u):=\frac{u}{1+\gamma u\log(1-bu)},
\]
and let \(\rho_*\) be the radius of analyticity of the germ. Then \(1-bu(x)\neq0\) for
\(|x|<\rho_*\), so the logarithm continued from \(0\) defines a holomorphic branch on the
disk \(|x|<\rho_*\). If \(\rho_*<\infty\), every finite boundary singularity \(x_0\) of the
continued germ is realized as a nonzero critical value of \(X\) on the logarithmic sheet
reached by continuation from the origin.
\end{proposition}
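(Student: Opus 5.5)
The plan is to run the same local-inverse argument as in Proposition~\ref{prop:single-pole-rho-star}, with one extra step that keeps the logarithm's branch point \(u=1/b\) away from the disk of analyticity. Throughout, \(u(x)\) is the germ with \(u(0)=0\), \(u'(0)=1\), and \(x=X(u(x))\) near \(0\). On \(|x|<\rho_*\) the germ continues to a single-valued holomorphic function. Along the way, the term \(\log(1-bu(x))\) is continued jointly with it, since the defining relation involves it.

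\textbf{Step 1: \(1-bu(x)\neq0\) on the disk.} Suppose \(x_1\) with \(|x_1|<\rho_*\) is a point where \(u(x)\to 1/b\) along a path from \(0\). Near \(x_1\) we have \(u\log(1-bu)\to -\infty\) in modulus, because \(|\log(1-bu)|\to\infty\) while \(u\to1/b\neq0\). The relation \(u=x\bigl(1+\gamma u\log(1-bu)\bigr)\) then forces \(x(1+\gamma u\log(1-bu))\) to stay bounded near \(1/b\). This is impossible unless \(x\to0\), and \(x_1\neq0\) since \(u(0)=0\neq1/b\). Equivalently, \(X(u)=u/(1+\gamma u\log(1-bu))\to0\) as \(u\to1/b\) on any sheet. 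So \(u(x)=1/b\) could only occur where \(x\to0\), contradicting \(u(0)=0\) and the local injectivity at \(0\).

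To make this rigorous I would argue by the first hitting point along the ray to \(x_1\): \(u\) is holomorphic up to there, and \(x=X(u)\) holds by the identity theorem with the continued log. Then \(|x_1|=\lim|X(u)|=0\), a contradiction. Hence \(1-bu\) is zero-free on the simply connected disk, and the logarithm, continued from \(0\), is a holomorphic branch there.

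\textbf{Step 2: boundary singularities are critical values.} Let \(x_0\) with \(|x_0|=\rho_*\) be a singularity, and take radial limits of \(u\) along paths to \(x_0\). I would split into three cases.

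\begin{enumerate}[label=(\roman*)]
\item \(u\) stays bounded and away from \(1/b\), with limit point \(u_0\). Then \(X\), on the sheet reached by continuation, is holomorphic near \(u_0\) and \(X(u_0)=x_0\). If \(X'(u_0)\neq0\), the inverse function theorem continues \(u\) analytically across \(x_0\). This contradicts singularity, so \(X'(u_0)=0\) and \(x_0\) is a critical value, nonzero since \(|x_0|=\rho_*>0\).
\item \(u\to\infty\). This cannot happen: \(X(u)\sim 1/(\gamma\log(1-bu))\to0\), which would force \(x_0=0\).
\item \(u\to 1/b\). This is excluded by the same limit \(X\to0\) as in Step 1.
\end{enumerate}

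The main obstacle is that \(u\) may have no single limit at \(x_0\), since limit points could form a continuum. To handle this, I would use that the cluster set of \(u\) at \(x_0\) is connected. Away from the exceptional points \(1/b\) and \(\infty\) and from critical points of \(X\), each cluster value \(u_0\) gives a local holomorphic inverse \(X^{-1}\) near \(x_0\). Along a sequence, \(u\) coincides with that branch, hence with it on a neighborhood of the path, which yields continuation. This forces the cluster set to be contained in the discrete set of critical points (together with the excluded points), and hence to be a single critical point.
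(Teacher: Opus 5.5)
Your Step 1 and the skeleton of Step 2 follow the paper: rule out $u\to\infty$ and $u\to1/b$, then apply the inverse function theorem at a limit value $u_0$. The ``main obstacle'' you single out is not one. Suppose a single sequence $x_n\to x_0$ has $u(x_n)\to u_0$ with $u_0$ non-exceptional and non-critical. Then the local inverse $g$ on $D(x_0,r)$ agrees with $u$ near $x_n$ for large $n$, hence on the connected set $D(0,\rho_*)\cap D(x_0,r)$, and $u$ continues across $x_0$. This is all the paper uses.

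\textbf{The gap.} Your case (i) never fixes the logarithmic sheet at $u_0$ and never excludes $u_0=0$. For $u_0\neq0$ the sheet is forced: $\ell_n:=\ell(x_n)$ satisfies $\gamma\ell_n=1/x_n-1/u_n\to1/x_0-1/u_0$, so $\ell_n\to\ell_0$ with $e^{\ell_0}=1-bu_0$. You have to say this, since ``the sheet reached by continuation'' at a boundary cluster point has no meaning otherwise. For $u_0=0$ the same identity forces $|\ell_n|\to\infty$. Then no sheet is selected, and your claim $X(u_0)=x_0$ fails on every branch, since each has $X(0)=0$.

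This case is not absurd locally. On the branch $\mathrm{Log}(1-bu)+2\pi i k$, the equation $X(u)=x_0$ has a root $u\approx x_0/(1-2\pi i k\gamma x_0)\to0$ as $|k|\to\infty$. So it has to be ruled out using how $\ell$ depends on $x$, not the limiting relation alone. The paper lists $u_n\to0$ explicitly and dismisses it via $x_n\to0$, an inference that is immediate only if $\ell_n$ stays bounded.

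\textbf{How to close it.} Your connectedness idea does the job. Exclude $1/b$ and $\infty$ as cluster values by the sequence versions of (ii)--(iii): $\gamma\ell_n=1/x_n-1/u_n$ stays bounded while $|\Re\ell_n|=\bigl|\log|1-bu_n|\bigr|\to\infty$. The argument above shows every other nonzero cluster value is a root of $\gamma bu^2-bu+1$. So the connected cluster set lies in a finite set and is a singleton.

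If it were $\{0\}$, then $1-bu\to1$ on $D(0,\rho_*)\cap D(x_0,r)$. Hence $\ell-\mathrm{Log}(1-bu)$ is a continuous $2\pi i\mathbb{Z}$-valued function on that connected set, so it is a constant $2\pi i k$. Then $\ell$ stays bounded and $x=u/(1+\gamma u\ell)\to0$, contradicting $x_0\neq0$.

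With this case and the sheet identification added, your argument is complete. It also shows that $u(x)$ converges to a critical point as $x\to x_0$.
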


\begin{proof}
By the local inverse theorem, \(u(x)\) is the local inverse of
\(X(u)=u\bigl(1+\gamma u\log(1-bu)\bigr)^{-1}\) near \(x=0\).
We first claim that \(1-bu(x)\neq0\) for \(|x|<\rho_*\).
Indeed, if \(1-bu(x_0)=0\) at some point of analyticity, continue the logarithm from its germ at
\(0\) along a path from \(0\) to \(x_0\). As one approaches the first zero of \(1-bu\),
the continued logarithm \(\ell(x):=\log(1-bu(x))\) satisfies \(|\ell(x)|\to\infty\), while
\(u(x)\) stays finite and nonzero. Then
\(u=x(1+\gamma u\ell)\) forces the right-hand side to diverge, a contradiction.
Hence \(1-bu\) is nonvanishing on \(|x|<\rho_*\), so on the chosen sheet
\[
x=\frac{u(x)}{1+\gamma u(x)\ell(x)},\qquad \ell(x):=\log(1-bu(x)).
\]

Assume now that \(\rho_*<\infty\), and let \(x_0\) be a boundary point across which this
continuation does not extend holomorphically. Choose \(x_n\to x_0\) with \(|x_n|<\rho_*\), and set
\(u_n:=u(x_n)\), \(\ell_n:=\ell(x_n)\). Since \(x_0\neq0\), the identity
\(x_n=u_n(1+\gamma u_n\ell_n)^{-1}\) excludes the three possibilities
\(u_n\to0\), \(u_n\to1/b\), and \(|u_n|\to\infty\): if \(u_n\to0\), then \(x_n\to0\), and if
\(u_n\to1/b\), then \(\Re \ell_n=\log|1-bu_n|\to-\infty\), hence
\(|1+\gamma u_n\ell_n|\to\infty\) and again \(x_n\to0\). Besides, if \(|u_n|\to\infty\), then
\[
\frac1{x_n}=\frac1{u_n}+\gamma\ell_n
\]
implies \(\ell_n\to(\gamma x_0)^{-1}\), whereas on the sheet continued from the origin one has
\(\ell_n=\log(1-bu_n)\) and therefore \(|\ell_n|\to\infty\), a contradiction.
Thus, after passing to a subsequence,
\[
u_n\to u_0\in\mathbb C\setminus\{0,1/b\}.
\]

Since \(u_n\to u_0\) and \(x_n\to x_0\neq0\), the quotients \(u_n/x_n\) are bounded. Hence
\(u_n\ell_n=(u_n/x_n-1)/\gamma\) is bounded, and because \(u_0\neq0\), so is \(\ell_n\).
Passing to a further subsequence, let \(\ell_n\to\ell_0\). Then
\[
e^{\ell_0}=\lim_{n\to\infty}e^{\ell_n}=\lim_{n\to\infty}(1-bu_n)=1-bu_0,
\qquad
x_0=\lim_{n\to\infty}\frac{u_n}{1+\gamma u_n\ell_n}
=\frac{u_0}{1+\gamma u_0\ell_0}.
\]
Thus \(\ell_0\) determines a local branch \(\ell_u\) of \(\log(1-bu)\) near \(u_0\), and
\[
X_{\ell_u}(u):=\frac{u}{1+\gamma u\,\ell_u(u)}
\]
satisfies \(X_{\ell_u}(u_0)=x_0\). If \(X'_{\ell_u}(u_0)\neq0\), the inverse function theorem
would produce a holomorphic inverse branch near \(x_0\). By uniqueness, it would coincide with
the given continuation of \(u(x)\) for all large \(n\), yielding an analytic continuation across
\(x_0\), contrary to the choice of \(x_0\). Therefore \(X'_{\ell_u}(u_0)=0\), so \(x_0\) is a
nonzero critical value on the continuation sheet.
\end{proof}

\begin{remark}[Critical points on a fixed logarithmic sheet]
On any local branch \(\ell(u)=\log(1-bu)\), one has
\[
X_\ell'(u)
=
\frac{1+\gamma b\,u^2/(1-bu)}{(1+\gamma u\,\ell(u))^2}
=
\frac{\gamma b\,u^2-bu+1}
{(1-bu)\bigl(1+\gamma u\,\ell(u)\bigr)^2}.
\]
Accordingly, every finite critical point of a holomorphic branch of \(X\) is a root of
\[
\gamma b\,u^2-bu+1=0.
\]
Conversely, any root of this quadratic that lies on the chosen holomorphic sheet is a
critical point of the corresponding branch \(X_\ell\).
\end{remark}

\begin{remark}[Principal-sheet interpretation and scope]
\label{rem:single-log-principal-sheet}
Proposition~\ref{prop:single-log-rho-star} is intrinsic: it identifies the boundary
obstruction on the actual continuation sheet of the Taylor branch. By contrast, the explicit
formula \eqref{eq:xstar-one-log} becomes a closed radius formula only after one identifies
the active logarithmic sheet. In the figures below we use the principal sheet in the sense
of Remark~\ref{rem:log-sheet}. Direct numerical continuation of the Taylor branch along rays
in the \(x\)-plane indicates that, in the plotted regime, the first singularity agrees with
the active principal-sheet characteristic value. The figures should therefore be read only
as principal-sheet characteristic diagrams supported by numerical continuation.
\end{remark}

\subsection{Characteristic verification on selected sheets}
The remainder of this section is diagnostic rather than theorem-level. It presents the characteristic geometry on selected sheets and uses the explicit
formulas above only to visualize the behavior of
\(\rho_{\mathrm{char}}\). In the single-pole case this still describes the
true analyticity boundary by Proposition~\ref{prop:single-pole-rho-star}. In
the single-log case it describes only the principal-sheet characteristic
boundary, in the sense of Remark~\ref{rem:single-log-principal-sheet}.

Figure~\ref{fig:infty-phase-diagrams} summarizes the contrast between the two
one-parameter cases discussed above. Plot~\textup{(a)} shows the single-pole phase
diagram on the real slice \(b\in(-1,1)\): the two explicit curves
\eqref{eq:pole-critical-curves} meet at the apex \((b,c)=(0,1/4)\) and bound the region
\(\rho_{\mathrm{char}}=\rho_*>1\). Thus the figure gives the exact analyticity boundary in
the single-pole case. Plot~\textup{(b)} shows instead a principal-sheet
characteristic diagram for the single-log leaf.
The level curves of \(\rho_{\mathrm{char}}(b,\gamma)\) organize the characteristic
geometry, while the thick curve \(\rho_{\mathrm{char}}=1\) marks the principal-sheet
characteristic boundary discussed in
Remark~\ref{rem:single-log-principal-sheet}. The juxtaposition is intended to make the
logical difference visible: plot~\textup{(a)} is a theorem for \(\rho_*\), whereas
plot~\textup{(b)} is a principal-sheet characteristic picture supported by numerical
continuation.

\begin{figure}[t]
\centering
\includegraphics[width=0.98\textwidth]{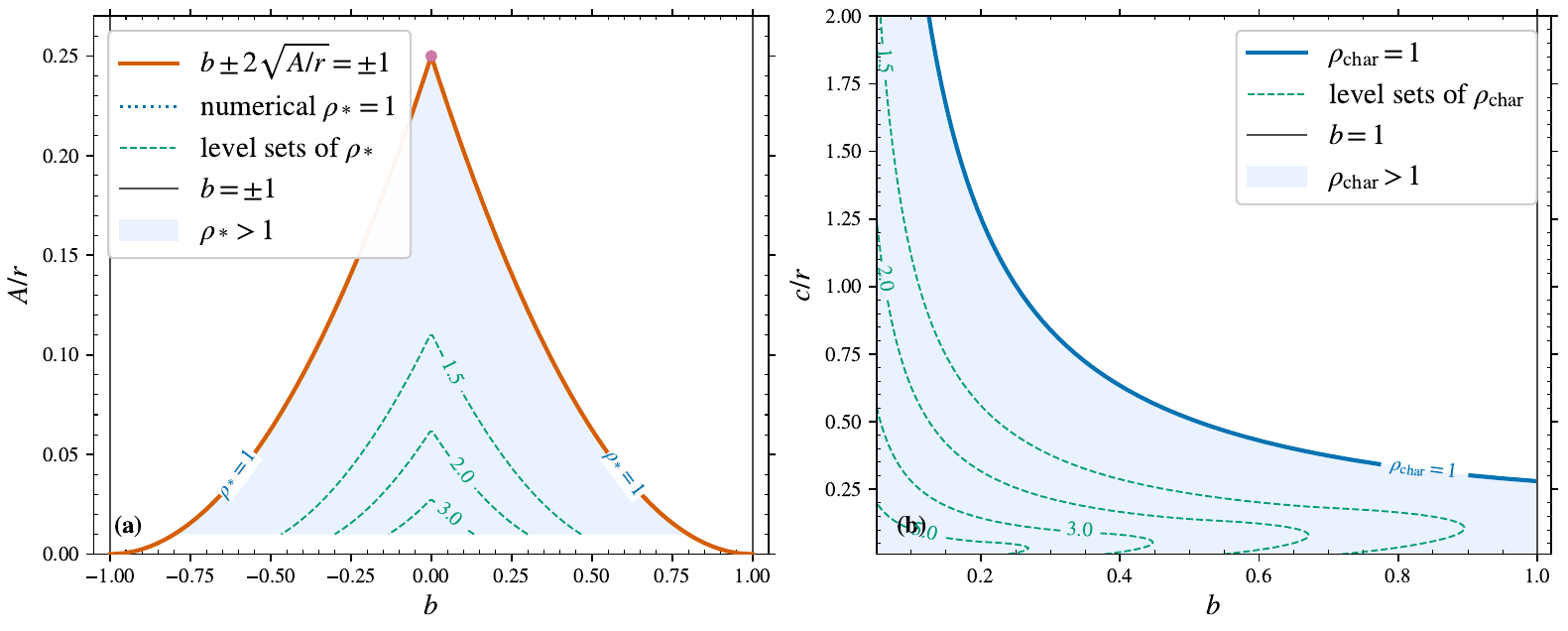}
\caption{\textbf{Characteristic boundaries for two explicit \(N=\infty\) leaves.}
\emph{(a) Single pole.} On the real slice \(b\in(-1,1)\), the level
\(\rho_{\mathrm{char}}=1\) is given by the two explicit curves
\(b+2\sqrt c=1\) and \(b-2\sqrt c=-1\), meeting at \((b,c)=(0,1/4)\). By
Proposition~\ref{prop:single-pole-rho-star}, this is also the exact phase boundary for the
true analyticity radius \(\rho_*\).
\emph{(b) Single log.} Principal-sheet level curves of the characteristic modulus
\(\rho_{\mathrm{char}}(b,\gamma)\). The thick curve is the level
\(\rho_{\mathrm{char}}=1\). By
Remark~\ref{rem:single-log-principal-sheet}, this should be read only as a
principal-sheet characteristic boundary supported by numerical continuation.}
\label{fig:infty-phase-diagrams}
\end{figure}

For the single-log leaf one should distinguish the characteristic boundary
\(\rho_{\mathrm{char}}(b,\gamma)=1\) from the discriminant transition of
\eqref{eq:char-one-log-quadratic}. Using the principal-sheet formula
\eqref{eq:xstar-one-log} and a direct numerical scan of the level condition
\(\rho_{\mathrm{char}}(b,\gamma)=1\) on a fine \((b,\gamma)\)-grid, one finds
an apparent principal-sheet threshold
\begin{equation}\label{eq:alpha-c-log}
\gamma_c\approx0.27997,
\end{equation}
with the following behavior in the plotted regime: for \(\gamma>\gamma_c\), the level
\(\rho_{\mathrm{char}}=1\) intersects \(b\in(0,1)\) at a unique point \(b=b_c(\gamma)<1\),
whereas for \(\gamma\le\gamma_c\) it is reached only in the limit \(b\uparrow1\). This value
is included only as an empirical guide to the phase diagram. The separate
discriminant transition discussed below is visualized in
Figure~\ref{fig:single-log-discriminant}.

A distinct phenomenon occurs at the discriminant line
\begin{equation}\label{eq:bdisc-one-log}
b_{\mathrm{disc}}=4\gamma.
\end{equation}
There the two characteristic points coalesce. Below it they form a complex-conjugate pair,
and above it they are real. The resulting branch splitting changes the visible shape of the
active modulus \(b\mapsto \rho_{\mathrm{char}}(b)\), but it does not by itself impose the
criticality condition \(\rho_{\mathrm{char}}=1\); this is illustrated in
Figure~\ref{fig:single-log-discriminant}. In
plot~\textup{(a)}, for the representative slice \(\gamma=0.05\), the two characteristic
moduli coincide to the left of the discriminant point
\(b_{\mathrm{disc}}=4\gamma\) and split to the right. The solid curve is the active
envelope \(\rho_{\mathrm{char}}(b)=\min\{|x_*^\pm(b)|\}\), the dotted curve is the
complementary branch, and the horizontal level \(\rho_{\mathrm{char}}=1\) shows that this
branch splitting need not coincide with criticality. Plot~\textup{(b)} shows the same
effect for several values of \(\gamma\): the visible change in the active envelope occurs
at the discriminant line. The figure therefore
separates two different notions that would otherwise be easy to confuse, namely branch
splitting and the condition \(\rho_{\mathrm{char}}=1\).

\begin{figure}[t]
\centering
\includegraphics[width=0.98\textwidth]{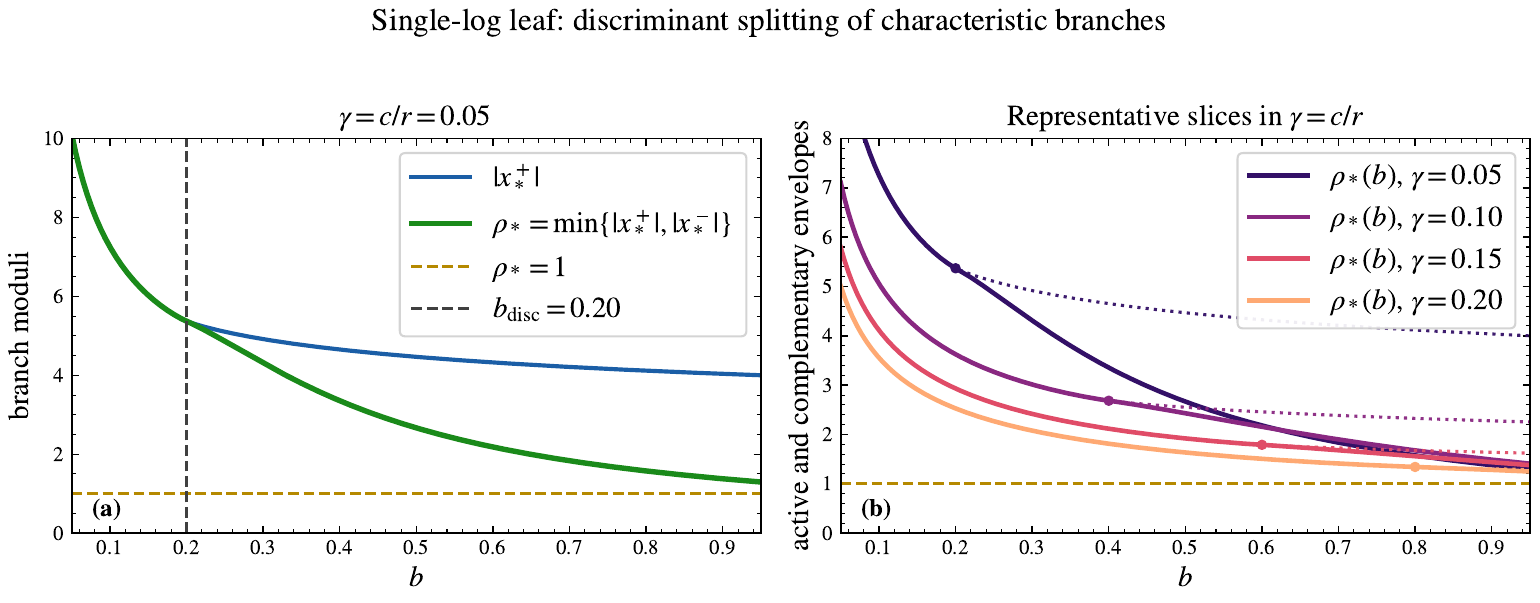}
\caption{\textbf{Single-log leaf: discriminant transition on the principal sheet.}
\emph{(a)} Representative slice \(\gamma=0.05\). The vertical line marks the
discriminant point \(b_{\mathrm{disc}}=4\gamma\). The solid curve is the active
characteristic modulus \(\rho_{\mathrm{char}}=\min\{|x_*^\pm|\}\), the dotted curve is the
complementary branch, and the horizontal line is the level \(\rho_{\mathrm{char}}=1\).
\emph{(b)} Representative slices for several values of \(\gamma\). The change of shape at
the marked points is caused by branch splitting at the discriminant line.}
\label{fig:single-log-discriminant}
\end{figure}

For the explicit leaves considered here, the infimum in \eqref{eq:rho-char-infty} is
attained. If two distinct characteristic solutions have the same minimal modulus, the
dominant orbit is no longer unique. This is the natural case in which one expects the
rank-one mechanism to be replaced by a higher-rank instability. By contrast, the
discriminant line \eqref{eq:bdisc-one-log} corresponds to a double characteristic point and
thus to a degenerate, rather than multiple, obstruction.

\subsection{Interpretation: stiffness.}\label{subsec:lg-stiffness}
We close with the corresponding physical interpretation. Because the higher harmonic moments are conserved in deterministic Laplacian growth, the dynamics does not directly involve the mixed Hessian. The Hessian enters when one studies fluctuations around the deterministic
trajectory, for example under stochastic forcing or in the \(1/n\)-expansion of the
associated normal-matrix model. In that framework, large eigenvalues of \(H\) correspond to
stiff fluctuation directions, whereas bounded eigenvalues correspond to soft ones.

Theorem~\ref{thm:lg-spectral} therefore has a clear interpretation: as \(T\uparrow T_c\),
one variational direction in each symmetry block becomes asymptotically infinitely stiff,
while the higher variational levels remain bounded. Analytic criticality is thus a
rank-one stiffness transition in the space of times, and by
Proposition~\ref{prop:zc-before-zuniv} it may occur before geometric loss of univalence.

At the level of physical interpretation, this suggests that the Gaussian fluctuation
approximation breaks down first along the stiff mode, while no analogous singular response
is forced in the complementary directions at \(T_c\).

% =======================

\section{Conclusion}
\label{sec:conclusion}

% =======================

We have identified a local mechanism of spectral instability for the mixed
Hessian of the dispersionless Toda $\tau$--function on polynomial leaves,
expressed in the scale--free parameters $\zeta_n=a_n/r$. The argument combines
an exact blockwise Gram representation of the Hessian, the algebraic equation
for the inverse-map generating function $U(x;\zeta)$, and uniform analytic
control up to the first dominant obstruction.

Theorem~\ref{thm:universality} gives a conditional local universality
statement: near a simple critical point with a unique dominant
$s$--orbit of simple square--root branch points, and assuming the uniform
continuation hypothesis of Definition~\ref{def:uniform-delta} along the chosen
subcritical approach, each symmetry block develops a single logarithmically
diverging direction after fixed weighted renormalization. More precisely, one
variational eigenvalue grows like \(\Gamma^{(q)}\log(1/\varepsilon)\), while the
higher variational levels remain bounded. After subtraction of the singular
rank--one term, the remainder stays uniformly bounded and converges to a
compact limit.

A key consequence is a conditional separation between \emph{spectral} and \emph{geometric}
criticality. Along Laplacian growth trajectories, if the algebraic critical
crossing does not lie on the nonunivalence locus, then the spectral critical
time satisfies $T_c<T_{\mathrm{univ}}$. Thus the Hessian detects an analytic
instability that can precede geometric breakdown of the interface when the map remains univalent at the spectral crossing.

The explicit one-harmonic family studied in~\cite{Alekseev2025-1} exhibits the
same phenomenon in a globally solvable case. The present paper isolates the
local argument that survives beyond that special case, once the continuation
hypothesis has been verified on the polynomial leaf under consideration.
Natural extensions include an intrinsic verification theory for
Definition~\ref{def:uniform-delta}, nongeneric critical points, competing
dominant orbits leading to genuine rank--$r$ instabilities, and
infinite-source limits beyond the isolated-orbit regime.

\section*{Acknowledgments}
The study was implemented in the framework of the Basic Research Program at HSE
University (HSE-BR-2025-84).

\appendix
\section{Algebraic branching and coefficient transfer}
\label{app:coeff-asymp}

This appendix presents the auxiliary analytic statements from
Section~\ref{sec:alg-criticality} used later in the universality argument.
Section~\ref{sec:alg-criticality} provides the branching analysis and coefficient asymptotics used in the operator-theoretic argument of Appendix~\ref{app:gram-bounds}.

\subsection{Characteristic points and square--root branching}
\label{app:xi-star-system}

\begin{proof}[Proof of Proposition~\ref{prop:xi-star-system}]
Fix $\zeta$ and abbreviate $F(y,x):=F(y,x;\zeta)$. Let $x_*\neq0$ be a singular point of the analytic continuation of the Taylor sheet, and let $\lambda$ be the corresponding finite branch value attained on the chosen continuation at $x_*$. Then $F(\lambda,x_*)=0$. If $\partial_yF(\lambda,x_*)\neq0$, the analytic implicit function theorem would produce an analytic solution $y=\Lambda(x)$ to $F(y,x)=0$ near $x_*$. Since this coincides with the continued Taylor sheet on a punctured neighborhood of $x_*$, the sheet would extend analytically across $x_*$, a contradiction. Hence $\partial_yF(\lambda,x_*)=0$, which is \eqref{eq:char-system-F}. Expanding \eqref{eq:char-system-F} gives \eqref{eq:char-system}.

	Assume now \eqref{eq:char-system-F}, \eqref{eq:Fx-nondeg}, and \eqref{eq:Fyy-nondeg}. For the present polynomial $F$ one has

	\begin{equation}\label{eq:Fx-nonzero-cmp}
		\partial_xF(\lambda,x_*)=-\lambda/x_*\neq0.
	\end{equation}
	Indeed,
	\[
		\partial_xF(y,x)=-\sum_{n=1}^N s_n\zeta_n\,x^{s_n-1}y^{s_n}
		=-\frac{y}{x}\sum_{n=1}^N s_n\zeta_n\,x^{s_n}y^{s_n-1},
	\]
	and the second equation in \eqref{eq:char-system} gives $\sum_n s_n\zeta_n x_*^{s_n}\lambda^{s_n-1}=1$. Thus \eqref{eq:Fx-nonzero-cmp} holds, so \eqref{eq:Fx-nondeg} is automatic whenever $0<|x_*|<\infty$.

	Since $\partial_xF(\lambda,x_*)\neq0$, the analytic implicit function theorem solves $F(y,x)=0$ locally for $x$ as an analytic function of $y$: there exists analytic $x=x(y)$ near $y=\lambda$ with $x(\lambda)=x_*$ and $F(y,x(y))\equiv0$. Differentiating gives $x'(\lambda)=0$ because $F_y(\lambda,x_*)=0$, and a second differentiation yields
	\[
		x''(\lambda)=-\frac{F_{yy}(\lambda,x_*)}{F_x(\lambda,x_*)}\neq0
	\]
	by \eqref{eq:Fyy-nondeg} and \eqref{eq:Fx-nonzero-cmp}. Hence
	\[
		x(y)=x_* + c\,(y-\lambda)^2+O\bigl((y-\lambda)^3\bigr),\qquad c\neq0,
	\]
	and inversion gives two local branches
	\[
		y-\lambda=\pm c'\,\sqrt{x-x_*}\,\bigl(1+O(x-x_*)\bigr),\qquad c'\neq0.
	\]
	The continued Taylor branch coincides with one of them on a punctured neighborhood of $x_*$, which gives the claimed square-root form.
\end{proof}

\subsection{Transfer from a dominant orbit}
\label{app:transfer-dominant}

\begin{lemma}[Transfer from a dominant simple $s$--orbit]\label{lem:coeff-asymp-dominant}
	Fix $\zeta$ and assume Definition~\ref{def:dominant-orbit}. Let $x_*(\zeta)$ be a representative
	dominant singularity so $|x_*(\zeta)|=\rho_*(\zeta)$, and assume that $U(\cdot;\zeta)$ is
	$\Delta$--analytic at each orbit point $\{e^{2\pi i j/s}x_*(\zeta)\}_{j=0}^{s-1}$ and that the
	representative point is \emph{simple} on the Taylor sheet in the sense of Proposition~\ref{prop:xi-star-system} (so the nondegeneracy conditions \eqref{eq:Fx-nondeg} and \eqref{eq:Fyy-nondeg} hold for the corresponding branch value $\lambda$ at $x_*$).
	Then for every $p\ge1$, writing
	\[
		U(x;\zeta)^p=\sum_{m\ge0} R_p(m;\zeta)\,x^{ms},
	\]
	one has the asymptotics
	\begin{equation}\label{eq:Rp-asymp}
		R_p(m;\zeta)=A_p(\zeta)\,x_*(\zeta)^{-ms}\,m^{-3/2}\Bigl(1+O(m^{-1})\Bigr),
		\qquad m\to\infty,
	\end{equation}
	and there exists $C>0$ (locally in $\zeta$) such that
	\[
		|A_p(\zeta)|\le C\,p\,|\lambda|^{p-1}.
	\]
\end{lemma}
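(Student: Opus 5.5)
The plan is to pass to the variable $z=x^s$ and apply standard singularity analysis (Flajolet--Sedgewick, Ch.~VI) to $\widehat U(z;\zeta)^p=\sum_m R_p(m;\zeta)z^m$. By Lemma~\ref{lem:s-symmetry} the function $\widehat U(z;\zeta)$ is single-valued and analytic in $|z|<\rho_*(\zeta)^s$. The $s$ orbit points $e^{2\pi ij/s}x_*$ all map to the single point $z_*=x_*^s$. Hence under Definition~\ref{def:dominant-orbit} $\widehat U$ has a unique dominant singularity $z_*$ on $|z|=\rho_*^s$. The $\Delta$--analyticity of $U$ at each orbit point gives a $\Delta$--domain at $z_*$ for $\widehat U$: locally $z\mapsto x$ is a biholomorphism near $x_*$, and it maps indented sectors to indented sectors, possibly with a smaller angle. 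This reduction to a single dominant singularity is the first step.

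The second step is the local expansion. By Proposition~\ref{prop:xi-star-system}, which applies since \eqref{eq:Fx-nondeg} is automatic and \eqref{eq:Fyy-nondeg} holds, the function $U$ has the form $U=\lambda+\kappa\sqrt{1-x/x_*}+O(1-x/x_*)$ with $\kappa\ne0$. The Puiseux series is in fact a convergent series in powers of $(1-x/x_*)^{1/2}$. Substituting $1-x/x_*=s^{-1}(1-z/z_*)(1+O(1-z/z_*))$ gives
\[
\widehat U(z)=\lambda+s^{-1/2}\kappa\,\delta_z^{1/2}+c\,\delta_z+O(\delta_z^{3/2}),\qquad \delta_z=1-z/z_*,
\]
again convergent in $\delta_z^{1/2}$. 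Raising to the $p$-th power gives
\[
\widehat U^p=\lambda^p+p\lambda^{p-1}s^{-1/2}\kappa\,\delta_z^{1/2}+(\text{analytic in }\delta_z)+O(\delta_z^{3/2}).
\]
Here $\lambda\neq0$, because $F(0,x)=-1$.

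The third step is the transfer itself. Analytic terms contribute only exponentially smaller coefficients, since $\widehat U^p$ minus its singular part continues past $|z_*|$ inside the $\Delta$-domain. The transfer theorem applied to $\delta_z^{1/2}$ gives $[z^m]\delta_z^{1/2}=\frac{z_*^{-m}m^{-3/2}}{\Gamma(-1/2)}(1+O(m^{-1}))$, and the $O(\delta_z^{3/2})$ term gives $O(z_*^{-m}m^{-5/2})$. This yields \eqref{eq:Rp-asymp} with
\[
A_p=-\frac{s^{-1/2}}{2\sqrt\pi}\,p\,\kappa\,\lambda^{p-1},
\]
which is consistent with Corollary~\ref{cor:transfer-critical-data}. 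The amplitude bound $|A_p|\le C p|\lambda|^{p-1}$ then follows with $C=|\kappa|s^{-1/2}/(2\sqrt\pi)$. This constant is locally bounded in $\zeta$, because $\kappa$ depends continuously on $\zeta$ through $\kappa^2\propto -2\lambda\,/(x_*F_{yy})\cdot(\ldots)$, by the implicit-function computation in Appendix~\ref{app:xi-star-system} together with Lemma~\ref{lem:local-branch-data}.

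The main obstacle is that the $O(m^{-1})$ in \eqref{eq:Rp-asymp} is asserted only for fixed $p$. One has to make sure that the leading coefficient really is nonzero and that the expansion of $\widehat U^p$ is legitimate, i.e.\ uniform on the $\Delta$-domain for fixed $p$. I would handle this by writing $\widehat U^p=\lambda^p(1+\lambda^{-1}(\widehat U-\lambda))^p$. Near $z_*$ the quantity $|\widehat U-\lambda|$ is small, so the binomial expansion converges. The error constant may then depend on $p$, which the lemma permits; the uniform-in-$p$ version is deferred to Lemma~\ref{lem:uniform-transfer}.
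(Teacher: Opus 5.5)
Your proposal is correct and follows essentially the same route as the paper: pass to $z=x^s$ so that the orbit collapses to the single dominant point $z_*=x_*^s$, rewrite the square-root Puiseux expansion with the factor $s^{-1/2}$, raise it to the $p$-th power, and apply the transfer theorem to obtain $A_p=-\frac{s^{-1/2}}{2\sqrt{\pi}}\,p\,\kappa\,\lambda^{p-1}$. Where you go beyond the paper, it is an improvement: you use the convergent Puiseux series to isolate $c\,\delta_z$, which is a polynomial in $z$ and so irrelevant for large $m$, and then transfer only an $O(\delta_z^{3/2})$ remainder; this is what justifies the stated relative error $O(m^{-1})$, whereas the paper transfers a cruder $O(1-z/z_*)$ remainder, which on its own would only give $O(m^{-1/2})$.
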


\begin{proof}[Proof of Lemma~\ref{lem:coeff-asymp-dominant}]
	Work in the variable $z=x^s$. By Lemma~\ref{lem:s-symmetry} there exists an analytic germ
	$\widehat U(z;\zeta)$ near $z=0$ such that $U(x;\zeta)=\widehat U(z;\zeta)$, and therefore
	\[
	R_p(m;\zeta)=[z^m]\,\widehat U(z;\zeta)^p.
	\]
	Because the dominant singularities of $U(\cdot;\zeta)$ form exactly one $s$--orbit
	$\{e^{2\pi i j/s}x_*(\zeta)\}_{j=0}^{s-1}$, the function $\widehat U(\cdot;\zeta)$ has a unique dominant
	singularity at
	\[
		z_*(\zeta):=x_*(\zeta)^s,
		\qquad |z_*(\zeta)|=\rho_*(\zeta)^s.
	\]
	By Proposition~\ref{prop:xi-star-system}, the Taylor sheet has a local expansion at the representative
	point of the form
	\[
		U(x;\zeta)=\lambda+\kappa\sqrt{1-x/x_*(\zeta)}+O\!\left(1-x/x_*(\zeta)\right),
		\qquad x\to x_*(\zeta),
	\]
	with $\kappa\neq0$. Since $z/z_*(\zeta)=(x/x_*(\zeta))^s$ and
	\[
		1-\frac{z}{z_*(\zeta)}=1-\Bigl(\frac{x}{x_*(\zeta)}\Bigr)^s
		=s\Bigl(1-\frac{x}{x_*(\zeta)}\Bigr)+O\!\left(\Bigl(1-\frac{x}{x_*(\zeta)}\Bigr)^2\right),
	\]
	the same branch may be rewritten in the $z$-variable as
	\[
		\widehat U(z;\zeta)
		=
		\lambda+s^{-1/2}\kappa\,\sqrt{1-z/z_*(\zeta)}+O\!\left(1-z/z_*(\zeta)\right),
		\qquad z\to z_*(\zeta),
	\]
	Hence
	\[
	\widehat U(z;\zeta)^p
		=
	\lambda^p+s^{-1/2}p\kappa\lambda^{p-1}\sqrt{1-z/z_*(\zeta)}+O\!\left(1-z/z_*(\zeta)\right).
	\]
	The function $\widehat U(\cdot;\zeta)^p$ is $\Delta$--analytic at $z_*(\zeta)$ and has no other dominant
	singularity on $|z|=|z_*(\zeta)|$. Therefore the standard transfer theorem
	(\cite[Ch.~VI, Thm.~VI.3]{FlajoletSedgewick2009}) yields
	\[
		[z^m]\,\widehat U(z;\zeta)^p
		=
		-\frac{s^{-1/2}}{2\sqrt{\pi}}\,p\kappa\lambda^{p-1}
		z_*(\zeta)^{-m}m^{-3/2}\Bigl(1+O(m^{-1})\Bigr).
	\]
	Since $z_*(\zeta)=x_*(\zeta)^s$, this is \eqref{eq:Rp-asymp} with
	\[
	A_p(\zeta):=-\frac{s^{-1/2}}{2\sqrt{\pi}}\,p\kappa\lambda^{p-1},
	\]
	up to the conventional choice of the square-root branch. The amplitude bound follows immediately:
	$|A_p(\zeta)|\le C\,p\,|\lambda|^{p-1}$ locally in $\zeta$.
\end{proof}

\subsection{Critical limit of the transfer amplitudes}
\label{app:transfer-critical}

\begin{lemma}[Critical limit of the transfer amplitudes]\label{lem:critical-coeff-asymp}
	Let $\zeta(\varepsilon)$ be a $C^1$ path with $\zeta(0)=\zeta_c\in\mathcal C$ and $\varepsilon\downarrow0$.
	Assume Definition~\ref{def:dominant-orbit} along $\zeta(\varepsilon)$ and that the dominant orbit consists
	of simple branch points on the Taylor sheet. Choose a representative $x_*(\varepsilon)$ and let
	$\lambda(\varepsilon)$ be the corresponding branch value defined in \eqref{eq:lambda-def}.
	Then there exist limits
	\[
		x_*(\varepsilon)\to x_{*,c},\qquad
		\lambda(\varepsilon)\to \lambda_c,\qquad
		\kappa(\varepsilon)\to\kappa_c\neq0,
	\]
	where $\kappa(\varepsilon)$ is the square--root coefficient in the Puiseux expansion of the Taylor sheet
	at $x=x_*(\varepsilon)$. In the variable \(z=x^s\) from
	Definition~\ref{def:uniform-delta}, the coefficient of
	\(\sqrt{1-z/z_*(\varepsilon)}\) is therefore
	\(s^{-1/2}\kappa(\varepsilon)\), up to the fixed choice of square-root branch.
	Moreover, for each fixed $p\ge1$ the amplitudes $A_p(\varepsilon)$ from Lemma~\ref{lem:coeff-asymp-dominant}
	satisfy
	\[
		A_p(\varepsilon)\to A_p^{(c)}\qquad (\varepsilon\downarrow0),
	\]
	with the explicit limit
	\begin{equation}\label{eq:Ap-critical-explicit}
	A_p^{(c)}
	=
	-\frac{s^{-1/2}}{2\sqrt{\pi}}\;p\,\kappa_c\,\lambda_c^{\,p-1},
	\end{equation}
	again up to the fixed choice of square-root branch. Finally, there exist $\varepsilon_0>0$ and
	$C>0$ such that for all $\varepsilon\in(0,\varepsilon_0)$ and all $p\ge1$,
	\[
		|A_p(\varepsilon)|\le C\,p\,|\lambda(\varepsilon)|^{p-1}.
	\]
\end{lemma}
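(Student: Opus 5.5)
The plan is to continue the characteristic data by the implicit function theorem and then read off the Puiseux coefficient and the transfer amplitude from that continuation, so that all limits follow from continuity.

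\emph{Step 1: limits of $x_*$ and $\lambda$.} Since $\zeta_c$ is simple, Lemma~\ref{lem:local-branch-data} gives unique $C^1$ maps $\zeta\mapsto(x_*(\zeta),\lambda(\zeta))$ near $\zeta_c$ solving $F=\partial_yF=0$. The chosen representative $x_*(\varepsilon)$ of the dominant orbit is a simple branch point, so by Proposition~\ref{prop:xi-star-system} the pair $(x_*(\varepsilon),\lambda(\varepsilon))$ satisfies the characteristic system. Fixing the representative continuously along the path (possible by the $s$-symmetry), uniqueness in Lemma~\ref{lem:local-branch-data} identifies it with $(x_*(\zeta(\varepsilon)),\lambda(\zeta(\varepsilon)))$ for small $\varepsilon$. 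Continuity of $\zeta(\varepsilon)$ then gives $x_*(\varepsilon)\to x_{*,c}$ and $\lambda(\varepsilon)\to\lambda_c$.

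\emph{Step 2: the limit of $\kappa$.} Follow the proof of Proposition~\ref{prop:xi-star-system}. There $x(y)=x_*+\tfrac12x''(\lambda)(y-\lambda)^2+\dots$ with
\[
x''(\lambda)=-\frac{F_{yy}}{F_x}=\frac{x_*F_{yy}(\lambda,x_*;\zeta)}{\lambda}.
\]
Inverting gives
\[
\kappa^2=-\frac{2x_*}{x''(\lambda)}=-\frac{2\lambda}{F_{yy}(\lambda,x_*;\zeta)},
\]
which is a continuous, nonvanishing function of $(\lambda,x_*,\zeta)$ near the critical data. Here $\lambda_c\neq0$, and $F_{yy}\neq0$ by simplicity. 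Fixing the square-root branch continuously along the path then yields $\kappa(\varepsilon)\to\kappa_c\neq0$. The $s^{-1/2}$ statement in the $z$-variable is the Taylor expansion of $1-(x/x_*)^s$, as already computed after Corollary~\ref{cor:transfer-critical-data}.

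\emph{Step 3: amplitudes.} By Lemma~\ref{lem:coeff-asymp-dominant},
\[
A_p(\varepsilon)=-\frac{s^{-1/2}}{2\sqrt\pi}\,p\,\kappa(\varepsilon)\lambda(\varepsilon)^{p-1}.
\]
Steps~1--2 give \eqref{eq:Ap-critical-explicit} for each fixed $p$. The uniform bound follows with $C:=\frac{s^{-1/2}}{2\sqrt\pi}\sup_{0<\varepsilon<\varepsilon_0}|\kappa(\varepsilon)|<\infty$, which is finite because $\kappa$ converges.

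\emph{Main obstacle.} The delicate point is Step~1's identification of the Taylor-sheet branch point with the IFT continuation. One has to make sure that the representative and the square-root branch are chosen consistently, so that no jump between orbit points or sign flips occur as $\varepsilon$ varies. I would handle this by choosing $x_*(\varepsilon)$ nearest to $x_{*,c}$: the orbit points are separated by angle $2\pi/s$, so this choice is unambiguous for small $\varepsilon$. The sign of $\kappa$ is then fixed by continuity of $\kappa^2$ away from $0$.
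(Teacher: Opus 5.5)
Your proposal follows the paper's proof essentially step for step. It uses the continuation of $(x_*,\lambda)$ from Lemma~\ref{lem:local-branch-data}, and your formula $\kappa^2=-2\lambda/\partial_y^2F$ is the paper's $\kappa^2=2x_*\,\partial_xF/\partial_y^2F$ after substituting $\partial_xF=-\lambda/x_*$. You then pass to the limit in the explicit amplitude $A_p(\varepsilon)=-\frac{s^{-1/2}}{2\sqrt{\pi}}\,p\,\kappa(\varepsilon)\lambda(\varepsilon)^{p-1}$. Your bound via $\sup_\varepsilon|\kappa(\varepsilon)|$ is a slightly more explicit form of the paper's uniform estimate.

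The one weak point is the identification in Step~1, and it is circular as written. Local uniqueness in the implicit function theorem matches the dominant representative with the continued characteristic pair only once you already know $(x_*(\varepsilon),\lambda(\varepsilon))$ is close to $(x_{*,c},\lambda_c)$, which is exactly the convergence Step~1 is meant to prove. Your nearest-point choice of $x_*(\varepsilon)$ does not supply this proximity, and it says nothing about $\lambda(\varepsilon)$. The fix is to do what the paper does: take $x_*(\varepsilon)$, by convention, to be the continued branch of Lemma~\ref{lem:local-branch-data}, assumed to remain in the dominant orbit.
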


\begin{proof}[Proof of Lemma~\ref{lem:critical-coeff-asymp}]
	Abbreviate $\zeta=\zeta(\varepsilon)$. By
	Lemma~\ref{lem:local-branch-data}, after shrinking the neighborhood if
	necessary there is a unique $C^1$ continuation
	$(\lambda(\varepsilon),x_*(\varepsilon))$ of the chosen representative
	branch-point parameters for $\varepsilon$ small.
	In particular,
	\[
		x_*(\varepsilon)\to x_{*,c}:=x_*(0),\qquad \lambda(\varepsilon)\to\lambda_c:=\lambda(0).
	\]
	Moreover, by Proposition~\ref{prop:xi-star-system} each $x_*(\varepsilon)$ is a simple branch point of the Taylor sheet,
	hence the Taylor branch has a Puiseux expansion at $x=x_*(\varepsilon)$ of the form
	\begin{equation}\label{eq:U-local-eps-proof}
		U(x;\zeta(\varepsilon))
		=
		\lambda(\varepsilon)+\kappa(\varepsilon)\sqrt{1-x/x_*(\varepsilon)}
		+O\!\left(1-x/x_*(\varepsilon)\right),
	\end{equation}
	For a simple branch point one may choose the square-root branch so that
	\[
		\kappa(\varepsilon)^2=
		\frac{2x_*(\varepsilon)\,\partial_xF(\lambda(\varepsilon),x_*(\varepsilon);\zeta(\varepsilon))}
		{\partial_y^2F(\lambda(\varepsilon),x_*(\varepsilon);\zeta(\varepsilon))}.
	\]
	The right-hand side depends continuously on $\varepsilon$ and stays nonzero, so after fixing one branch sign
	consistently we obtain $\kappa(\varepsilon)\to\kappa_c\neq0$.
	By the change of variable $z=x^s$ one has
	\[
		1-\frac{z}{z_*(\varepsilon)}
		=
		1-\Bigl(\frac{x}{x_*(\varepsilon)}\Bigr)^s
		=
		s\Bigl(1-\frac{x}{x_*(\varepsilon)}\Bigr)
		+
		O\!\left(\Bigl(1-\frac{x}{x_*(\varepsilon)}\Bigr)^2\right),
	\]
	so the coefficient of $\sqrt{1-z/z_*(\varepsilon)}$ is
	$s^{-1/2}\kappa(\varepsilon)$, up to the fixed choice of square-root branch.

	Fix $p\ge1$. Lemma~\ref{lem:coeff-asymp-dominant} applied at $\zeta(\varepsilon)$ yields
	\eqref{eq:Rp-asymp} with amplitude $A_p(\varepsilon)$, and the uniform estimate
	$|A_p(\varepsilon)|\le C\,p\,|\lambda(\varepsilon)|^{p-1}$.

	Using the explicit amplitude formula from the proof of Lemma~\ref{lem:coeff-asymp-dominant}, we obtain
	\[
		A_p(\varepsilon)
		=
		-\frac{s^{-1/2}}{2\sqrt{\pi}}\;
		p\,\kappa(\varepsilon)\,\lambda(\varepsilon)^{p-1}.
	\]
	Passing to the limit gives $A_p(\varepsilon)\to A_p^{(c)}$ with
	\[
		A_p^{(c)}
		=
		-\frac{s^{-1/2}}{2\sqrt{\pi}}\;
		p\,\kappa_c\,\lambda_c^{\,p-1},
	\]
	up to the conventional choice of the square--root branch.
\end{proof}

\section{Uniform transfer and operator-theoretic tail extraction}
\label{app:gram-bounds}

This appendix contains the deferred proofs from
\S\ref{sec:universal-instability}. More precisely, it proves the following
results from the main text: Lemma~\ref{lem:cauchy-Rp},
Lemma~\ref{lem:uniform-transfer}, Proposition~\ref{prop:rank-one-extraction},
and Lemma~\ref{lem:compact-remainder}. Throughout, constants \(C,C',\dots\) may
change from line to line but are uniform for \(\varepsilon\) sufficiently
small.

\subsection{Uniform transfer along the critical approach}

\begin{proof}[Proof of Lemma~\ref{lem:uniform-transfer}]
Work in the variable \(z=x^s\). Since \(U(x;\zeta(\varepsilon))\) depends on
\(x\) only through \(x^s\), write
\[
U(x;\zeta(\varepsilon))=\widehat U(z;\varepsilon),
\qquad
R_p(m;\varepsilon)=[z^m]\widehat U(z;\varepsilon)^p .
\]
Let \(z_*(\varepsilon)=x_*(\varepsilon)^s\) be the unique dominant singularity
of \(\widehat U(\cdot;\varepsilon)\). By
Definition~\ref{def:uniform-delta}, in a uniform \(\Delta\)-domain at
\(z_*(\varepsilon)\) one has
\[
\widehat U(z;\varepsilon)
=
\lambda(\varepsilon)
+
s^{-1/2}\kappa(\varepsilon)\Delta^{1/2}
+
c(\varepsilon)\Delta
+
d(\varepsilon)\Delta^{3/2}
+
O(\Delta^2),
\qquad
\Delta:=1-\frac{z}{z_*(\varepsilon)},
\]
with coefficients uniformly bounded for \(0<\varepsilon<\varepsilon_0\), and
with $|\lambda(\varepsilon)|\ge c_0$, $|\kappa(\varepsilon)|\ge c_0$. Raising to the \(p\)-th power gives
\[
\widehat U(z;\varepsilon)^p
=
\lambda(\varepsilon)^p
+
b_p(\varepsilon)\Delta^{1/2}
+
d_p(\varepsilon)\Delta^{3/2}
+
\mathcal R_p(z;\varepsilon),
\]
where \(\mathcal R_p(\cdot;\varepsilon)\) is analytic at
\(z=z_*(\varepsilon)\). Writing $a(\varepsilon):=s^{-1/2}\kappa(\varepsilon)$,
the coefficient of \(\Delta^{1/2}\) is
\[
b_p(\varepsilon)
=
p\,\lambda(\varepsilon)^{p-1}a(\varepsilon)
=
s^{-1/2}p\,\lambda(\varepsilon)^{p-1}\kappa(\varepsilon).
\]
The coefficient of \(\Delta^{3/2}\) comes only from the monomials
\(\lambda^{p-1}d\Delta^{3/2}\), \(\lambda^{p-2}ac\Delta^{3/2}\), and
\(\lambda^{p-3}a^3\Delta^{3/2}\), hence
\[
d_p(\varepsilon)
=
p\,\lambda(\varepsilon)^{p-1}d(\varepsilon)
+
p(p-1)\,\lambda(\varepsilon)^{p-2}a(\varepsilon)c(\varepsilon)
+
\binom{p}{3}\lambda(\varepsilon)^{p-3}a(\varepsilon)^3.
\]
Since \(|\lambda(\varepsilon)|\) is bounded above and below away from zero,
and \(a(\varepsilon),c(\varepsilon),d(\varepsilon)\) are uniformly bounded,
it follows that
\[
|d_p(\varepsilon)|
\le
C\,p^3|\lambda(\varepsilon)|^{p-3}.
\]
Accordingly,
\[
A_p(\varepsilon)
:=
-\frac{b_p(\varepsilon)}{2\sqrt{\pi}}
=
-\frac{s^{-1/2}}{2\sqrt{\pi}}\,p\,\lambda(\varepsilon)^{p-1}\kappa(\varepsilon),
\]
so that
\begin{equation}\label{eq:Ap-growth-app}
|A_p(\varepsilon)|
\le
C\,p\,|\lambda(\varepsilon)|^{p-1}.
\end{equation}
Moreover, since $|\kappa(\varepsilon)|\ge c_0$ and $|\lambda(\varepsilon)|\ge c_0$,
there is also a uniform lower bound
\begin{equation}\label{eq:Ap-growth-lower-app}
|A_p(\varepsilon)|
\ge
c\,p\,|\lambda(\varepsilon)|^{p-1}.
\end{equation}

The transfer theorem for algebraic singularities in a uniform \(\Delta\)-domain
yields
\[
[z^m]\Delta^{1/2}
=
-\frac{1}{2\sqrt{\pi}}\,m^{-3/2}z_*(\varepsilon)^{-m}
\bigl(1+O(m^{-1})\bigr),
\]
and
\[
[z^m]\Delta^{3/2}
=
O\bigl(m^{-5/2}z_*(\varepsilon)^{-m}\bigr),
\]
uniformly in \(\varepsilon\). Hence
\[
[z^m]\widehat U(z;\varepsilon)^p
=
A_p(\varepsilon)m^{-3/2}z_*(\varepsilon)^{-m}
+
O\left(
p^3|\lambda(\varepsilon)|^{p-3}m^{-5/2}z_*(\varepsilon)^{-m}
\right)
+
[z^m]\mathcal R_p(\cdot;\varepsilon).
\]

It remains to estimate the analytic part. By the regular-part clause in
Definition~\ref{def:uniform-delta}\textup{(U2)}, \(\mathcal R_p(\cdot;\varepsilon)\) extends
holomorphically to the full disk
\[
R(\varepsilon):=|z_*(\varepsilon)|+\eta_0/2,
\qquad |z|<R(\varepsilon),
\]
and satisfies
\[
\sup_{|z|=R(\varepsilon)}|\mathcal R_p(z;\varepsilon)|\le B_0^{\,p}.
\]
Therefore Cauchy's formula on the circle \(|z|=R(\varepsilon)\) gives
\[
[z^m]\mathcal R_p(\cdot;\varepsilon)
=
O\bigl(B_0^{\,p}R(\varepsilon)^{-m}\bigr).
\]
Since \(|\lambda(\varepsilon)|\ge c_0\) and
\[
q_0:=\sup_{0<\varepsilon<\varepsilon_0}\frac{|z_*(\varepsilon)|}{R(\varepsilon)}<1,
\]
one has
\[
B_0^{\,p}R(\varepsilon)^{-m}
\le
C\Bigl(\frac{B_0}{c_0}\Bigr)^p q_0^m
|\lambda(\varepsilon)|^{p-1}|z_*(\varepsilon)|^{-m}.
\]
Set \(B:=B_0/c_0\) and \(\gamma:=-\log q_0>0\). Choose
\(M_{\mathrm{tr}}\ge 2\log(B)/\gamma\). Then, for
\(m\ge M_{\mathrm{tr}}(1+p^2)\),
\[
B^p q_0^m
=
\exp\bigl(p\log B-\gamma m\bigr)
\le
\exp\bigl(p\log B-\gamma M_{\mathrm{tr}}(1+p^2)\bigr)
\le
C e^{-\gamma m/2}.
\]
Since \(\sup_{m\ge1} e^{-\gamma m/2}m^{5/2}<\infty\), it follows that $B^p q_0^m\le C\,m^{-5/2}$ for $m\ge M_{\mathrm{tr}}(1+p^2)$. Therefore
\[
B_0^{\,p}R(\varepsilon)^{-m}
\le
C\,|\lambda(\varepsilon)|^{p-1}m^{-5/2}|z_*(\varepsilon)|^{-m},
\qquad
m\ge M_{\mathrm{tr}}(1+p^2),
\]
uniformly in \(p\), \(m\), and \(\varepsilon\). Combining the singular and analytic contributions, and using
\eqref{eq:Ap-growth-app} together with the lower bound \eqref{eq:Ap-growth-lower-app}, we obtain
\[
R_p(m;\varepsilon)
=
A_p(\varepsilon)\,m^{-3/2}z_*(\varepsilon)^{-m}
\left(
1+O\!\left(\frac{1+p^2}{m}\right)
\right)
\]
uniformly for \(m\ge M_{\mathrm{tr}}(1+p^2)\). Indeed, the absolute error term
\(O\!\left(p^3|\lambda(\varepsilon)|^{p-3}m^{-5/2}|z_*(\varepsilon)|^{-m}\right)\)
is of relative size \(O((1+p^2)/m)\) compared with the leading term because
\(|A_p(\varepsilon)|\asymp p|\lambda(\varepsilon)|^{p-1}\) uniformly in
\(\varepsilon\). Since \(z_*(\varepsilon)=x_*(\varepsilon)^s\), this is exactly
\eqref{eq:uniform-transfer}.
\end{proof}

\subsection{Auxiliary midpoint-circle bounds and rank--one extraction}
\label{app:rank-one-proof}

\begin{proof}[Proof of Lemma~\ref{lem:cauchy-Rp}]
By Cauchy's coefficient formula on the circle \(|x|=\rho(\varepsilon)\),
\[
|R_p(m;\varepsilon)|
=
\bigl|[x^{ms}]\,U(x;\zeta(\varepsilon))^p\bigr|
\le
\rho(\varepsilon)^{-ms}
\sup_{|x|=\rho(\varepsilon)}|U(x;\zeta(\varepsilon))|^p .
\]
It therefore suffices to bound
\[
M(\varepsilon):=\sup_{|x|=\rho(\varepsilon)}|U(x;\zeta(\varepsilon))|
\]
uniformly as \(\varepsilon\downarrow0\).

Write again \(z=x^s\) and \(U(x;\zeta(\varepsilon))=\widehat U(z;\varepsilon)\).
By Definition~\ref{def:uniform-delta}\textup{(U1)}, for
\(0<\varepsilon<\varepsilon_0\) the function \(\widehat U(\cdot;\varepsilon)\)
is analytic on the indented disk \(\Omega_\varepsilon\), whose positively
oriented boundary is \(\Gamma_\varepsilon\), and
\[
\overline{D\bigl(0,\rho(\varepsilon)^s\bigr)}\subset \Omega_\varepsilon,
\qquad
\sup_{z\in\Gamma_\varepsilon}|\widehat U(z;\varepsilon)|\le M
\]
with \(M\) independent of \(\varepsilon\). The maximum-modulus principle on
\(\Omega_\varepsilon\) therefore gives
\[
\sup_{|z|\le \rho(\varepsilon)^s}|\widehat U(z;\varepsilon)|
\le
\sup_{z\in\Gamma_\varepsilon}|\widehat U(z;\varepsilon)|
\le M.
\]
Restricting to \(z=x^s\) with \(|x|=\rho(\varepsilon)\) yields
\(M(\varepsilon)\le M\). Thus
\[
|R_p(m;\varepsilon)|\le M_0^p\,\rho(\varepsilon)^{-ms},
\qquad
M_0:=M,
\]
which is the claimed bound.
\end{proof}

Throughout this subsection fix a symmetry block \(q\in\{1,\dots,s\}\) and set
\[
p_j:=q+js,\qquad j\ge0.
\]
Write
\[
z_*(\varepsilon):=x_*(\varepsilon)^s
=
\rho_*(\varepsilon)^s e^{i\phi(\varepsilon)},
\qquad
\theta(\varepsilon):=\rho_*(\varepsilon)^{-s},
\qquad
\eta_*(\varepsilon):=|z_*(\varepsilon)|^{-2}=\rho_*(\varepsilon)^{-2s}.
\]
Define
\begin{equation}\label{eq:spike-def-r1}
\widetilde d^{(q)}(\varepsilon)(j)
:=
e^{-ij\phi(\varepsilon)}
\frac{s}{\sqrt{p_j}}\,
\frac{\overline{A_{p_j}(\varepsilon)}}{w_j},
\qquad j\ge0.
\end{equation}
By \eqref{eq:Ap-growth} and the choice
\(w_j=p_j^{3/2+\beta}\alpha^{p_j}\), there exists \(r\in(0,1)\) such that
\begin{equation}\label{eq:spike-envelope-r1}
|\widetilde d^{(q)}(\varepsilon)(j)|
\le
C\,p_j^{-1-\beta}r^{p_j},
\end{equation}
uniformly for small \(\varepsilon\). In particular,
\begin{equation}\label{eq:spike-moment-r1}
(1+p_j^2)\,\widetilde d^{(q)}(\varepsilon)(j)\in\ell^2(\mathbb N_0)
\end{equation}
uniformly in \(\varepsilon\).
\begin{proof}[Proof of Proposition~\ref{prop:rank-one-extraction}]
By Hermitian symmetry it is enough to treat the case \(j_2\ge j_1\). Set
\[
\Delta:=j_2-j_1,
\qquad
P:=\max\{p_{j_1},p_{j_2}\},
\qquad
M:=\left\lceil M_{\mathrm{tr}}(1+P^2)\right\rceil .
\]
Proposition~\ref{prop:analytic-gram} gives
\begin{equation}\label{eq:Gtilde-entry-r1}
\widetilde G^{(q)}_{j_1j_2}(\varepsilon)
=
\frac{1}{w_{j_1}w_{j_2}}
\sum_{m\ge0}
\frac{(p_{j_2}+sm)^2}{\sqrt{p_{j_1}p_{j_2}}}\,
\overline{R_{p_{j_1}}(m+\Delta;\varepsilon)}\,
R_{p_{j_2}}(m;\varepsilon).
\end{equation}
We decompose the sum into the tail \(m\ge M\) and the finite segment
\(0\le m<M\).

\smallskip
\noindent
\textit{Step 1: the tail \(m\ge M\).}
For \(m\ge M\), Lemma~\ref{lem:uniform-transfer} yields, uniformly in \(p\) and
\(\varepsilon\),
\[
R_p(m;\varepsilon)
=
A_p(\varepsilon)m^{-3/2}z_*(\varepsilon)^{-m}
+
O\!\Bigl(
|A_p(\varepsilon)|(1+p^2)m^{-5/2}|z_*(\varepsilon)|^{-m}
\Bigr).
\]
Substituting this into the tail of \eqref{eq:Gtilde-entry-r1}, the product of
the principal parts is
\[
\widetilde d^{(q)}_{j_1}(\varepsilon)\,
\overline{\widetilde d^{(q)}_{j_2}(\varepsilon)}\,
\theta(\varepsilon)^\Delta
\frac{(p_{j_2}+sm)^2}{s^2\,m^{3/2}(m+\Delta)^{3/2}}\,
\eta_*(\varepsilon)^m .
\]
Indeed,
\[
\overline{z_*(\varepsilon)}^{-\Delta}
=
\rho_*(\varepsilon)^{-s\Delta}e^{i\Delta\phi(\varepsilon)}
=
\theta(\varepsilon)^\Delta e^{i\Delta\phi(\varepsilon)},
\]
and the remaining phase is exactly the one already incorporated into
\(
\widetilde d^{(q)}_{j_1}(\varepsilon)
\overline{\widetilde d^{(q)}_{j_2}(\varepsilon)}
\).

Since \(m\ge M_{\mathrm{tr}}(1+P^2)\) and \(\Delta\le P\), we have
\[
u:=\frac{p_{j_2}}{sm}
=
O\!\left(\frac{1+P^2}{m+1}\right),
\qquad
v:=\frac{\Delta}{m}
=
O\!\left(\frac{1+P^2}{m+1}\right),
\]
uniformly in \(j_1,j_2\). Hence
\[
\frac{(p_{j_2}+sm)^2}{s^2\,m^{3/2}(m+\Delta)^{3/2}}
=
\frac1m(1+u)^2(1+v)^{-3/2}
=
\frac1m\Bigl(1+O\!\left(\frac{1+P^2}{m+1}\right)\Bigr),
\]
and therefore
\[
\frac{(p_{j_2}+sm)^2}{s^2\,m^{3/2}(m+\Delta)^{3/2}}
=
\frac{1}{m+1}
+
O\!\left(\frac{1+P^2}{(m+1)^2}\right).
\]
It follows that
\begin{equation}\label{eq:tail-structure-r1}
(\widetilde G^{(q)}_{j_1j_2}(\varepsilon))_{\mathrm{tail}}
=
\widetilde d^{(q)}_{j_1}(\varepsilon)\,
\overline{\widetilde d^{(q)}_{j_2}(\varepsilon)}\,
\theta(\varepsilon)^\Delta
\sum_{m\ge M}\frac{\eta_*(\varepsilon)^m}{m+1}
+
\mathcal R_{j_1j_2}(\varepsilon),
\end{equation}
where \(\mathcal R_{j_1j_2}(\varepsilon)\) contains both the prefactor reduction
error and all terms in which at least one coefficient is replaced by its
transfer remainder. Each such term gains one additional factor
\((m+\delta)^{-1}\), and therefore is bounded by
\[
C\,(1+p_{j_1}^2)(1+p_{j_2}^2)\,
|\widetilde d^{(q)}_{j_1}(\varepsilon)|\,
|\widetilde d^{(q)}_{j_2}(\varepsilon)|\,
\frac{\eta_*(\varepsilon)^m}{(m+1)^2}.
\]
After summation over \(m\ge M\) and use of \eqref{eq:spike-moment-r1}, we obtain
\begin{equation}\label{eq:tail-remainder-envelope-r1}
|\mathcal R_{j_1j_2}(\varepsilon)|
\le
C\,(1+p_{j_1}^2)(1+p_{j_2}^2)\,
|\widetilde d^{(q)}_{j_1}(\varepsilon)|\,
|\widetilde d^{(q)}_{j_2}(\varepsilon)|.
\end{equation}
Thus, with
\[
a_j(\varepsilon):=C(1+p_j^2)|\widetilde d^{(q)}_j(\varepsilon)|,
\]
the sequence \(a(\varepsilon)\) belongs to \(\ell^2(\mathbb N_0)\) uniformly in
\(\varepsilon\), by \eqref{eq:spike-moment-r1}, and
\(
|\mathcal R_{j_1j_2}(\varepsilon)|
\le a_{j_1}(\varepsilon)a_{j_2}(\varepsilon)
\).
Hence the tail remainder defines a uniformly Hilbert--Schmidt, and therefore
uniformly bounded, operator on \(\ell^2(\mathbb N_0)\).

\smallskip
\noindent
\textit{Step 2: the finite segment \(0\le m<M\).}
By Lemma~\ref{lem:cauchy-Rp},
\[
|R_p(m;\varepsilon)|
\le
M_0^p\rho(\varepsilon)^{-ms},
\qquad
M_0:=\sup_{0<\varepsilon<\varepsilon_0}M(\varepsilon)<\infty.
\]
Since \(\rho(\varepsilon)>1\), this gives
\[
|R_{p_{j_1}}(m+\Delta;\varepsilon)R_{p_{j_2}}(m;\varepsilon)|
\le
M_0^{p_{j_1}+p_{j_2}}.
\]
Moreover \(m<M\asymp 1+P^2\), so there are \(O(1+P^2)\) terms, and
\(p_{j_2}+sm=O(1+P^2)\). Returning to \eqref{eq:Gtilde-entry-r1}, we find
\[
\bigl|(\widetilde G^{(q)}_{j_1j_2}(\varepsilon))_{\mathrm{fin}}\bigr|
\le
C\,(1+P^2)^3\,p_{j_1}^{-2-\beta}p_{j_2}^{-2-\beta}
\Bigl(\frac{M_0}{\alpha}\Bigr)^{p_{j_1}+p_{j_2}}.
\]
Since \(P\le p_{j_1}+p_{j_2}\), the right-hand side is bounded by
\(h(j_1)h(j_2)\) for some \(h\in\ell^2(\mathbb N_0)\), independent of
\(\varepsilon\). Hence the finite segment also defines a uniformly
Hilbert--Schmidt operator.

\smallskip
\noindent
\textit{Step 3: the omitted initial part of the logarithmic sum.}
In \eqref{eq:tail-structure-r1} the logarithmic factor starts at \(m=M\). To
replace it by the full sum \(L_*(\varepsilon)\), it remains to estimate
\[
\widetilde d^{(q)}_{j_1}(\varepsilon)\,
\overline{\widetilde d^{(q)}_{j_2}(\varepsilon)}\,
\theta(\varepsilon)^\Delta
\sum_{m=0}^{M-1}\frac{\eta_*(\varepsilon)^m}{m+1}.
\]
Since \(M=\lceil M_{\mathrm{tr}}(1+P^2)\rceil\),
\[
\sum_{m=0}^{M-1}\frac{\eta_*(\varepsilon)^m}{m+1}
\le
\sum_{m=0}^{M-1}\frac1{m+1}
\le
C\bigl(1+\log(1+P)\bigr),
\]
uniformly in \(\varepsilon\). On the other hand,
\eqref{eq:spike-envelope-r1} gives
\[
|\widetilde d^{(q)}_j(\varepsilon)|
\le
C\,p_j^{-1-\beta}\Bigl(\frac{L_0}{\alpha}\Bigr)^{p_j},
\qquad j\ge0.
\]
Therefore
\[
h(j):=
C\,(1+\log(1+p_j))\,p_j^{-1-\beta}
\Bigl(\frac{L_0}{\alpha}\Bigr)^{p_j}
\]
belongs to \(\ell^2(\mathbb N_0)\), uniformly in \(\varepsilon\), and since
\(P\le p_{j_1}+p_{j_2}\),
\[
\bigl(1+\log(1+P)\bigr)
|\widetilde d^{(q)}_{j_1}(\varepsilon)|
|\widetilde d^{(q)}_{j_2}(\varepsilon)|
\le
h(j_1)h(j_2).
\]
Thus the omitted initial part also defines a uniformly Hilbert--Schmidt
operator.

Combining Steps 1--3, we obtain
\begin{equation}\label{eq:tail-toeplitz-r1}
\widetilde G^{(q)}(\varepsilon)
=
L_*(\varepsilon)\,
D(\varepsilon)T_{\theta(\varepsilon)}D(\varepsilon)^*
+
\widetilde B^{(q)}(\varepsilon),
\qquad
\sup_{\varepsilon}\|\widetilde B^{(q)}(\varepsilon)\|<\infty,
\end{equation}
where \(D(\varepsilon)\) is the diagonal operator with diagonal
\(\widetilde d^{(q)}(\varepsilon)(j)\),
\[
(T_{\theta})_{j_1j_2}:=\theta^{|j_1-j_2|},
\qquad
L_*(\varepsilon):=\sum_{m\ge0}\frac{\eta_*(\varepsilon)^m}{m+1}.
\]

\smallskip
\noindent
\textit{Step 4: removal of the Toeplitz factor.}
The matrix of
\(D(\varepsilon)(T_{\theta(\varepsilon)}-I)D(\varepsilon)^*\) has entries
\[
\widetilde d^{(q)}_{j_1}(\varepsilon)\,
\bigl(\theta(\varepsilon)^{|j_1-j_2|}-1\bigr)\,
\overline{\widetilde d^{(q)}_{j_2}(\varepsilon)},
\]
so
\[
\|D(\varepsilon)(T_{\theta(\varepsilon)}-I)D(\varepsilon)^*\|_{\mathrm{HS}}^2
=
\sum_{j_1,j_2\ge0}
|\widetilde d^{(q)}_{j_1}(\varepsilon)|^2
|\widetilde d^{(q)}_{j_2}(\varepsilon)|^2
\bigl|\theta(\varepsilon)^{|j_1-j_2|}-1\bigr|^2.
\]
Since \(|\theta^n-1|\le n(1-\theta)\) for \(0<\theta<1\),
\[
\|D(\varepsilon)(T_{\theta(\varepsilon)}-I)D(\varepsilon)^*\|_{\mathrm{HS}}^2
\le
(1-\theta(\varepsilon))^2
\sum_{j_1,j_2\ge0}
|\widetilde d^{(q)}_{j_1}(\varepsilon)|^2
|\widetilde d^{(q)}_{j_2}(\varepsilon)|^2
|j_1-j_2|^2.
\]
Because \(p_j=q+js\), one has \(j\asymp p_j\), hence
\[
|j_1-j_2|^2
\le
2(j_1^2+j_2^2)
\le
C\,(p_{j_1}^2+p_{j_2}^2).
\]
The double sum is therefore controlled by \eqref{eq:spike-moment-r1}, and is
uniformly finite. We conclude that
\[
\|D(\varepsilon)(T_{\theta(\varepsilon)}-I)D(\varepsilon)^*\|_{\mathrm{HS}}
\le
C\,(1-\theta(\varepsilon)).
\]
Now \(\theta(\varepsilon)=\rho_*(\varepsilon)^{-s}\uparrow1\), so
\[
1-\theta(\varepsilon)=O(\rho_*(\varepsilon)-1),
\]
whereas
\[
L_*(\varepsilon)
=
\sum_{m\ge0}\frac{\eta_*(\varepsilon)^m}{m+1}
=
\log\frac{1}{\rho_*(\varepsilon)-1}+O(1).
\]
Hence
\[
L_*(\varepsilon)\,
\|D(\varepsilon)(T_{\theta(\varepsilon)}-I)D(\varepsilon)^*\|_{\mathrm{HS}}
\longrightarrow 0.
\]
It follows that
\[
L_*(\varepsilon)\,
D(\varepsilon)T_{\theta(\varepsilon)}D(\varepsilon)^*
=
L_*(\varepsilon)\,
\widetilde d^{(q)}(\varepsilon)\otimes \widetilde d^{(q)}(\varepsilon)
+
o(1)
\]
in Hilbert--Schmidt norm.

\smallskip
\noindent
\textit{Step 5: replacement of \(L_*\) by \(L\), and convergence of the spike.}
Since
\[
\eta(\varepsilon)=\rho(\varepsilon)^{-2s}\eta_*(\varepsilon),
\]
one has \(L_*(\varepsilon)-L(\varepsilon)=O(1)\). This contributes only a
bounded rank-one term and can therefore be absorbed into the remainder. Thus
\[
\widetilde G^{(q)}(\varepsilon)
=
L(\varepsilon)\,
\widetilde d^{(q)}(\varepsilon)\otimes \widetilde d^{(q)}(\varepsilon)
+
\widetilde C^{(q)}(\varepsilon),
\qquad
\sup_{\varepsilon}\|\widetilde C^{(q)}(\varepsilon)\|<\infty.
\]

Finally, Lemma~\ref{lem:critical-coeff-asymp} gives
\(A_{p_j}(\varepsilon)\to A_{p_j}^{(c)}\) for each fixed \(j\), while
\(z_*(\varepsilon)\to z_{*,c}\neq0\). After choosing a continuous local
argument, \(\phi(\varepsilon)\to\phi_c\). The uniform envelope
\eqref{eq:spike-envelope-r1} therefore yields, by dominated convergence,
\[
\widetilde d^{(q)}(\varepsilon)\to \widetilde d_c^{(q)}
\qquad\text{in }\ell^2(\mathbb N_0),
\]
with
\[
\widetilde d_c^{(q)}(j)
=
e^{-ij\phi_c}\frac{s}{\sqrt{p_j}}\,
\frac{\overline{A_{p_j}^{(c)}}}{w_j}.
\]
Since \(A_{p_j}^{(c)}\neq0\) for every \(j\), the limit vector
\(\widetilde d_c^{(q)}\) is nonzero.
\end{proof}

\subsection{Compact remainder and norm convergence}
\label{app:compact-remainder}

\begin{lemma}[Analytic dependence of fixed Taylor coefficients]
\label{lem:fixed-coeff-analytic}
For each fixed pair of integers $p\ge 1$ and $m\ge 0$, the coefficient
$R_p(m;\zeta)=[x^{ms}]U(x;\zeta)^p$ depends analytically on \(\zeta\) in a
neighborhood of any subcritical parameter value for which the Taylor branch of
\eqref{eq:inv-eq} is defined.
\end{lemma}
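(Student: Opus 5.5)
The plan is to obtain analyticity of $R_p(m;\zeta)$ from an explicit representation of the Taylor branch, which is analytic jointly in $(x,\zeta)$ near the origin. Fix a subcritical parameter $\zeta_0$. The function
\[
F(y,x;\zeta)=y-1-\sum_{n}\zeta_n x^{s_n}y^{s_n}
\]
is a polynomial in $(y,x,\zeta)$. At $(y,x,\zeta)=(1,0,\zeta_0)$ we have $F=0$ and $\partial_yF=1$. The holomorphic implicit function theorem in several complex variables therefore gives a neighborhood $\{|x|<r_0\}\times\mathcal N$ of $(0,\zeta_0)$ and a holomorphic function $\mathbf U(x,\zeta)$ on it, with $\mathbf U(0,\zeta)=1$ and $F(\mathbf U,x;\zeta)=0$. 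For each fixed $\zeta\in\mathcal N$, uniqueness of the normalized germ (Proposition~\ref{prop:inv-eq}) identifies $\mathbf U(\cdot,\zeta)$ with the Taylor branch $U(\cdot;\zeta)$ on $|x|<r_0$.

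Next, fix $0<r<r_0$ and shrink $\mathcal N$ to a compact neighborhood, so that $\mathbf U^p$ is continuous and holomorphic on $\{|x|\le r\}\times\mathcal N$. Cauchy's coefficient formula gives
\[
R_p(m;\zeta)=\frac{1}{2\pi i}\oint_{|x|=r}\frac{\mathbf U(x,\zeta)^p}{x^{ms+1}}\,dx .
\]
The integrand is holomorphic in $\zeta$ for each $x$ on the circle and is jointly continuous. Holomorphy in each $\zeta_n$ then follows by differentiation under the integral, or alternatively by Morera together with Fubini. Hartogs' theorem, or the joint continuity already in hand, upgrades separate holomorphy to joint analyticity in $\zeta\in\CC^N$. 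Alternatively, one can avoid integrals entirely: comparing coefficients in \eqref{eq:inv-eq} shows recursively that each $R_1(m;\zeta)$ is a polynomial in $\zeta$, hence so is each $R_p(m;\zeta)$ as a finite convolution. This gives entire dependence, which is even stronger. I would present the recursion as the main argument and the Cauchy integral as the version that also gives local uniform bounds.

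I do not expect a real obstacle. The only point needing care is to check that the jointly holomorphic solution coincides with the fixed-$\zeta$ Taylor branch. This follows from uniqueness in Proposition~\ref{prop:inv-eq}, and the subcriticality hypothesis is not actually needed for fixed $(p,m)$.
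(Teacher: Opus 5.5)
Your proposal is correct. Your first argument is essentially the paper's own proof: apply the holomorphic implicit function theorem at $(1,0,\zeta_0)$ jointly in $(x,\zeta)$, identify the solution with the Taylor branch by uniqueness, then use Cauchy's coefficient formula on a small circle.

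The recursion, which you propose to make the main argument, is a genuinely different and more elementary route. Comparing coefficients of $x^k$ in \eqref{eq:inv-eq} writes the $k$-th Taylor coefficient of $U$ as $\sum_{s_n\le k}\zeta_n[x^{k-s_n}]U^{s_n}$, which involves only lower-order coefficients. By induction every Taylor coefficient of $U$ is therefore a polynomial in $\zeta$ with integer coefficients, and so is every $R_p(m;\zeta)$, being a finite convolution. Lagrange inversion applied to $u=xU$, $u=x(1+\sum_n\zeta_nu^{s_n})$ even gives the closed form $R_p(m;\zeta)=\frac{p}{ms+p}[u^{ms}]\bigl(1+\sum_n\zeta_nu^{s_n}\bigr)^{ms+p}$, which recovers the Raney numbers when $N=1$.

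This route proves more than the lemma states. The dependence is entire rather than local, no several-variable machinery is needed, and it makes your remark that subcriticality plays no role completely transparent. That remark matters for how the lemma is used. In the proof of Lemma~\ref{lem:compact-remainder}, one needs convergence of fixed coefficients as $\zeta(\varepsilon)\to\zeta_c$, and $\zeta_c$ is critical rather than subcritical. The polynomial statement covers this directly. The paper's implicit-function argument also works at $\zeta_c$, since $\partial_yF(1,0;\zeta)=1$ for every $\zeta$, but the lemma as stated does not say so.

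The Cauchy-integral version gives something the polynomial formula does not display by itself: coefficient bounds with explicit geometric dependence on $(p,m)$, of the kind exploited in Lemma~\ref{lem:cauchy-Rp}.
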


\begin{proof}
By Proposition~\ref{prop:inv-eq}, the Taylor branch is the unique analytic germ
solving
\[
F(U,x;\zeta)=U-1-\sum_{n=1}^N \zeta_n x^{s_n}U^{s_n}=0
\]
with $U(0;\zeta)=1$. Since $\partial_U F(1,0;\zeta)=1$, the analytic implicit
function theorem yields a jointly analytic germ \((x,\zeta)\mapsto U(x;\zeta)\)
near \((0,\zeta_0)\) for every fixed subcritical \(\zeta_0\). Therefore
$U(x;\zeta)^p$ is jointly analytic as well, and Cauchy's coefficient formula on
any sufficiently small circle around $x=0$ shows that each fixed coefficient
$R_p(m;\zeta)$ is analytic in \(\zeta\).
\end{proof}

\begin{proof}[Proof of Lemma~\ref{lem:compact-remainder}]
We retain the notation of
Appendix~\ref{app:rank-one-proof}. By the proof of
Proposition~\ref{prop:rank-one-extraction}, both the tail remainder
\(\mathcal R(\varepsilon)\) from \eqref{eq:tail-structure-r1} and the finite
segment of \eqref{eq:Gtilde-entry-r1} admit \(\varepsilon\)-independent
rank-one envelopes \(a(j_1)a(j_2)\) and \(b(j_1)b(j_2)\) with
\(a,b\in\ell^2(\mathbb N_0)\). Consequently, both families are uniformly
Hilbert--Schmidt.

For each fixed \((j_1,j_2)\), the corresponding matrix entries converge as
\(\varepsilon\downarrow0\). For the tail remainder, write the entry as the
infinite \(m\)-sum given by \eqref{eq:tail-structure-r1} after subtraction of the
rank-one model term. For each fixed \(m\), the summand converges by
Lemma~\ref{lem:critical-coeff-asymp}, together with the continuity of the
finitely many fixed Taylor coefficients \(R_p(m;\varepsilon)\) appearing in the
cutoff contribution. Moreover, the proof of
Proposition~\ref{prop:rank-one-extraction} already provides an
\(\varepsilon\)-independent summable envelope for this summand, so dominated
convergence applies termwise in \(m\). For the finite segment, the same follows
from Lemma~\ref{lem:fixed-coeff-analytic}, which gives analytic dependence of each fixed coefficient \(R_p(m;\zeta)\) on the
parameter, together with the \(\varepsilon\)-independent Hilbert--Schmidt
envelope established there. Dominated convergence with these Hilbert--Schmidt
envelopes therefore gives Hilbert--Schmidt convergence of both pieces.

Next, the proof of
Proposition~\ref{prop:rank-one-extraction} shows that
\[
L_*(\varepsilon)\,
D(\varepsilon)T_{\theta(\varepsilon)}D(\varepsilon)^*
=
L_*(\varepsilon)\,
\widetilde d^{(q)}(\varepsilon)\otimes \widetilde d^{(q)}(\varepsilon)
+
o(1)
\]
in Hilbert--Schmidt norm. Combining this with the Hilbert--Schmidt convergence
of the bounded remainder from that proof, we conclude that
\[
\widetilde G^{(q)}(\varepsilon)
-
L_*(\varepsilon)\,
\widetilde d^{(q)}(\varepsilon)\otimes \widetilde d^{(q)}(\varepsilon)
\]
converges in Hilbert--Schmidt norm to a Hilbert--Schmidt limit.

Finally, \(L_*(\varepsilon)-L(\varepsilon)\) has a finite limit
(\(= \log(3/2)\) with the present choice \(\rho(\varepsilon)=(1+\rho_*(\varepsilon))/2\)),
and \(\widetilde d^{(q)}(\varepsilon)\to \widetilde d_c^{(q)}\) in \(\ell^2\).
Hence the rank-one correction
\[
(L_*(\varepsilon)-L(\varepsilon))\,
\widetilde d^{(q)}(\varepsilon)\otimes \widetilde d^{(q)}(\varepsilon)
\]
also converges in Hilbert--Schmidt norm. Therefore
\[
\widetilde C^{(q)}(\varepsilon)
=
\widetilde G^{(q)}(\varepsilon)
-
L(\varepsilon)\,
\widetilde d^{(q)}(\varepsilon)\otimes \widetilde d^{(q)}(\varepsilon)
\]
converges in Hilbert--Schmidt norm, and in particular in operator norm, to a
compact limit. This proves the lemma.
\end{proof}

% \input{declaration}
%=============================================================================
% Bibliography
%=============================================================================

\end{document}